\newif\iflipics
\newif\ifllncs
\let\accentvec\vec  % to remove the warning about vec
\newif\ifhasappendix
\let\vec\accentvec %ditto
\newcommand{\drawvertex}[1]{\draw[fill=white] #1 circle (0.12);}
\tikzstyle{vertex}=[circle,fill=white,draw,very thick, minimum size=7pt,inner sep=1pt]
\newtheorem{Conjecture}[theorem]{Conjecture}
\renewcommand*\l@author[2]{}
\renewcommand*\l@title[2]{}
\spnewtheorem*{Conjecture}{Conjecture}{\itshape\bfseries}{\rmfamily}
\spnewtheorem*{Proof}{Proof}{\itshape}{\rmfamily}
\renewenvironment{proof}{\begin{Proof}}{\qed\end{Proof}}
\theoremstyle{plain}
\newtheorem{theorem}{Theorem}
\newtheorem{lemma}[theorem]{Lemma}
\newtheorem{corollary}[theorem]{Corollary}
\newtheorem{Conjecture}[theorem]{Conjecture}
\theoremstyle{definition}
\newtheorem{definition}{Definition}
\newcites{app}{References for the Appendix}
 \providecommand{\@fourthoffour}[4]{#4}  
 \providecommand{\@motsacces}[6]{#3} 
 \providecommand{\@motsacces}[4]{#4} 
 \def\fixstatement#1{%
   \AtEndEnvironment{#1}{%
 %    \xdef\pat@label{ %\expandafter\expandafter\expandafter
 %%     \@fourthoffour
 %     \expandafter\expandafter\expandafter\csname#1\endcsname\space
 %      #1
 %      \@currentlabel}}}
  
        \xdef\pat@label{\expandafter\expandafter\expandafter
             \@motsacces\csname#1\endcsname\space~\@currentlabel}}}
 \globtoksblk\prooftoks{1000}
 \newcounter{proofcount}
 \long\def\proofatend#1\endproofatend{%
 \ifhasappendix
 %\marginpar{The proof of \pat@label\ can be found in the Appendix.}
 \marginpar{\scriptsize{\vspace{-1cm}{\color{gray}{Proof in Appendix}}}}
   \edef\next{\noexpand \begin{proof}[Proof of \pat@label]
   }%
   \toks\numexpr\prooftoks+\value{proofcount}\relax=\expandafter{\next#1
   \end{proof}
   }
   \stepcounter{proofcount}
 \else
 \begin{proof}
 #1
 \end{proof}
 \fi
 }
 \long\def\differingappendixstatement#1#2{
 \ifhasappendix
 \edef\next{}%
   \toks\numexpr\prooftoks+\value{proofcount}\relax=\expandafter{\next#2}
   \stepcounter{proofcount}
  \else
  #1
 \fi
 }
\long\def\hereorinappendixstatement#1{%
\ifhasappendix%
\edef\next{}%
\toks\numexpr\prooftoks+\value{proofcount}\relax=\expandafter{\next#1}%
\stepcounter{proofcount}%
\else
#1
\fi
}
 \def\printproofs{%
   \count@=\z@
   \loop
     \the\toks\numexpr\prooftoks+\count@\relax
     \ifnum\count@<\value{proofcount}%
     \advance\count@\@ne
   \repeat}
\newcommand{\classC}{\mathcal{C}}
\newcommand{\dunion}{\mathbin{\dot{\cup}}}
\definecolor{darkred}{rgb}{0.5,0,0}
\definecolor{darkblue}{rgb}{0,0,0.5}
\definecolor{darkgreen}{rgb}{0,0.5,0}
\newcommand{\pascalsout}[1]{{\textcolor{blue}{\sout{[#1]}}}}
\newcommand{\ENSUREGAP}{\vspace{0.2cm}}
\definecolor{gray}{gray}{0.3}
\DeclareMathOperator{\ingredient}{Inv}
\DeclareMathOperator{\ingredientprime}{Inv_p}
\DeclareMathOperator{\ModDecFunc}{Mod}
\DeclareMathOperator{\id}{id}
\DeclareMathOperator{\Aut}{Aut}
\newcommand{\forbind}[1]{(#1)\mbox{-\textup{\textsf{free}}}}
\newcommand{\forbindplain}[1]{(#1)\mbox{-{free}}}
\newcommand{\forbindONE}[1]{#1\mbox{-\textup{\textsf{free}}}}
\newcommand{\forbindONEplain}[1]{#1\mbox{-{free}}}
\newcommand{\HabcGraph}[3]{{H(#1,#2,#3)}}
\newcommand{\HabcGraphFOUR}[4]{{H(#1,#2,#3,#4)}}
\title{Towards an Isomorphism Dichotomy for Hereditary Graph Classes}
\author{Pascal Schweitzer, RWTH Aachen University}
\institute{RWTH Aachen University,
  {\tt schweitzer@informatik.rwth-aachen.de} 
}
\author{Pascal Schweitzer \\ RWTH Aachen University\\
{\tt schweitzer@informatik.rwth-aachen.de} 
}
\begin{document}

\maketitle

\begin{abstract}

In this paper we resolve the complexity of the isomorphism problem on
all but finitely many of the graph classes characterized by two forbidden induced subgraphs. To this end we develop
new techniques applicable for the structural and algorithmic analysis of graphs. 
First, we develop a methodology to show isomorphism completeness of the isomorphism problem on graph classes by providing a general framework unifying various reduction techniques. Second, we generalize the concept of the modular decomposition to colored graphs, allowing for non-standard decompositions. We show that, given a suitable decomposition functor, the graph isomorphism problem reduces to checking isomorphism of colored prime graphs. Third, we extend the techniques of bounded color valence and hypergraph isomorphism on hypergraphs of bounded color size as follows. We say a colored graph has generalized color valence at most~$k$ if, after removing all vertices in color classes of size at most~$k$, for each color class~$C$ every vertex has at most~$k$ neighbors in~$C$ or at most~$k$ non-neighbors in~$C$.
We show that isomorphism of graphs of bounded generalized color valence can be solved in polynomial time. 
\end{abstract}

%keywords: graph isomorphism, modular decomposition, bounded color valence, reductions, forbidden induced subgraphs

%
%
%
%
%

\section{Introduction}

Given two graphs~$G_1$ and~$G_2$, the graph isomorphism problem asks whether there exists a bijection from the vertices of~$G_1$ to the vertices of~$G_2$ that preserves adjacency and non-adjacency.
In this paper we continue the systematic investigation of the complexity of graph isomorphism on hereditary graph classes with a focus on classes characterized by finitely many forbidden induced subgraphs as initiated in~\cite{DBLP:conf/wg/KratschS12}\ifhasappendix{}\else\  (see also~\cite{DBLP:journals/corr/abs-1208-0142})\fi. 

Given a set of finite graphs~$H_1,\ldots,H_t$ we define~$\forbind{H_1,\ldots,H_t}$ to be the class of all graphs that do not contain any~$H_i$ as an induced subgraph.  In the light of the unknown complexity status of the graph isomorphism problem, the goal in this context is typically to classify the complexity for the various graph classes into being polynomial time solvable or isomorphism complete (i.e., polynomially equivalent to graph isomorphism). Recently,  in~\cite{OtachiS13} it is shown that, assuming that graph isomorphism is not polynomial time solvable in general, there exist graph classes closed under taking (not necessarily induced) subgraphs which are of intermediate complexity. Trivially this implies the same conditional existence of hereditary graph classes (i.e., graph classes characterized by forbidden induced subgraphs) of intermediate complexity. However, the construction in~\cite{OtachiS13} intrinsically requires the use of infinitely many forbidden subgraphs. In contrast to this, in~\cite{OtachiS13} it is also shown
that there are no intermediate graph classes characterized by finitely many forbidden subgraphs. However, this statement does not carry over to forbidden induced subgraphs and the question of the existence of intermediate graph classes characterized by finitely many forbidden induced subgraphs remains open. A more precise statement of the dichotomy result in~\cite{OtachiS13} is that a graph class characterized by finitely many forbidden subgraphs has a polynomial time solvable graph isomorphism problem if one of the forbidden graphs is a union of subdivided 
stars, and is graph isomorphism complete otherwise. These graphs, the forests of subdivided stars, also play a central role in the complexity of hereditary graph classes. 

With respect to classes defined by forbidden induced subgraphs, there is a dichotomy for the isomorphism problem on~$\forbindplain{H_1}$ graphs into polynomially solvable and isomorphism complete cases. In~\cite{DBLP:conf/wg/KratschS12} the complexity of the isomorphism problem on~$\forbind{H_1,H_2}$ graphs is determined for various pairs~$(H_1,H_2)$ and the results also follow the ``polynomially solvable versus isomorphism complete'' dichotomy. The crucial cases that were not resolved are those where either~$H_1$ or~$H_2$ is a complete graph. More specifically, except for finitely many cases, the unresolved cases were shown to be polynomially equivalent to a class where one of the graphs is a complete graph.

In the light of all of these results we conjecture the following:

\begin{Conjecture}
If~$\mathcal{C} = \forbind{H_1,\ldots,H_t}$ is a graph class defined by the finite  set of forbidden induced subgraphs~$H_1,\ldots,H_t$ then graph isomorphism of graphs in~$\mathcal{C}$ is polynomial time solvable or isomorphism complete.
\end{Conjecture}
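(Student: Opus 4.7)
The plan is to combine the three techniques developed in this paper with the partial results already known for pairs of forbidden induced subgraphs. Starting from the observation in the excerpt that all but finitely many unresolved two-subgraph classes are polynomially equivalent to some $\forbind{K_n,H}$, my first step would be to argue that for any $t \ge 3$ the class $\forbind{H_1,\ldots,H_t}$ either reduces to isomorphism on a two-subgraph class that has already been dichotomized, or else admits a pair $(H_i,H_j)$ whose configuration is rigid enough to force isomorphism completeness via the unified reduction framework of this paper's first contribution. This concentrates the remaining work on the families $\forbind{K_n, H}$ for a single complete graph $K_n$ and a single graph $H$.

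Within $\forbind{K_n, H}$, I would split the analysis along the line suggested by the classification of~\cite{OtachiS13}: if $H$ is (up to trivial additions) a forest of subdivided stars, aim for a polynomial-time algorithm; otherwise, aim for isomorphism completeness. For the hard side, I would instantiate the general reduction framework to design gadgets that encode an arbitrary instance of graph isomorphism while simultaneously avoiding induced copies of $K_n$ and $H$. The presence of an induced cycle, of an induced configuration with two branching vertices on a long path, or of any other forbidden pattern outside the star-forest family in $H$ should supply enough local flexibility to realize such gadgets.

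For the polynomial-time side, the plan is to first apply the generalized modular decomposition of the paper's second contribution, with a decomposition functor tailored to the hereditary closure of $\forbind{K_n,H}$. This reduces isomorphism to isomorphism of colored prime members of the class. The key structural claim to establish is then that every prime colored graph in $\forbind{K_n,H}$, with $H$ a forest of subdivided stars, has generalized color valence bounded by a function of $n$ and $|V(H)|$. Once this claim is in place, the third contribution of the paper provides the polynomial-time algorithm.

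The decisive obstacle I expect is precisely this structural claim. One must show that in a prime graph avoiding $K_n$, a vertex with both many neighbors and many non-neighbors inside a single non-small color class forces an induced copy of any prescribed forest of subdivided stars. This calls for a Ramsey-like iterative extraction in which each step preserves both primeness and the absence of $K_n$, while still producing the required induced branches and subdivisions of $H$. Handling the parameters of $H$ uniformly, and matching them against the primeness constraints, is where I expect the heavy combinatorial lifting to reside; a secondary obstacle is that a finite list of exceptional small graphs $H$ may well resist the generic reduction and need to be dealt with by hand, as was already the case in~\cite{DBLP:conf/wg/KratschS12}.
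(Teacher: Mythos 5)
The statement you are trying to prove is labelled a \emph{Conjecture} in the paper, and the paper does not prove it. What the paper actually establishes is Theorem~\ref{thm:main:theorem}, which is the strictly weaker assertion that the dichotomy holds for all but finitely many classes of the form~$\forbind{H_1,H_2}$ --- two forbidden graphs only, and with finitely many exceptional pairs explicitly left open. So there is no proof in the paper to compare your attempt against; a complete argument would be a new result, and the paper itself signals that it is out of reach with the present techniques.

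Your plan also has a genuine logical gap in its very first step. You propose to handle $t\ge 3$ by arguing that $\forbind{H_1,\ldots,H_t}$ ``either reduces to isomorphism on a two-subgraph class that has already been dichotomized, or else admits a pair $(H_i,H_j)$ whose configuration is rigid enough to force isomorphism completeness.'' The implications do not run that way. For any pair of indices, $\forbind{H_1,\ldots,H_t}$ is a \emph{subclass} of $\forbind{H_i,H_j}$; hence if some pair gives a polynomial-time class, the smaller class inherits the algorithm --- the easy direction transfers. But if every pair $\forbind{H_i,H_j}$ is isomorphism-complete, this says nothing about the smaller class $\forbind{H_1,\ldots,H_t}$, which could be polynomial, complete, or intermediate; the hardness reductions, including the ones in this paper, are highly sensitive to which induced subgraphs must be avoided, and a gadget that evades two forbidden graphs need not evade a third. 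Consequently your reduction from general $t$ to $t=2$ is not a reduction, and that entire branch of the argument is unsupported. A secondary problem is that even for $t=2$ the paper leaves finitely many pairs genuinely unresolved, and your proposed uniform structural claim --- that every prime colored graph in $\forbind{K_n,H}$ with $H$ a forest of subdivided stars has generalized color valence bounded in $n$ and $|V(H)|$ --- is not what the paper proves. The polynomial-time results it does obtain (e.g.\ Theorem~\ref{cor:P5:Kn:iso:poly:time} and Theorem~\ref{thm:K3:H10b1:poly:time}) require case-specific bootstrapping with dominating sets, bounded individualization, and tailored decomposition functors before the generalized-color-valence machinery applies, rather than a single Ramsey-style extraction on prime graphs.
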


In this paper we continue the investigation of the complexity of the isomorphism problem on the graph classes focusing on the case of two forbidden subgraphs, where one of the graphs is complete. As mentioned above these are the crucial cases that were not resolved. For the resolution of those classes, techniques beyond those that were developed in~\cite{DBLP:conf/wg/KratschS12} are required. These techniques are presented in this paper leading to the following theorem:

\begin{theorem}\label{thm:main:theorem}
 On all but finitely many classes of the form~$\forbind{H_1,H_2}$ the graph isomorphism problem is polynomial time solvable or isomorphism-complete.
\end{theorem}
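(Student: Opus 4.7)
The plan is to combine the three ingredients that are outlined in the abstract. First I would apply the reductions from~\cite{DBLP:conf/wg/KratschS12}, which show that, up to finitely many exceptional classes, any pair $(H_1,H_2)$ is polynomially equivalent to a pair in which at least one of the two forbidden graphs is a complete graph $K_n$. From now on one may therefore fix an arbitrary pair of the form $(H,K_n)$ and analyse the class $\forbind{H,K_n}$.

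Next I would split the analysis along a structural dichotomy for~$H$. The heuristic coming from~\cite{OtachiS13} and~\cite{DBLP:conf/wg/KratschS12} is that, if~$H$ is ``sparse'', for instance a forest of subdivided stars, the class should admit a polynomial-time isomorphism test, while otherwise the class should be isomorphism-complete. For the polynomial-time side I would apply the generalized colored modular decomposition developed in the paper with a decomposition functor whose prime components remain in the class (with inherited colors). By the reduction to prime graphs advertised in the introduction, it then suffices to test isomorphism of such colored prime graphs. The key structural claim would be that the prime colored graphs in $\forbind{H,K_n}$ have bounded generalized color valence: in each sufficiently large color class, every vertex has only few neighbors or few non-neighbors. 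Once this is established, the polynomial-time algorithm for bounded generalized color valence finishes the easy side.

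For the isomorphism-complete side I would invoke the unified reduction framework as a black box. Given an~$H$ that is not sparse in the above sense, one constructs a gadget that is both $H$-free and $K_n$-free and that, when used to encode edges of an arbitrary input graph~$G$, produces a member of $\forbind{H,K_n}$ whose automorphism group faithfully reflects that of~$G$. The framework isolates the local conditions (rigidity, color preservation, and so on) that such a gadget must satisfy, so the remaining work is a finite case analysis over the possible shapes of non-sparse~$H$, together with a finite list of pairs that are left unresolved and form the exceptions in the statement.

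The main obstacle I expect is the structural lemma underlying the polynomial-time side, namely that in any prime colored graph in $\forbind{H,K_n}$ the generalized color valence is bounded. Proving this requires combining the absence of~$K_n$ (which limits clique-like configurations inside color classes) with the absence of the sparse~$H$ (which limits independent-like configurations embeddable into non-neighborhoods), and using primality to rule out large homogeneously behaving subsets of color classes; a Ramsey-type argument seems unavoidable here. A secondary difficulty is pinning down the precise structural condition on~$H$ that places a class on the easy side and verifying that the finitely many pairs escaping the dichotomy are genuine exceptions rather than artifacts of the case analysis.
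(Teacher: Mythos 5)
Your opening step matches the paper: by the results in~\cite{DBLP:conf/wg/KratschS12} one reduces to pairs $(H,K_n)$, and you correctly identify the three toolkits (encoding-based reductions, colored modular decomposition, bounded generalized color valence) that the paper builds.

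However, the heart of your plan---splitting on whether $H$ is ``sparse, for instance a forest of subdivided stars''---is the wrong dividing line, and it would cause the argument to collapse. After the reduction to $(H,K_n)$, the paper (via~\cite[Lemma 2]{DBLP:conf/wg/KratschS12}) already lets you assume that $H$ \emph{is} a forest of subdivided stars; if it were not, the class is isomorphism-complete and there is nothing to do. The entire difficulty of Theorem~\ref{thm:main:theorem}, and the entire content of Section~\ref{sec:comprehensive}, lies in classifying the cases \emph{within} this family, where the answer depends delicately on the shape of $H$ and on $n$: $\forbind{P_5,K_t}$ is polynomial-time (Theorem~\ref{cor:P5:Kn:iso:poly:time}) but $\forbind{P_6,K_4}$ is isomorphism-complete (Theorem~\ref{thm:K4:and:various:free:iso:compl}); $\forbind{K_{1,s}\dunion K_{1,s},K_t}$ is polynomial-time (Corollary~\ref{cor:poly:double:star:Kt}) but $\forbind{2K_2\dunion K_1,K_4}$ is isomorphism-complete; $\forbind{\HabcGraph{1}{b}{0},K_s}$ is polynomial-time (Theorem~\ref{thm:poly:for:H1b0:Ks}) but $\forbind{\HabcGraphFOUR{1}{0}{2}{0},K_5}$ is isomorphism-complete (Theorem~\ref{thm:H1020:K5:free:iso:compl}). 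A single coarse sparsity criterion cannot reproduce this landscape.

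Relatedly, your ``key structural claim'' that prime colored graphs in $\forbind{H,K_n}$ have bounded generalized color valence cannot hold as stated: it manifestly fails for the isomorphism-complete classes, and even for the tractable cases the paper does not prove one uniform lemma of this form. Instead it proves several bespoke structural results---e.g.\ that bipartite double-star-free graphs refine to bounded color valence (Theorem~\ref{thm:K1s:K1s:bipartite:refinement}), that $\forbindplain{P_5,K_t}$ graphs admit bounded-size dominating sets that force modular structure via the Bacs\'o--Tuza theorem, that $\forbindplain{\HabcGraph{1}{b}{0},K_s}$ graphs with a double star must have a non-trivial classical module (Lemma~\ref{lem:cannnot:connect:double:claws})---and then stitches them together in a case analysis that tracks precisely which forests of subdivided stars land on which side. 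Your plan would need to be rebuilt around that finer classification rather than the all-or-nothing sparse/non-sparse split, and the Ramsey-type argument you anticipate would have to be replaced by these graph-specific arguments.
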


\emph{Contribution.}
In order to prove the theorem, in this paper three new techniques for the structural and algorithmic analysis of graphs are developed.

Firstly, we develop a methodology to show isomorphism completeness of the isomorphism problem on graph classes by providing a unifying  framework for various reductions typically used for that purpose. The advantage of this framework is that it allows a streamlined abstract way to argue why some class is isomorphism complete which boils down to a algorithmically checkable argument.

Secondly, we generalize the modular decomposition to colored graphs and define the concept of a colored modular decomposition with respect to a decomposition functor. This not only allows us to show that the graph isomorphism problem reduces to colored isomorphism of prime graphs but it also allows us to decompose graphs that are prime with respect to the classical modular decomposition. To show this reduction we also describe how to remodel an algorithm that has access to an oracle producing a complete invariant for a graph class into an algorithm that only has access to an isomorphism test of the graph class.

Thirdly, we extend the techniques of bounded color valence and hypergraph isomorphism on hypergraphs of bounded color class size as follows. We say a colored graph has generalized color valence at most~$k$ if, after removing all vertices in color classes of size at most~$k$, for each color class~$C$ every vertex has at most~$k$ neighbors in~$C$ or at most~$k$ non-neighbors in~$C$.
We show that isomorphism of graphs of bounded generalized color valence can be solved in polynomial time. This generalization allows us to perform isomorphism tests for graphs whose automorphism group cannot be forced into having bounded size composition factors even when using finitely many individualization steps. Since for such graphs alternating groups of unbounded size can appear among the composition factors, it seems that the standard group theoretic techniques cannot be directly applied. This shows that indeed new techniques were required to solve the particular cases.

We apply the three mentioned techniques to resolve the complexity of isomorphism on all but finitely many of the graph classes characterized by two forbidden induced subgraphs. For the resolved classes we either provide  a reduction from the general problem or a polynomial time algorithm. The applications of the techniques include for example showing that bipartite graphs that are free from a fixed forbidden double star refine into graphs of bounded generalized color valence. This can be used to show that isomorphism of graphs of bounded clique number with a fixed forbidden double star can be solved in polynomial time. We in turn apply this in conjunction with the colored modular decomposition technique to solve isomorphism for graphs of bounded clique number which do not contain~$P_5$ (a path of length~4) as an induced subgraph. We apply  the general reductions to show that classes of graphs without cliques of size~4 and certain unions of paths are extensive enough to have an isomorphism problem that is isomorphism complete. Furthermore we apply the modular decomposition techniques in conjunction with the bounded generalized color valence to analyze the structure of various graph classes of bounded clique number with certain forbidden forests of 
 subdivided stars.

\emph{Related work.} We refer the reader to~\cite{BabaiHandbook,KoeblerSchoeningToran:1993,SchweitzerThesis} for an introduction to the diverse complexity-theoretic results related to the isomorphism problem. There are numerous results known on the complexity of graph isomorphism of hereditary graph classes. A collection showing the problem for many classes  to be equivalent to the general problem is given by Booth and Colbourn~\cite{boothcolbourn}. In that paper the complexity of classes characterized by one forbidden subgraph~$H$ is shown to depend on whether the graph~$H$ is an induced subgraph of~$P_4$ (the path on~4 vertices).
The systematic study of classes characterized by two forbidden subgraphs was initiated in~\cite{DBLP:conf/wg/KratschS12}.

With regard to algorithms, a recent very general result, implying many results devised earlier on more special graph classes, is a theorem of Grohe and Marx~\cite{DBLP:conf/stoc/GroheM12} that shows that isomorphism of graph classes defined by a forbidden topological minor can be solved in polynomial time.

There are several applications of modules (sometimes called homogeneous sets) or some form of modular decomposition in the context of graph isomorphism. For example Goldberg's plain exponential algorithm~\cite{MR712923} uses the concept of sections, which can be seen as colored modules. Furthermore Junttila and Kaski~\cite{JunKaskiComponent} define nonuniform components within the individualization-refinement approach that also constitute colored modules. As described below, Rao~\cite{rao} also exploits the classical modular decomposition to devise an isomorphism algorithm for gem and co-gem free graphs (i.e., $\forbindplain{P_4\dunion K_1,\overline{P_4\dunion K_1}}$ graphs). His technique can be seen as a special case of the techniques using colored modular decompositions described in this paper. Other hereditary graph classes, for which graph isomorphism algorithms make use of modular decomposition, are for example subclasses of circular-arc graphs (see~\cite{DMTCS2298}).

In the early stages of the discovery of the group theoretic technique for the isomorphism problem, Luks~\cite{Luks:1982} applied it 
to show that isomorphism of graphs of bounded maximum degree can be solved in polynomial time. Babai~\cite{BabaiModerate} applied the notion of bounded color valence in his algorithm for the general isomorphism problem. Miller~(see \cite{Miller2}) applied this technique in a series of papers to perform isomorphism tests of~$k$-separable and~$k$-contactable graphs as well as isomorphism of hypergraphs of bounded color class size (see also~\cite{DBLP:conf/fsttcs/ArvindDKT10},~\cite{BabaiLuks} and~\cite{seress2003permutation}). In our generalization in this paper, only the subgraph induced by the color classes that are not of bounded size is required to exhibit bounded color valence.

Concerning dichotomies for problems on graph classes characterized by forbidden induced subgraphs there have been several studies aiming at dichotomy results for computational problem. For example for two forbidden subgraphs this has been done for the computation of the chromatic number~\cite{woegi}, dominating sets~\cite{Lozin2010} and coloring \cite{Dabrowski201434}, and list coloring \cite{DBLP:journals/dam/GolovachP14}.

Moreover there are numerous results analyzing whether the clique width of a graph class characterized by forbidden  subgraphs is bounded (see~\cite{DBLP:journals/corr/DabrowskiP14a} for an extensive list of references).

\emph{Structure of the paper.} We mainly apply the three techniques developed in this paper to classes characterized by two forbidden induced subgraphs. However, the intention behind their presentation is to allow them to be applicable in a broader sense to practical and theoretical algorithms for isomorphism of general graphs or maybe for the classes of bounded clique-width. The paper first presents the techniques and the second part, mostly contained in the appendix, explains how to apply them to various classes of two forbidden subgraphs. %This second part however is mostly contained in the appendix.

In the first part of the paper we provide preliminaries such as introducing notation and recalling basic tools (Section~\ref{sec:prelims}). We then devise a methodology to prove isomorphism completeness results (Section~\ref{sec:reductions}) and briefly describe how to simulate a complete invariant given only an isomorphism algorithm (Section~\ref{sec:invars:col:val}). After this we turn to techniques for isomorphism testing using modular decompositions (Section~\ref{sec:modular:decomp}) and devise a polynomial time algorithm for graphs of bounded generalized color valence (Section~\ref{sec:gen:col:val}).

The reduction techniques are applied to show the isomorphism completeness of various graph classes characterized by forbidden induced subgraphs (Section~\ref{sec:reductions:applied}).  
The algorithmic techniques are then applied to
graph classes with forbidden double stars (Section~\ref{sec:col:ref:and:double:stars}) and graph classes of graphs without induced paths of length~$4$ (Section \ref{sec:p5:bd:clique}). The techniques are also applied to analyze specific triangle-free graphs (Section~\ref{sec:spec:triangle}) and specific graphs of bounded clique number (Section~\ref{sec:spec:bd:clique}). We conclude by showing that together with the theorems in~\cite{DBLP:conf/wg/KratschS12} this resolves the complexity of all but finitely many graph classes defined by two forbidden induced subgraphs (Section~\ref{sec:comprehensive}). \ifhasappendix Due to space constraints, throughout the document proofs have been moved to the appendix. \fi
\ifhasappendix\marginpar{\scriptsize{\vspace{-1cm}{\color{gray}{There is also a table of contents on the last page of the appendix.}}}}\fi

\section{Preliminaries}\label{sec:prelims}

In this paper all graphs are finite, simple, undirected graphs. For a graph~$G$, by~$V(G)$ and~$E(G)$ we denote the vertex set and the edge set, respectively. By~$N_G(S)= N(S)$ we denote the neighborhood of a set~$S$, i.e., the vertices in~$V(G)\setminus S$ adjacent to some vertex in~$S$. By~$\id$ we always denote the identity map. The bipartite complement of a bipartite graph~$G$ with bipartition classes~$A$ and~$B$ is obtained by replacing~$E(G)$ with~$\{\{a,b\}\mid a\in A, b\in B\} \setminus E(G)$.

We write~$H\leq G$ if the graph~$G$ contains a graph~$H$ as an induced subgraph.
A graph~$G$ is~$\forbindONEplain{H}$ if~$H\nleq G$.
It is~\emph{$\forbindplain{H_1,\ldots,H_k}$}, if it is~$\forbindONEplain{H_i}$ for all~$i$. A graph class~$\classC$ is~\emph{$\forbindONEplain{H}$} (respectively~\emph{$\forbind{H_1,\ldots,H_k}$}) if this is true for all~$G\in\classC$. A graph class~$\classC$ is \emph{hereditary} if it is closed under taking induced subgraphs. The class~$\forbind{H_1,\ldots,H_k}$ is the class of all~$\forbindplain{H_1,\ldots,H_k}$ graphs. Note that each class~$\forbind{H_1,\ldots,H_k}$ is hereditary.
We say a graph~$G$ \emph{contains} a graph~$H$ (as an induced subgraph) if an induced subgraph of~$G$ is isomorphic to~$H$.

By~$I_t$,~$K_t$,~$P_t$, and~$C_t$ we denote the independent set, the clique, the path, and the cycle on~$t$ vertices, respectively. The \emph{clique number} of a graph~$G$ is the largest integer~$t$ such that~$G$ contains~$K_t$. By~$H\dunion H'$ we denote the disjoint union of~$H$ and~$H'$; we use~$tH$ for the disjoint union of~$t$ copies of the graph~$H$. By~$\overline{G}$ we denote the (edge) complement of~$G$. The graph~$\overline{K_2\dunion I_2}$, i.e., the graph obtained from~$K_4$ by deleting an edge, is called the \emph{diamond}.

\hereorinappendixstatement{
\begin{figure}[t]
\centering
\subfloat[width=10in][The graph~$H(1,b,0)$ that plays a central role in Section~\ref{sec:spec:bd:clique}.]
{\makebox[.32\textwidth]{\begin{tikzpicture}[scale=0.7,thick]
\draw (0,-1) -- (0,0) -- (1.5,1) -- (1,0);
\draw (1.5,1) -- (3,0);
\drawvertex{(0,-1)}
\drawvertex{(0,0)}
\drawvertex{(1,0)}
\drawvertex{(3,0)}
%\drawvertex{(4,0)}
%\drawvertex{(6,0)}
\drawvertex{(1.5,1)}
\draw (0,-2.5) node {\mbox{}};
\draw (2,0) node {$\ldots$};
%\draw (5,0) node {$\ldots$};
%\draw (0,-1.5) node {$a$};
\draw (2,-0.7) node {$\underbrace{\hskip1.6cm}_{\,}$};
\draw (2,-1.2) node {\smash{$b$}};
%\draw (5,-0.7) node {$\underbrace{\hskip1.6cm}_{\,}$};
%\draw (5,-1.2) node {\smash{$c$}};
\end{tikzpicture}}

%\caption{An encoding.}
\label{fig:subvidided:star:H1b0}
}
\hspace{1cm}
%\centering
\subfloat[][The graph~$H(1,0,b,1)$ that plays a central role in Section~\ref{sec:spec:triangle}.]{\makebox[.32\textwidth]{
\begin{tikzpicture}[scale=0.7,thick]
\draw (0,-1) -- (0,0) -- (1.5,1) -- (1,0);
\draw (1.5,1) -- (3,0);
\draw (0,-1) -- (0,-2);
\drawvertex{(0,-2)}
\drawvertex{(0,-1)}
\drawvertex{(0,0)}
\drawvertex{(1,0)}
\drawvertex{(3,0)}
\drawvertex{(4,0)}
%\drawvertex{(6,0)}
\drawvertex{(1.5,1)}
\draw (2,0) node {$\ldots$};
%\draw (5,0) node {$\ldots$};
\draw (0,-2.5) node {\mbox{}};
%\draw (0,-2.5) node {$a$};
\draw (2,-0.7) node {$\underbrace{\hskip1.6cm}_{\,}$};
\draw (2,-1.2) node {\smash{$b$}};
%\draw (5,-0.7) node {$\underbrace{\hskip1.6cm}_{\,}$};
%\draw (5,-1.2) node {\smash{$c$}};
\end{tikzpicture}}

\label{fig:subvidided:star:H10b1}
}
\caption{Examples of subdivided stars with possibly added isolated vertices.}\label{fig:taxonomy:of:H:graphs}
\end{figure}
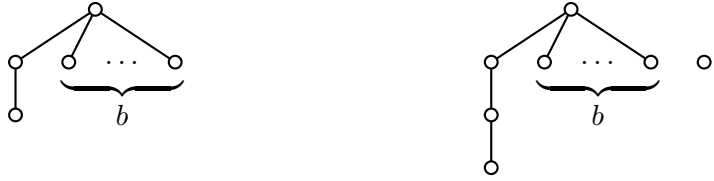
}
A \emph{star} is a graph isomorphic to the complete bipartite graph~$K_{1, t}$ for some positive integer~$t$.
A \emph{subdivided star} is a possibly repeated subdivision of a star. %Note that any subdivision of a subdivided star is also a subdivided star.
\ifhasappendix \else Since any subdivided star is a tree, we call the disjoint union of subdivided stars a \emph{forest of subdivided stars}.\fi
If a subdivided star has a vertex of degree at least 3 then this vertex is unique and called
 the \emph{center}. For non-negative integers~$a_0,\ldots,a_t$ with~$a_t>0$, we define the graph~$H(a_t,\ldots,a_1,a_0)$ to be the disjoint union of an independent set of size~$a_0$ with the following subdivided star $H$. The star~$H$ is the subdivided star that for~$i\in\{1,\ldots,t\}$ has exactly~$a_i$ leaves at distance~$i$ from the center and no other leaves. (If~$\sum_{i=1}^t(a_i) < 3$ the center of~$H$ is defined so that the graph~$H$ is a path whose two leaves have suitable distances from that center.) \ifhasappendix
 {{\marginpar{\scriptsize{\vspace{-1cm}{\color{gray}{See Figure~\ref{fig:taxonomy:of:H:graphs} in the Appendix for examples of such graphs.}}}}}} \else
  Examples of such graphs are depicted in Figure~\ref{fig:taxonomy:of:H:graphs}. \fi

In~\cite{DBLP:conf/wg/KratschS12} it is shown that isomorphism of~$\forbindplain{H_1,\ldots,H_t}$ graphs is isomorphism complete unless one of the forbidden graphs is a forest of subdivided stars.

In this paper a \emph{colored graph} is a vertex colored graph whose coloring does not need to be proper (i.e., adjacent vertices can have the same color). Isomorphisms between colored graphs are required to respect the colors\ifhasappendix . \else, that is, they must map vertices to vertices of the same color. \fi A \emph{singleton} is a vertex with a unique color.
The \emph{naive vertex refinement} algorithm, or 1-dimensional Weisfeiler-Lehman algorithm, is a standard technique of repeatedly recoloring the vertices, refining the partition induced by the colors by using the multiplicity of colors appearing among the neighbors of a vertex (see for example~\cite{SchweitzerThesis}).
It has the property that after the refinement, the number of neighbors a vertex~$v$ has in a color class~$C$ only depends on the colors of~$v$ and~$C$.
A graph has \emph{color valence at most~$k$} if for every vertex~$v$ and every color class~$C$ there are at most~$k$ neighbors of~$v$ in~$C$ or there are at most~$k$ non-neighbors of~$v$ in~$C$.

\section{Reductions}\label{sec:reductions}
\differingappendixstatement{}{\section{Proofs of Section~\ref{sec:reductions}}}

In this section we develop a systematic approach to proving isomorphism invariant reductions. This provides general means to construct isomorphism-complete graph classes. 
Standard reductions like subdividing, taking complements, and adding isolated vertices fall into this framework. Likewise, most reductions performed in~\cite{DBLP:conf/wg/KratschS12} and also various reductions in~\cite{boothcolbourn} fall into this framework.

\begin{definition}
Let~$J$ be a finite set and~$L \colon J \times J \rightarrow \{A,N\}$ be a labeling assigning every ordered pair of vertices the label~$A$ for adjacent or~$N$ for non-adjacent. Moreover let~$L_N \colon J \times J \rightarrow \mathbb{N} \cup\{\infty\} $ be a labeling assigning every ordered pair of vertices an integer or infinity.

A graph~$G$ belongs to \emph{the class encoded by the labeled graph~$(J,L,L_N)$} if there exists a map~$\phi\colon V(G) \rightarrow J$ such that the following hold:
\begin{enumerate}
\item If~$v \in V(G)$ and~$j \in J$ such that~$L(\phi(v),j) = A$ and~$L_N(\phi(v),j) \neq \infty$ then there are at most~$L_N(\phi(v),j)$ vertices~$v'\in V(G) \setminus \{v\}$ that are non-adjacent to~$v$ such that~$\phi(v') = j$.

\item If~$v \in V(G)$ and~$j \in J$ such that~$L(\phi(v),j) = N$ and~$L_N(\phi(v),j) \neq \infty$ then there are at most~$L_N(\phi(v),j)$ vertices~$v'\in V(G) \setminus \{v\}$ that are adjacent to~$v$ such that~$\phi(v') = j$.
\end{enumerate}
\end{definition}

In this definition, the triple~$(J,L,L_N)$ should be thought of as a generalized graph. The graph class encoded by~$(J,L,L_N)$ then contains graphs that can be obtained by replacing the elements of~$J$ with sets of vertices. In some contexts,  
this is referred to as blowing-up the elements of~$J$. Adjacency of the new vertices is essentially governed by the adjacency of the original graph. However,  the values of~$L_N$ control the number of exceptions to this rule that are allowed per vertex.

The definition captures various constructions that are used in graph theory. A first class of examples form the complete multipartite graphs, which turn out to be graphs modeled with the function~$L$ satisfying~$L(x,y) = N$ if and only if~$x= y$ and the function~$L_N$ being the constant function evaluating to~0.

A second, more general class of examples modeled by the definition is the graphs of bounded color valence, which are graphs that frequently appear in the context of graph isomorphism.  
Such graphs are obtained whenever~$L_N$ is a bounded function. The coloring corresponds to the map~$\phi$.

A third example is the class of graphs that map homomorphically onto a finite graph~$H$. This class is obtained by letting~$(J,L)$ model the graph~$H$ (i.e.,~$V(H) = J$ and~$L(x,y) = A$ if and only if~$\{x,y\}$ is an edge of~$H$) and setting~$L_N (x,y) = 0$ if~$L(x,y) = N$ and~$L_N (x,y) = \infty$ otherwise.

Of interest to us is the complexity of the isomorphism problem of graph classes encoded by a triple~$(J,L,L_N)$. It turns out that when~$L_N$ is a bounded function then isomorphism can be solved in polynomial time.

\begin{theorem}
Let~$(J,L,L_N)$ be a triple that encodes a graph class. If all values of~$L_N$ are finite then isomorphism of graphs encoded by~$(J,L,L_N)$
can be solved in polynomial time. 
\end{theorem}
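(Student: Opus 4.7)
The plan is to reduce to Babai's polynomial-time algorithm for isomorphism of colored graphs of bounded color valence. Set $k = \max_{j, j' \in J} L_N(j, j')$, which is a finite constant by hypothesis. The key preliminary observation is that for any valid labeling $\phi \colon V(G) \to J$ as in the definition, the colored graph $(G, \phi)$ automatically has color valence at most $k$: a vertex of color $j$ has at most $L_N(j, j') \leq k$ non-neighbors (resp.\ neighbors) in the color class $\phi^{-1}(j')$ when $L(j,j') = A$ (resp.\ $L(j,j') = N$). Thus, once any valid $\phi$ is in hand, isomorphism testing is immediate.

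Since $\phi$ is not given as part of the input, the main work is to produce a polynomial-size collection of candidate labelings of each input graph. First I would apply the naive vertex refinement to the input graph $G$ to obtain a stable coloring. The claim to establish is that in any valid labeling $\phi$, every stable class of size exceeding an explicit constant $c(|J|, k)$ must lie entirely inside a single preimage $\phi^{-1}(j)$: two vertices $u$ and $v$ in the same stable class with $\phi(u) \neq \phi(v)$ would be forced to have the same count of neighbors in every other stable class, but the template constraints impose different majority behaviors on labels $\phi(u)$ and $\phi(v)$, and these become incompatible once the target classes are large enough compared to $k$. As a consequence only a bounded number of vertices can still carry an ambiguous label, yielding $|J|^{O(1)}$ candidate labelings to enumerate.

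For each candidate labeling of $G_1$ and each candidate labeling of $G_2$, I would first verify validity by a direct combinatorial check using $L$ and $L_N$, then invoke Babai's bounded-color-valence algorithm to test color-preserving isomorphism, and accept if any pair yields an isomorphism. Correctness is clear because any graph isomorphism $G_1 \to G_2$ transports a valid labeling of $G_1$ to a valid labeling of $G_2$, so for the corresponding pair of candidates the colored graphs must be isomorphic.

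The delicate part of the proof is the refinement argument above. One subtlety is that two distinct labels $j, j' \in J$ may yield interchangeable templates (their rows in $L$ and $L_N$ coincide after a suitable renaming), in which case refinement cannot separate them; this is harmless, because the resulting colored graphs obtained by the two assignments are isomorphic, so the quotient modulo template symmetries is the right object to enumerate. Making this quotient precise, bounding the constant $c(|J|,k)$, and ensuring that the validity and enumeration are performed consistently across both input graphs is where most of the technical care is needed.
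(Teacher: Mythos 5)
Your overall architecture agrees with the paper's --- observe that any valid $\phi$ turns $G$ into a colored graph of color valence at most $k = \max L_N$, then reduce to bounded-color-valence isomorphism --- so the only real content is the middle step, extracting a polynomial family of candidate labelings. That is exactly where your argument breaks down.

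The central claim you rely on --- that after naive vertex refinement every stable class of size exceeding some $c(|J|,k)$ lies inside a single preimage $\phi^{-1}(j)$ --- is false, and not merely in the ``identical rows'' situation you flag. Take $J=\{j_1,j_2,j_3,j_4\}$, $L_N\equiv 0$, and $L$ the symmetric function with $L(j_1,j_3)=L(j_2,j_4)=A$ and every other value (including all diagonal entries) equal to $N$. All four $L$-rows are pairwise distinct, so no two labels can be merged by the paper's reduction rule and there is no renaming identifying any two rows. Yet putting $n$ vertices on each label yields the $n$-regular graph $K_{n,n}\dunion K_{n,n}$, whose naive refinement terminates with a \emph{single} stable class of size $4n$ meeting all four preimages in sets of size $n$. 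So the claim fails with all the relevant target classes arbitrarily large, and your derived conclusion that only boundedly many vertices carry an ambiguous label is unsupported.

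The ``quotient modulo template symmetries'' escape does not repair this, for a more basic reason: even when you can legitimately coarsen labels, the coarsened $\phi$ is useless for the hand-off to the bounded-color-valence algorithm. In the simplest instance --- $J=\{j_1,j_2\}$ encoding complete bipartite graphs --- $K_{n,n}$ has a single stable class and $j_1,j_2$ are template-symmetric, so the quotient assigns every vertex one super-label; but then every vertex has $n$ neighbors and $n-1$ non-neighbors inside that single super-class, which is not bounded color valence in any sense. You still need to recover a \emph{fine} partition (the actual bipartition), and the stable coloring gives you no mechanism to do so. So the quotient step solves a naming issue but not the algorithmic one.

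The paper takes a different route to the same endpoint. It argues by induction on $|J|$: it first merges any $j_1,j_2$ whose entire $L$-rows and $L$-columns literally coincide (not merely up to a relabeling), inflating $L_N$ by a controlled factor, so that in the residual instance every pair of labels is distinguished by some third label. It then guesses one ``anchor'' vertex per label and guesses the (boundedly many) exception vertices; for every remaining vertex the adjacency pattern to the anchor set determines its label uniquely, because of the non-mergeability. This yields polynomially many candidate $\phi$'s, validity is checked directly, and the bounded-color-valence isomorphism test finishes the job. If you want to pursue a refinement-based argument instead, you would have to prove a genuinely different (and, given the example above, considerably more delicate) structural statement than the one you wrote.
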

%\begin{proof}
\proofatend
We show the statement by induction on~$|J|$. However, we show a statement that is slightly stronger. Namely, we show that there exists an algorithm that solves the isomorphism problem for input pairs where only one of the graphs is required to be encoded by~$(J,L,L_N)$. Moreover we require that the algorithm  does not produce any incorrect answers when queried with a pair of graphs both of which are not encoded by~$(J,L,L_N)$. That is, when both graphs are not encoded by~$(J,L,L_N)$ the algorithm either correctly determines whether the graphs are isomorphic or states that the answer was not determined.

For~$|J|=1$ the encoded graphs have bounded degree or bounded co-degree, so isomorphism can be solved in polynomial time~\cite{Luks:1982}. Let~$(J,L,L_N)$ be an encoding with~$|J|>1$.  We first describe a reduction rule. Suppose there are~$j_1,j_2\in J$ such that for all~$j\in J$ we have~$L(j_1,j)= L(j_2,j)$ and~$L(j,j_1)= L(j,j_2)$. We now argue that in this case we can replace~$J$ by a smaller set at the expense of increasing the values in~$L_N$. To this end, let~$\pi$ be the projection that maps~$j_2$ to~$j_1$, i.e., the map that satisfies~$\pi(j) = j$ if~$j\neq j_2$ and~$\pi(j_2)= j_1$. We claim that every graph encoded by~$(J,L,L_N)$ is also encoded  by~$(J',L',L'_N)$ where~$J' = J\setminus\{j_2\}$,~$L|_{J\setminus\{j_2\}}$ is the restriction of~$L$ to~$J'$, and~$L'(j,j')= 2 \cdot \max\{ L(\ell,\ell')\mid \ell\in \pi^{-1}(j),\ell' \in \pi^{-1}(j')\}$. Indeed, if for a graph~$G$ a map~$\phi \colon V(G) \rightarrow J$ is an encoding by~$(J,L,L_N)$ then the map~$\pi \circ \phi$ is an encoding by~$(J',L',L'_N)$.
By induction we can solve isomorphism for graphs encoded by~$(J',L',L'_N)$ in polynomial time.

Suppose now that the above situation does not arise. We can assume that for every map that encodes one of the input graphs by~$(J,L,L_N)$, the map to~$J$ is surjective. Otherwise we can solve the isomorphism problem on a smaller encoding by induction. 

We will now guess an encoding~$\phi_1$ of the input graph~$G_1$.
For each~$j$ in~$J$ we guess a vertex~$v\in V(G_1)$ with~$\phi_1(v) = j$. Let~$M$ be the set of guessed vertices. We now guess all vertices that form an exception with a vertex in~$M$. That is, we guess all vertices~$u$ for which there exists a~$v\in M$ such that~$L(\phi_1(v),\phi_1(u)) = A$ but~$u$ and~$v$ are not adjacent or such that~$L(\phi_1(v),\phi_1(u))= N$ but~$u$ and~$v$ are adjacent. Since~$L_N$ is bounded, the possible number of vertices that form an exception with a vertex in~$M$ is bounded. For all these vertices we guess their image under~$\phi_1$. By the reduction described above, for all non-exceptional vertices not in~$M$ there is only one choice for their image under~$\phi_1$ not violating the encoding or making them an exception. Thus~$\phi_1$ is completely determined by the guesses we have performed. The process of guessing an image under~$\phi_1$ is performed only for a bounded number of vertices. Thus, there are only polynomially many options.

Similarly for~$G_2$ there are only polynomially many options for suitable encodings. We can also check validity of an encoding in polynomial time. Once an encoding is determined, the graphs have bounded color valence and isomorphism can be checked in polynomial time.
%\end{proof}
\endproofatend

The theorem implicitly shows that isomorphism of graphs that have bounded color valence for some coloring that uses a bounded number of color classes can be solved in polynomial time, even if the color classes are not given.

\begin{corollary}
For every positive integer~$c$, graph isomorphism of graphs whose vertices can be partitioned into~$c$ color classes such that the graph has color-valence at most~$c$ can be solved in polynomial time.
\end{corollary}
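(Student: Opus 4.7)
The plan is to reduce the corollary to the preceding theorem by showing that any graph satisfying the hypothesis lies in a class encoded by some triple $(J,L,L_N)$ whose ingredients depend only on $c$. Once this is established, since there are only boundedly many candidate triples, we can enumerate them and appeal to the theorem.

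Suppose first that a valid partition $V(G) = C_1 \dunion \cdots \dunion C_c$ is given. For every vertex $v$ and every $j \in \{1,\ldots,c\}$, the color-valence hypothesis guarantees that $|N(v) \cap C_j| \leq c$ or $|C_j \setminus (N(v) \cup \{v\})| \leq c$. Assign $v$ the \emph{sparsity vector} $s_v \in \{A,N\}^c$ with $s_v(j)=N$ if $|N(v) \cap C_j| \leq c$ and $s_v(j)=A$ otherwise (ties broken in favor of $N$). Refining the partition by the pair (original color, sparsity vector) yields a partition into at most $c \cdot 2^c$ classes with the following uniformity property: for any two refined classes $D,D'$, either every $v \in D$ has at most $c$ neighbors in $D'$, or every $v \in D$ has at most $c$ non-neighbors in $D'$. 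This yields an encoding: let $J$ index the refined classes, set $L(j_1,j_2) \in \{A,N\}$ according to which of the two alternatives holds between the corresponding classes, set $L_N \equiv c$, and let $\phi$ send each vertex to its refined class. Then $G$ is encoded by $(J,L,L_N)$ with $|J| \leq c\cdot 2^c$ and all $L_N$-values equal to $c$.

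The main obstacle is that no partition is given with the input. We circumvent this by enumeration: since $|J| \leq c \cdot 2^c$ and $L_N$ takes values in $\{0,\ldots,c\}$, the number of candidate triples $(J,L,L_N)$ is a function of $c$ only. For each candidate triple we invoke the algorithm of the preceding theorem on the input pair $(G_1,G_2)$. Inspecting the strengthened statement proven there, that algorithm correctly decides isomorphism whenever at least one input graph is encoded by the current triple, and otherwise either gives a correct answer or reports that no answer was determined — crucially, it never returns an incorrect answer. Since both $G_1$ and $G_2$ belong to the class, both admit the encoding constructed above, so for the triple matching (up to relabeling of $J$) the encoding of $G_1$, the algorithm returns the correct verdict. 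Output the first definite answer obtained; the total running time is polynomial, with the constant depending on $c$. \qedhere
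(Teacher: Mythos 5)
Your proof follows the same overall strategy as the paper's---encode the graph by a triple $(J,L,L_N)$ whose size depends only on~$c$, then enumerate all candidate triples and invoke the preceding theorem for each, relying on the fact that the algorithm never gives a wrong answer on a mis-guessed triple---but you supply a step the paper's terse proof elides. The paper asserts that such a graph can be encoded by a triple with $|J|\leq c$ and $L_N$ bounded by~$c$, apparently taking $\phi$ to be the given coloring. As stated this is not quite right: two vertices of the same original color class can have \emph{opposite} sparsity toward another class (one with at most $c$ neighbors there, the other with at most $c$ non-neighbors), in which case no single value of $L(j,j')$ together with an $L_N$ bounded by~$c$ can accommodate both. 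Your sparsity-vector refinement is exactly the repair needed: it splits each original class so that the $L$-alternative is uniform on every refined class, at the cost of $|J|$ growing to $c\cdot 2^c$, which is still a function of~$c$ only, so the enumeration stays finite. You also correctly lean on the strengthened statement established inside the theorem's proof (soundness even when neither input is encoded by the current triple); that property is what makes the enumeration sound, and it is worth making explicit, as you do.
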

%\begin{proof}
\proofatend
Every graph that can be partitioned into~$c$ color classes such that the graph has color-valence at most~$c$ can be encoded by a triple~$(J,L,L_N)$ with~$|J| \leq c$ and a function~$L_N$ bounded by~$c$. There are only finitely many encodings of this type. For each of these we can apply the algorithm from the proof of the theorem. The input graphs are isomorphic if and only if at least one of the calls to the algorithm claims them to be isomorphic.
%\end{proof}
\endproofatend

While encodings can be used to show polynomial time solvability of certain graph classes, they can also be used to show hardness results as follows. 

\begin{definition}
An encoding~$(J,L,L_N)$ is a \emph{simple path encoding} in case~$L$ is symmetric (i.e., if~$L(j,j')= L(j',j)$ holds) and there is a sequence of vertices~$(p_1,\ldots,p_t)$ of length at least~2 in~$J$ such that~$L_N(p_1,p_2) = \infty$,~$L_N(p_t,p_{t-1}) \geq 2$ and for all~$k$ we have~$L_N(p_k,p_{k+1}) \geq 1$ and~$L_N(p_{k+1},p_{k}) \geq 1$.
\end{definition}

Intuitively, a simple path encoding allows enough freedom to encode bipartite graphs with one bipartition having vertices of degree two. We can formally prove this statement in the form of a reduction.

\begin{theorem}\label{thm:simple:path:enc}

The class of graphs encoded by a simple path encoding is graph isomorphism complete.
\end{theorem}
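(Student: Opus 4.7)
The plan is to reduce graph isomorphism to isomorphism in the class encoded by $(J, L, L_N)$. As the author motivates, a simple path encoding is flexible enough to express bipartite graphs with one side of degree two, and isomorphism of such bipartite graphs is already isomorphism complete.

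I would first invoke the classical $1$-subdivision reduction: for any graph $H$ of minimum degree at least $3$, let $B(H)$ be the bipartite graph obtained by subdividing each edge of $H$ once, so that the two sides of $B(H)$ are $V(H)$ and $E(H)$ and every $E(H)$-vertex has degree exactly $2$. Then $H_1 \cong H_2$ iff $B(H_1) \cong B(H_2)$, and isomorphism on this subclass is isomorphism complete. It therefore suffices to embed each $B(H)$ as a graph in the encoded class in an isomorphism-canonical way.

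Given the path $(p_1, \ldots, p_t)$ of the encoding, I would use a bridge-path construction. Assign color $p_1$ to each $V(H)$-vertex and color $p_t$ to each $E(H)$-vertex. Replace each edge $\{u, w\}$ of $B(H)$ (with $u \in V(H)$, $w \in E(H)$) by a path $u = y_0, y_1, \ldots, y_{t-1} = w$, assigning the interior vertex $y_k$ the color $p_{k+1}$. All other pairs of vertices in the resulting graph $G'$ are made adjacent or non-adjacent according to the default value of $L$ on their colors; where $L(p_k, p_{k+1}) = A$ for some $k$, the sparse bridge edges are realized as non-adjacency exceptions against a default complete bipartite pattern. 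The encoding constraints are then satisfied: $L_N(p_1, p_2) = \infty$ accommodates arbitrary degrees at $V(H)$; $L_N(p_k, p_{k \pm 1}) \geq 1$ accommodates the single exception per interior bridge vertex in each adjacent color; and $L_N(p_t, p_{t-1}) \geq 2$ accommodates the two bridge exceptions at each $p_t$-vertex.

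Isomorphism preservation in the forward direction is immediate from functoriality of the construction. The converse requires a rigidity argument: any isomorphism $G'_1 \to G'_2$ should respect the color classes, descend to the bridge structure, and hence induce an isomorphism of the underlying $H$'s. I would establish rigidity by observing that the $V(H)$-vertices are the unique endpoints of at least three disjoint length-$(t-1)$ chains of degree-$2$ vertices, whereas interior bridge vertices and $E(H)$-vertices sit on exactly one (respectively two) such chains. The main obstacle is when the default $L$ forces a dense background adjacency on some intermediate color pair, so that the bridge paths are hidden inside a near-complete-bipartite structure and raw degree invariants no longer separate the color classes. My plan to overcome this is to work with the ``exception graph''---the symmetric difference of $G'$ against the canonical default pattern determined by the color classes, the latter being recoverable through naive vertex refinement---which restores the sparse picture in which the rigidity argument runs cleanly.
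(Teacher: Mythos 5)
Your construction is essentially the one in the paper: vertices of $H$ (or $G$) get label $p_1$, a vertex per edge gets label $p_t$, and a bridge path with one interior vertex per intermediate label connects each incidence; the background adjacency is the default dictated by $L$, and the bridge structure is XOR'd against it. The forward direction is indeed immediate. The gap is in the reconstruction step, and you put your finger on exactly the right obstacle but then wave it away. Your rigidity argument (counting disjoint length-$(t-1)$ chains of degree-$2$ vertices) only makes sense in the sparse ``exception graph,'' and to compute the exception graph you must already know the map to $J$ --- the very coloring whose reconstructibility you are trying to establish. You assert that the color classes are ``recoverable through naive vertex refinement,'' but that is precisely the nontrivial claim, and it is false in the generality you state it: with source graphs of minimum degree $\geq 3$ only, the number of edges $m$ of $H$ can be as small as $3n/2$, and one can then arrange that $p_1$-vertices and other $p_i$-vertices have coinciding degrees in $G'$, so the first refinement round does not separate the classes. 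The paper avoids this by reducing from a more restricted (still complete) source class with $m \geq 5n$, minimum degree at least $4$, and maximum co-degree at least $4$; these inequalities are what make the degree formula $d + k_1 n - \varepsilon + k_2 m$ for $p_1$-vertices provably disjoint from the degree formula $k'_1 n + k'_2 m + k'_3$ for the remaining labels, after which the rest of $\phi$ is recovered by propagating from $p_1$ along the path. To repair your argument you would need to (i) impose analogous density/degree constraints on the source class and prove the corresponding degree separation, or (ii) give a direct argument that refinement stabilizes to the intended coloring; as written, step (ii) is the missing core of the proof, not a remark.
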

%\begin{proof}
\proofatend
First we describe a reduction and then show it is isomorphism invariant. 
Let~$(J,L,L_N)$ be a simple path encoding with the path~$P = (p_1,\ldots,p_t)$.
Let~$G = (V,E)$ be a graph. We now describe a reduction that transforms~$G$ into a graph~$G'$ that can be encoded by~$(J,L,L_N)$.
The vertex set of~$G'$ is~$V(G') = V(G) \cup E(G) \cup \{(v,\{v,v'\},k)\mid \{v,v'\} \in E(G) \text{ and } k\in  \{2,\ldots, t-1\} \}$. We define a map~$\phi\colon V(G') \rightarrow J$ such that for all~$v\in V(G)$ we have~$\phi(v)= p_1$ and for all~$e\in E(G)$ and~$i\in \{2,\ldots, t-1\}$ we have~$\phi(e) = p_t$ and~$\phi(\{e,i\}) = p_i$. 
The edge set~$E(G')$ will be formed as a symmetric difference as follows:
\begin{itemize}
\item Let $E_1$ be the set of pairs~$\{u,u'\} \subset V(G')$ for which~$L(\phi(u),\phi(u')) = A$.
\item Let~$E_2$ be the set of pairs~$\{(v,\{v,v'\},i),(v,\{v,v'\},i+1)\mid \{v,v'\}\in E(G) \text{, } i \in \{2,\ldots,t-2\}\}$.

\item In the case~$t>2$ we define $E_3$ to be the set of pairs~$\{v,(v,\{v,v'\},2)\mid \{v,v'\}\in E(G)\}$ and~$E_4$ to be the set of pairs~$\{\{v,v'\},(v,\{v,v'\},t-1)\mid \{v,v'\}\in E(G)\}$.
\item If~$t = 2$ we set~$E_3$ as~$\{v,\{v,v'\}\}$ and~$E_4$ is the empty set.
\end{itemize}
The edge set~$E(G')$ is~$E_1 \bigtriangleup (E_2 \cup E_3 \cup E_4)$, where~$\bigtriangleup$ denotes the symmetric difference. 

By following the construction, given a graph~$G$, the graph~$G'$ can be constructed in polynomial time. The map~$\phi$ is an encoding and shows that the graph~$G$ can be encoded by the triple~$(J,L,L_N)$. Since all steps of the construction are isomorphism invariant, given two isomorphic graphs~$G_1$ and~$G_2$ the respective graphs produced by the construction to each of them are isomorphic as well. 

To finish the proof we need to show the following: there is a graph isomorphism complete class of graphs~$\mathcal{C}$ such that for any two non-isomorphic graphs in~$\mathcal{C}$ the construction produces two non-isomorphic graphs.

For this let~$\mathcal{C}$ be the class of graphs that have the following properties: the number of edges is at least~$5n$. The minimum degree and the maximum co-degree are at least~$4$. This class can be seen to be isomorphism complete by using the operation that adds 6 non-adjacent vertices to a graph and making them adjacent to every other vertex. There are various other reductions showing that this is an isomorphism complete class, for example the class of self-complementary graphs fulfills these properties, which is an isomorphism complete class~\cite{Colbourn:Colbourn:self:compl}.

To show that the construction described above maps non-isomorphic graphs from~$\mathcal{C}$ to non-isomorphic graphs it suffices to reconstruct the labeling. To show that the labeling is reconstructible, we use a degree argument. The degree of a vertex with label~$p_1$ is~$d + k_1 n - \varepsilon + k_2 m $ where~$d$ is the degree of the vertex in~$G$, the variables~$k_1, \varepsilon \in \{0,1\}$ depend on the value of~$L(p_1,p_1)$,  and~$k_2\in \{0,\ldots, |J|\}$ is a constant that depends on the values of~$L(p_1,p_i)$.

The degree of a vertex with a label~$p_i$ different from~$p_1$ is~$k_1 n + k_2 m + k_3$, where again~$k_1$,~$k_2$ and~$k_3$ depend on the labeling~$L$. More precisely,~$k_1\in \{0,1\}$ depends on~$L(p_i,p_1)$,~$k_2$ depends on the values of~$L(p_i,p_{i'})$ and~$k_3$ is an integer with an absolute value that is bounded above by~$2$ arising from exceptions allowed by~$L_N$.

Since~$m \geq 5n$ the degree of vertices with label~$p_1$ cannot be expressed as a sum with constraints of the latter kind (i.e., as~$k_1 n + k_2 m + k_3$). Likewise the degree of vertices with a label different from~$p_1$ cannot be expressed with constraints of the former kind (i.e., as~$d + k_1 n - \varepsilon + k_2 m $). This implies that the set of vertices with label~$p_1$ is reconstructible. Since for~$i>1$ the set of vertices with label~$p_i$ contains exactly the vertices that have a neighbor and a non-neighbor among the vertices with label~$p_{i-1}$, 
we can reconstruct~$\phi$ and thus the partition. Once the partition is reconstructed, the edge sets~$E_i$ can be reconstructed, and thus the original graph can be reconstructed. This shows that the reduction preserves isomorphism for graphs in~$\mathcal{C}$ and finishes the proof.
%\end{proof}
\endproofatend

\ifhasappendix
\else
In the next section, we give examples of reductions that apply the theorem. We require these results later to resolve the complexity of various hereditary graph classes of bounded clique number.

\fi
\ifhasappendix
In the appendix in Section~\ref{sec:reductions:applied}, we give examples of reductions that apply the theorem. We require these results to resolve the complexity of various hereditary graph classes of bounded clique number.\fi
\hereorinappendixstatement{
\section{Isomorphism-complete classes}\label{sec:reductions:applied}

The reduction techniques of \ifhasappendix{Section~\ref{sec:reductions} }\else{the previous section}\fi provide a systematic approach to show isomorphism completeness results for graphs characterized by finitely many forbidden induced subgraphs. The following examples show how to apply the techniques in a mechanical way to show isomorphism completeness of graph classes characterized by forbidden induced subgraphs.

\begin{theorem}\label{thm:K4:and:various:free:iso:compl}
The graph isomorphism problem on the class of~$\forbindplain{2K_2 \dunion K_1,K_4}$ graphs, of~$\forbindplain{P_6,P_4\dunion P_2, K_4}$ graphs, and of~$\forbindplain{\HabcGraphFOUR{1}{0}{3}{0}, K_4}$ graphs is graph isomorphism complete.
\end{theorem}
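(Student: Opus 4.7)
The plan is to derive each of the three statements from Theorem~\ref{thm:simple:path:enc}. For each of the three classes I will exhibit an explicit simple path encoding $(J, L, L_N)$ all of whose encoded graphs lie in the class; Theorem~\ref{thm:simple:path:enc} then yields graph isomorphism completeness of the encoded class, which transfers to the target class.

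In each of the three cases I would use a small three-label encoding with $J = \{p_1, p_2, p_3\}$, with $L(p_i, p_j) = A$ for $i \neq j$ and $L(p_i, p_i) = N$, together with simple path $(p_1, p_2, p_3)$ in $J$, and $L_N(p_i, p_i) = 0$. Hence the encoded graphs are essentially complete tripartite graphs with a bounded number of missing edges between distinct parts, controlled by the off-diagonal values of $L_N$. They are automatically $K_4$-free, since complete tripartite graphs have clique number~$3$ and only non-edges are permitted as exceptions.

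For $\forbindplain{2K_2 \dunion K_1, K_4}$ I would set $L_N(p_1, p_3) = L_N(p_3, p_1) = 0$ (giving a complete join between $p_1$ and $p_3$), $L_N(p_1, p_2) = \infty$, $L_N(p_2, p_1) = L_N(p_2, p_3) = 1$, and $L_N(p_3, p_2) = 2$, so that the simple path condition is met along $(p_1, p_2, p_3)$. To rule out an induced $2K_2 \dunion K_1$, I would carry out a case analysis on how its five vertices distribute among the three label classes; in every case, either some $p_2$-vertex is forced to acquire two non-neighbors in $p_1$ or in $p_3$ (violating $L_N(p_2, p_1) = 1$ or $L_N(p_2, p_3) = 1$), or the candidate isolated vertex would need a non-edge across the complete join between $p_1$ and $p_3$, which is forbidden by $L_N(p_1, p_3) = 0$. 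For $\forbindplain{P_6, P_4 \dunion P_2, K_4}$ and $\forbindplain{\HabcGraphFOUR{1}{0}{3}{0}, K_4}$ I would design analogous three-label simple path encodings with tight $L_N$-values tailored to the corresponding forbidden induced subgraph, exploiting that complete tripartite graphs are $P_4$-free and therefore trivially avoid $P_6$, $P_4 \dunion P_2$, and any induced subgraph (such as $\HabcGraphFOUR{1}{0}{3}{0}$) whose longest induced path has length greater than~$2$; the tight exception bounds then prevent these subgraphs from being reintroduced via $L_N$-exceptions.

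The main obstacle is the case analysis that verifies forbidden-subgraph-freeness of the encoded class. The difficulty lies in reconciling the lower bounds on certain $L_N$-values demanded by the simple path encoding condition ($L_N(p_t, p_{t-1}) \geq 2$ and $L_N(p_k, p_{k+1}), L_N(p_{k+1}, p_k) \geq 1$) with the upper bounds needed to exclude the forbidden induced subgraphs. Finding such balanced encodings for the second and third classes, where the forbidden subgraphs are specific paths and a specific subdivided star, is the delicate part, since the bounds must be tight enough to suppress the long induced paths while still providing enough flexibility to make the encoded class isomorphism complete.
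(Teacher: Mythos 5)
Your overall strategy — construct a simple path encoding $(J,L,L_N)$ whose encoded graphs lie in the target class, then invoke Theorem~\ref{thm:simple:path:enc} — is exactly the paper's strategy, and your first encoding is correct: it is in fact the very encoding the paper uses for the \emph{second} class (the paper's own first encoding differs, taking $L(v,v')=N$ whenever $\{v,v'\}\neq\{p_1,p_3\}$, so that $p_2$-vertices have degree at most $2$; both options work for $(2K_2 \dunion K_1,K_4)$-freeness, and you can in fact reuse your single encoding for both of the first two classes). However, for the remaining cases you never write down a concrete encoding, and your guiding intuition is wrong.

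The claim that the encoded graphs ``are essentially complete tripartite graphs'' and therefore ``$P_4$-free'' does not hold: the simple-path condition forces $L_N(p_1,p_2)=\infty$, so $p_1$-vertices may be non-adjacent to arbitrarily many $p_2$-vertices, and the encoded class does contain induced $P_4$'s (the paper's proof of the $P_6$ case explicitly discusses which encodings of $P_4$ exist and uses them). Worse, your plan of keeping $L(p_i,p_j)=A$ for all $i\neq j$ and only retuning $L_N$ cannot possibly work for the class $\forbind{\HabcGraphFOUR{1}{0}{3}{0},K_4}$, because $\HabcGraphFOUR{1}{0}{3}{0}$ can always be encoded with that $L$. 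Concretely, write $h$ for the degree-$4$ center, $\ell_1,\ell_2,\ell_3$ for the leaves adjacent to $h$, and $h\,x_1\,x_2\,x_3$ for the path of length $3$. Put $\phi(h)=\phi(x_2)=p_1$ and $\phi(\ell_1)=\phi(\ell_2)=\phi(\ell_3)=\phi(x_1)=\phi(x_3)=p_2$, leaving $p_3$ empty. Both color classes are independent sets; $x_1$ has $0$ non-neighbors in $p_1$, while each of $\ell_1,\ell_2,\ell_3,x_3$ has exactly $1$ non-neighbor in $p_1$; and $L_N(p_1,p_2)=\infty$ absorbs the many non-edges from $h$ and $x_2$ toward $p_2$. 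Since $L_N(p_2,p_1)\ge 1$ is already forced by the simple-path condition, no choice of $L_N$ rules out this encoding. The paper avoids the problem by switching the underlying $L$ to $L(p_1,p_2)=N$, $L(p_1,p_3)=L(p_2,p_3)=A$ for the third class (and to $L=N$ off the pair $\{p_1,p_3\}$ for the first); the freedom to vary $L$ between cases, not just $L_N$, is essential and is the idea missing from your proposal.
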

\begin{proof}
We first show the statement for the class of~$\forbindplain{2K_2 \dunion K_1,K_4}$ graphs. For this, let~$(\{p_1,p_2,p_3\}, L, L_N)$ be the encoding given by setting~$L(v,v')= N$ if and only if~$\{v,v'\}\neq \{p_1,p_3\}$ and~\[L_N(v,v') =    \begin{cases} 
\infty  &\text{ if } (v,v')= (p_1,p_2) \\
1  &\text{ if } (v,v')= (p_2,p_1) \\
1  &\text{ if } (v,v')= (p_2,p_3) \\
2  &\text{ if } (v,v')= (p_3,p_2) \\
0  &\text{ otherwise.}\end{cases}\]
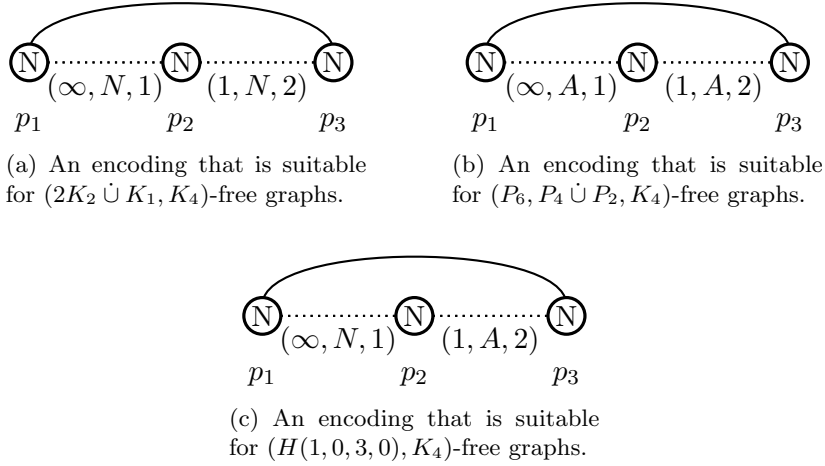
\begin{figure}[t]
\centering
\subfloat[][An encoding that is suitable for $\forbindplain{2K_2 \dunion K_1,K_4}$ graphs.]
{\begin{tikzpicture}[scale=1,thick]
\node (v1) at (0,0) {};
\node (v2) at (2,0) {};
\node (v3) at (4,0) {};

\draw (v1) .. controls (0,1) and (4,1) .. (v3);
\draw[dotted] (v1) -- (v2) -- (v3);
\path[dotted] (v1) edge node[midway,below] {$(\infty,N,1)$} (v2);
\path[dotted] (v2) edge node[midway,below] {$(1,N,2)$} (v3);

\node[vertex] at (v1)  {N};
\node[vertex] at (v2)  {N};
\node[vertex] at (v3)  {N};
\draw (0,-0.8) node {$p_1$};
\draw (2,-0.8) node {$p_2$};
\draw (4,-0.8) node {$p_3$};
\end{tikzpicture}

%\caption{An encoding.}
\label{fig:encoding:1}
}
\hspace{1cm}
%\centering
\subfloat[][An encoding that is suitable for~$\forbindplain{P_6,P_4\dunion P_2, K_4}$ graphs.]{
\begin{tikzpicture}[scale=1,thick]
\node (v1) at (0,0) {};
\node (v2) at (2,0) {};
\node (v3) at (4,0) {};

\draw (v1) .. controls (0,1) and (4,1) .. (v3);
\draw[dotted] (v1) -- (v2) -- (v3);
\path[dotted] (v1) edge node[midway,below] {$(\infty,A,1)$} (v2);
\path[dotted] (v2) edge node[midway,below] {$(1,A,2)$} (v3);

\node[vertex] at (v1)  {N};
\node[vertex] at (v2)  {N};
\node[vertex] at (v3)  {N};
\draw (0,-0.8) node {$p_1$};
\draw (2,-0.8) node {$p_2$};
\draw (4,-0.8) node {$p_3$};
\end{tikzpicture}

\label{fig:encoding:2}
}

\subfloat[][An encoding that is suitable for~$\forbindplain{\HabcGraphFOUR{1}{0}{3}{0}, K_4}$ graphs.]{
\begin{tikzpicture}[scale=1,thick]
\node (v1) at (0,0) {};
\node (v2) at (2,0) {};
\node (v3) at (4,0) {};

\draw (v1) .. controls (0,1) and (4,1) .. (v3);
\draw[dotted] (v1) -- (v2) -- (v3);
\path[dotted] (v1) edge node[midway,below] {$(\infty,N,1)$} (v2);
\path[dotted] (v2) edge node[midway,below] {$(1,A,2)$} (v3);

\node[vertex] at (v1)  {N};
\node[vertex] at (v2)  {N};
\node[vertex] at (v3)  {N};
\draw (0,-0.8) node {$p_1$};
\draw (2,-0.8) node {$p_2$};
\draw (4,-0.8) node {$p_3$};
\end{tikzpicture}

\label{fig:encoding:3}
}
\caption{The figures show two encodings~$(J,L,L_N)$ with symmetric function~$L$.}
\end{figure}%
This encoding is depicted in Figure~\ref{fig:encoding:1}.
By Theorem~\ref{thm:simple:path:enc} it suffices to show that graphs encoded by~$(\{p_1,p_2,p_3\}, L, L_N)$ are~$\forbindplain{2K_2 \dunion K_1}$ and~$\forbindplain{K_4}$. This is equivalent to showing that neither~$2K_2 \dunion K_1$ nor~$K_4$ can be encoded. For~$K_4$ this is obvious since all encoded graphs are tripartite. For~$2K_2 \dunion K_1$ we conduct a case distinction. First note that, under a tentative encoding, it cannot be the case that all elements of~$\{p_1,p_2,p_3\}$ are in the image of the encoding, as otherwise two non-adjacent vertices would be mapped to~$p_1$ and~$p_3$ each.
If the isolated vertex is mapped to~$p_1$ (respectively~$p_3$) then all other vertices must be mapped to~$p_2$ or~$p_3$ (respectively~$p_2$ or~$p_3$). Since end points of edges cannot be mapped to the same element of~$\{p_1,p_2,p_3\}$, two vertices are mapped to~$p_2$. This implies that some vertex mapped to~$p_2$ has two non-neighbors mapped to~$p_1$ (respectively~$p_3$) contradicting that~$L_N(p_2,p_1) = 1$ (respectively ~$L_N(p_2,p_3) = 1$).
In the case where the isolated vertex is mapped to~$p_2$ there are two vertices mapped to~$p_1$ or two vertices mapped to~$p_3$ which similarly leads to a contradiction.

To show the second part of the theorem for~$\forbindplain{P_6,P_4\dunion P_2, K_4}$ graphs we use the encoding~$(\{p_1,p_2,p_3\}, L', L_N)$ with~$L'(v,v')= N$ if and only if~$v=v'$ (and the same function~$L_N$ as in the first part of the proof). This encoding is depicted in Figure~\ref{fig:encoding:2}. Analogously to the first part, it suffices for us to show that the graphs~$P_6$,~$P_4\cup P_2$ and~$K_4$ cannot be encoded.
Again, the encoded graphs are tripartite so~$K_4$ cannot be encoded. 

We first note that for any encoding of~$P_4$, some vertex is mapped to~$p_1$ and some vertex is mapped to~$p_3$. Indeed, otherwise, some vertex is mapped to~$p_2$, two neighbors are mapped to~$p_1$, or two neighbors mapped to~$p_3$.

Suppose there is an encoding of~$P_6$. Since~$P_4 \leq P_6$, some vertex is mapped to~$p_1$ and some vertex is mapped to~$p_3$. Since every non-empty complete bipartite graph on 4 vertices contains a vertex of degree 3 or a cycle, this implies that in any encoding at most 3 vertices can be mapped to~$p_1$ or~$p_3$. Thus 3 vertices must be mapped to~$p_2$. The path~$P_6$ only has two independent sets of size~$3$. The vertices not mapped to~$p_2$ also form an independent set, so they must be all mapped to~$p_1$ or all mapped to~$p_3$ which contradicts the observation about encodings of~$P_4$.

Suppose now that there is an encoding of~$P_4\dunion P_2$. 
We already know that some vertex of~$P_4$ is mapped to~$p_1$ and some vertex is mapped to~$p_3$. However, some vertex of~$P_2$ must also be mapped to~$p_1$ or~$p_3$ and this vertex would thus have to be adjacent to a vertex of~$P_4$.

To show the third part of the theorem for~$\forbindplain{\HabcGraphFOUR{1}{0}{3}{0}, K_4}$ graphs we use the encoding~$(\{p_1,p_2,p_3\}, L', L_N)$ with~$L'(v,v')= A$ if and only if~$\{v,v'\}=\{p_2,p_3\}$ or~$\{v,v'\}=\{p_1,p_3\}$ (and the same function~$L_N$ as above). This encoding is depicted in Figure~\ref{fig:encoding:3}. By the same arguments as before~$K_4$ cannot be encoded, so it suffices to show that~$\HabcGraphFOUR{1}{0}{3}{0}$ cannot be encoded. Let~$h$ be the vertex of degree~$4$ in the graph~$\HabcGraphFOUR{1}{0}{3}{0}$ and let~$x_1$,~$x_2$ and~$x_3$ be the vertices on the path from~$h$ to the leaf of distance~$3$ from the center. Suppose~$h$ is mapped to~$p_1$. In this case~$x_1$ could not be mapped to~$p_2$ since~$x_2$ would have to be mapped to~$p_3$, which contradicts that~$x_2$ is not adjacent to~$h$. Thus~$x_1$ would have to be mapped to~$p_3$. Then~$x_2$ is mapped to~$p_1$, since otherwise~$x_3$ cannot be mapped anywhere. This implies that all leaves are mapped to~$p_2$. However, then~$x_1$, which is mapped to~$p_3$, has~$3$ non-neighbors which are mapped to~$p_2$ contradicting the requirements of the encoding.
Supposing next that~$h$ is mapped to~$p_2$, then at least one, and thus all of its neighbors are mapped to~$p_3$. Since~$x_2$ must then be mapped to~$p_2$ this would mean that~$x_2$ has~$2$ non-neighbors that are mapped to~$p_3$ contradicting the encoding.
Finally suppose~$h$ is mapped to~$p_3$. This implies that one of the vertices~$x_2$ and~$x_3$ is mapped to~$p_2$ while the other is mapped to~$p_3$. This in turn implies that all leaf vertices adjacent to~$h$ are mapped to~$p_2$ which shows that some vertex in~$\{x_2,x_3\}$ is mapped to~$p_3$ but has 3 non-neighbors that are mapped to~$p_2$ which yields a contradiction.
\end{proof}

As can been seen in the proof of the previous theorem, given an encoding, checking whether no~$\forbindONEplain{H}$ graph can be encoded amounts to checking whether~$H$ itself can be encoded, which is a finite computation that can in fact be fully automated. We consider another example concerned with bipartite graphs.

\begin{theorem}\label{thm:bipart:2P3:K1:free:iso:compl}
The isomorphism problem for bipartite~$\forbindplain{2P_3 \dunion K_1}$ graphs is graph isomorphism complete.
\end{theorem}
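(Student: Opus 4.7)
The plan is to apply Theorem~\ref{thm:simple:path:enc} by exhibiting a simple path encoding $(J,L,L_N)$ whose class of encoded graphs is contained in the class of bipartite $\forbindplain{2P_3\dunion K_1}$ graphs. Since the encoded class of any simple path encoding is isomorphism complete, this yields the statement.

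I take $J = \{p_1,p_2,p_3,p_4\}$ with $L$ the symmetric labelling that assigns $A$ to exactly the four pairs $\{p_1,p_2\}$, $\{p_2,p_3\}$, $\{p_3,p_4\}$, $\{p_1,p_4\}$, i.e., the edges of a $4$-cycle on $J$, and $N$ to every other pair. The labelling $L_N$ is $L_N(p_1,p_2) = \infty$, $L_N(p_2,p_1) = L_N(p_2,p_3) = L_N(p_3,p_2) = L_N(p_3,p_4) = 1$, $L_N(p_4,p_3) = 2$, and $L_N = 0$ on every remaining pair. Setting $L_N = 0$ on every same-side pair (combined with $L = N$ there) ensures that every encoded graph is bipartite with sides $\phi^{-1}(\{p_1,p_3\})$ and $\phi^{-1}(\{p_2,p_4\})$, while $L_N(p_1,p_4) = L_N(p_4,p_1) = 0$ forces every $p_1$-vertex to be adjacent to every $p_4$-vertex. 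The sequence $(p_1,p_2,p_3,p_4)$ witnesses that this is a simple path encoding.

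The main step, which is also the main obstacle of the argument, is to show that $2P_3\dunion K_1$ admits no encoding by $(J,L,L_N)$. Since restricting $\phi$ to the vertex set of an induced subgraph is again an encoding, this implies that no encoded graph contains $2P_3\dunion K_1$ as an induced subgraph, and Theorem~\ref{thm:simple:path:enc} then finishes the proof. I would proceed by case analysis on the bipartition of $2P_3\dunion K_1$ and on the class $\phi(v_0)$ of the isolated vertex $v_0$, using two observations. First, the bounds $L_N(p_2,p_i) = L_N(p_3,p_j) = 1$ for $i \in \{1,3\}$ and $j \in \{2,4\}$ together with $L_N(p_4,p_1) = 0$ imply that if $\phi(v_0) \in \{p_2,p_3,p_4\}$ then the opposite bipartition side of $G$ can contain at most two vertices; this rules out every bipartition of $2P_3\dunion K_1$ except possibly the $(2,5)$-bipartition with $v_0$ on the $5$-side, and a short sub-analysis there produces a $p_3$-vertex with two non-neighbours in $p_2$ or in $p_4$, contradicting the corresponding $L_N = 1$ bound. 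Hence $\phi(v_0) = p_1$, which forces the $p_4$-class to be empty and the opposite bipartition side of $G$ to lie entirely in $p_2$. Second, the bound $L_N(p_2,p_1) = 1$ applied to a $P_3$-leaf $b$, whose only two non-neighbours are $v_0$ and the centre of the other $P_3$, forces that opposite centre to be a $p_3$-vertex; but then this centre has two non-neighbours (the two leaves of the first $P_3$) among $p_2$-vertices, contradicting $L_N(p_3,p_2) = 1$. The symmetric situations in which the bipartition sides of $G$ are swapped between $\{p_1,p_3\}$ and $\{p_2,p_4\}$ are dispatched by the same counting.
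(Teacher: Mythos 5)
Your plan is sound and goes through the same outer mechanism as the paper's proof (apply Theorem~\ref{thm:simple:path:enc} by exhibiting a simple path encoding whose encoded class lies inside the bipartite~$\forbindplain{2P_3\dunion K_1}$ graphs), but the encoding you chose is genuinely different. The paper sets~$L=N$ on the three path pairs and~$A$ only on~$\{p_1,p_4\}$, so encoded graphs are \emph{sparse} across the bipartition except for the forced complete~$p_1$--$p_4$ part; you instead set~$L=A$ on all four cycle pairs, producing graphs whose bipartite complement is sparse. Both are legitimate simple path encodings for this purpose, and either can be verified to avoid~$2P_3\dunion K_1$; the combinatorics of the verification just differs. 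The paper's verification is also organized differently: it first establishes a fact about how independent sets of~$2P_3\dunion K_1$ may be split across~$\{p_1,p_3\}$ and~$\{p_2,p_4\}$, and then applies it to the four leaf endpoints and the isolated vertex; you instead case first on~$\phi(v_0)$, which is a reasonable and arguably cleaner organizing principle.

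There is, however, a gap in the~$\phi(v_0)=p_1$ part of your argument. You write that a~$P_3$-leaf~$b$ has "only two non-neighbours, $v_0$ and the centre of the other~$P_3$", and deduce the contradiction from~$L_N(p_2,p_1)=1$ and~$L_N(p_3,p_2)=1$. That description of~$b$'s non-neighbours presupposes the one bipartition of~$2P_3\dunion K_1$ in which the isolated vertex is grouped with the two centres. But there are four bipartitions of~$2P_3\dunion K_1$: the isolated vertex may also sit with the four leaves ($(5,2)$-split), or with two leaves and one centre ($(4,3)$-splits in two non-symmetric ways). Your closing sentence about "symmetric situations in which the bipartition sides of~$G$ are swapped between~$\{p_1,p_3\}$ and~$\{p_2,p_4\}$" does not cover these — they are not related by any symmetry of the encoding, since~$p_1$ is distinguished by~$\phi(v_0)=p_1$. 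What rescues the argument is that the remaining three bipartitions are actually easier: in each of them some~$p_2$-vertex has at least three non-neighbours on the~$\{p_1,p_3\}$ side, one of which is~$v_0\in p_1$; with~$p_4$ empty, the bounds~$L_N(p_2,p_1)=1$ and~$L_N(p_2,p_3)=1$ allow at most two non-neighbours in total, a contradiction. You should state this case explicitly rather than appeal to a symmetry that is not present. With that addition the proof is complete.
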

\begin{proof} 
Let~$(\{p_1,p_2,p_3,p_4\}, L, L_N)$ be the encoding depicted in Figure~\ref{fig:encoding:cross:cocycle} with~$L(v,v')= N$ if and only if~$\{v,v'\}\neq \{p_1,p_4\}$ and~\[L_N(v,v') =    \begin{cases} 
\infty  &\text{ if } (v,v')= (p_1,p_2) \\
1  &\text{ if } (v,v')= (p_2,p_1) \\
1  &\text{ if } (v,v')= (p_2,p_3) \\
1  &\text{ if } (v,v')= (p_3,p_2) \\
1  &\text{ if } (v,v')= (p_3,p_4) \\
2  &\text{ if } (v,v')= (p_4,p_3) \\
0  &\text{ otherwise.}\end{cases}\]

For every encoding of a graph the pre-images of~$\{p_1,p_3\}$ and~$\{p_2,p_4\}$ must be independent sets and thus the encoded graphs are bipartite. Let~$\pi$ be an encoding of~$2P_3 \dunion K_1$ and consider the bipartition induced by the pre-images just considered.
We first observe that no independent set of~$2P_3 \dunion K_1$ can be partitioned such that there are 2 vertices in each bipartition class. Indeed, if this were the case, then two vertices are mapped to~$p_2$ or one pair of endpoints is mapped to~$p_3$. Due to the requirements of the encoding 
this implies that the other pair is mapped to~$p_1$ or~$p_4$, which again gives a contradiction to the requirements of the encoding.

Since~$P_3$ is connected, the endpoints of each~$P_3$ are contained in the same bipartition class. 
Thus by our observation about independent sets of size 4, all end points of the paths are mapped to the same bipartition class. Furthermore, by the same reason, the isolated vertex must be mapped to the same bipartition class. This means the other two vertices have 3 non-neighbors in the other bipartition class. Thus they are mapped to~$p_1$. This implies that all other vertices are mapped to~$p_2$ giving a contradiction to~$L(p_2,p_1)$ for the isolated vertex.
\end{proof}

\begin{figure}[t]
\centering
\begin{tikzpicture}[scale=1,thick]
\node (v1) at (0,0) {};
\node (v2) at (2,0) {};
\node (v3) at (4,0) {};
\node (v4) at (6,0) {};

\draw (v1) .. controls (0,1) and (6,1) .. (v4);
\draw[dotted] (v1) -- (v2) -- (v3) -- (v4);
\path[dotted] (v1) edge node[midway,below] {$(\infty,N,1)$} (v2);
\path[dotted] (v2) edge node[midway,below] {$(1,N,1)$} (v3);
\path[dotted] (v3) edge node[midway,below] {$(1,N,2)$} (v4);

\node[vertex] at (v1)  {N};
\node[vertex] at (v2)  {N};
\node[vertex] at (v3)  {N};
\node[vertex] at (v4)  {N};
\draw (0,-0.8) node {$p_1$};
\draw (2,-0.8) node {$p_2$};
\draw (4,-0.8) node {$p_3$};
\draw (6,-0.8) node {$p_4$};
\end{tikzpicture}
\caption{An encoding that is suitable for bipartite $\forbindplain{2P_3 \dunion K_1}$ graphs.}
\label{fig:encoding:cross:cocycle}
\end{figure}%

On a more intuitive level, one can also interpret the proof as an argument showing that~$2P_3\dunion K_1$ cannot be partitioned into two parts such that its bipartite complement is a forest of subdivided stars.

Finally, we conclude with one more example-application of the reduction technique. 
\begin{theorem}\label{thm:H1020:K5:free:iso:compl}
The graph isomorphism problem on~$\forbindplain{\HabcGraphFOUR{1}{0}{2}{0},K_5}$ graphs is graph isomorphism complete.
\end{theorem}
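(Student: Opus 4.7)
My plan is to mirror the method of Theorems~\ref{thm:K4:and:various:free:iso:compl} and~\ref{thm:bipart:2P3:K1:free:iso:compl}: exhibit a simple path encoding $(J, L, L_N)$ such that neither $\HabcGraphFOUR{1}{0}{2}{0}$ nor $K_5$ can be encoded, and then invoke Theorem~\ref{thm:simple:path:enc} to conclude that the encoded class, which is then contained in $\forbindplain{\HabcGraphFOUR{1}{0}{2}{0}, K_5}$, is isomorphism complete.

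For $K_5$-freeness, I would take $J = \{p_1, p_2, p_3, p_4\}$ (four classes, as in the proof of Theorem~\ref{thm:bipart:2P3:K1:free:iso:compl}), with $L$ symmetric, $L(p_i, p_i) = N$ throughout, and with its $A$-edges forming a bipartite subgraph of $(J, L)$. A natural candidate is to let the $A$-edges be $\{p_1, p_3\}, \{p_2, p_3\}, \{p_3, p_4\}$, so that $(J, L)$ is the star centred at $p_3$; then the encoded graphs are bipartite and in particular $K_5$-free. The simple path would be taken to be $(p_1, p_2, p_3, p_4)$, with values $L_N(p_1, p_2) = \infty$, $L_N(p_2, p_1) = L_N(p_2, p_3) = L_N(p_3, p_2) = L_N(p_3, p_4) = 1$, $L_N(p_4, p_3) = 2$, and all other $L_N$-entries equal to~$0$.

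To rule out $\HabcGraphFOUR{1}{0}{2}{0}$, denote its degree-$3$ centre by $c$, the two short leaves by $\ell_1, \ell_2$, and the tail of length~$3$ emanating from $c$ by $c v_1 v_2 v_3$. Because $(J, L)$ is bipartite and each class is independent, the bipartition $\{c, v_2\} \mid \{\ell_1, \ell_2, v_1, v_3\}$ of $\HabcGraphFOUR{1}{0}{2}{0}$ must agree with the bipartition $\{p_3\} \mid \{p_1, p_2, p_4\}$ on $J$. I would then case-split on $\phi(c)$. In each sub-case, the edges $c\ell_1, c\ell_2, c v_1$ and the non-edges $cv_2, cv_3$ constrain the labels of $\ell_1, \ell_2, v_1, v_2, v_3$ to a small set, the edge $v_1 v_2$ then pins down $\phi(v_1)$, and finally the fact that $\ell_1, \ell_2$ and $v_3$ are three pairwise non-adjacent neighbours of only two of $c$'s or $v_2$'s possible partner classes forces some vertex in $\phi(c) \cup \phi(v_2)$ to have at least two non-adjacent partners in the single colour class that accommodates two of $\ell_1, \ell_2, v_3$; the tight bounds $L_N(p_3, p_2) = 1$ and $L_N(p_3, p_4) = 1$ then yield the required contradiction.

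The principal obstacle is pinning down the precise combination of $L$ and $L_N$ that simultaneously realises a simple path encoding, preserves $K_5$-freeness, and rules out \emph{every} possible labelling of the rather small bipartite six-vertex tree $\HabcGraphFOUR{1}{0}{2}{0}$: small relaxations of the interior $L_N$-bounds re-admit it, because of its considerable structural flexibility, while tightening any $L_N$-entry below~$1$ would break the simple path conditions. Once the encoding is fixed, verifying $K_5$-freeness is immediate from the bipartite structure of $(J, L)$, and the case analysis for $\HabcGraphFOUR{1}{0}{2}{0}$ proceeds mechanically, in the same style as in the proof of Theorem~\ref{thm:bipart:2P3:K1:free:iso:compl}.
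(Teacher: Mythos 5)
Your high-level strategy is the one the paper uses: choose a simple path encoding $(J,L,L_N)$ on four classes, check that neither $K_5$ nor $\HabcGraphFOUR{1}{0}{2}{0}$ can be encoded, and invoke Theorem~\ref{thm:simple:path:enc}. However, the specific candidate you propose does not work: with $(J,L)$ a star centred at $p_3$ and the $L_N$-values you give, the graph $\HabcGraphFOUR{1}{0}{2}{0}$ \emph{can} be encoded. Concretely, with $c$ the degree-$3$ centre, $\ell_1,\ell_2$ its two leaf-neighbours, and $c,v_1,v_2,v_3$ the tail of length $3$, take
\[
\phi(c)=p_3,\quad \phi(v_1)=p_1,\quad \phi(v_2)=p_3,\quad \phi(v_3)=p_2,\quad \phi(\ell_1)=p_2,\quad \phi(\ell_2)=p_4.
\]
All five edges $(c\ell_1,c\ell_2,cv_1,v_1v_2,v_2v_3)$ land on $A$-pairs of your star with no exception needed, and every non-edge of $\HabcGraphFOUR{1}{0}{2}{0}$ either lands on an $N$-pair (using $\{p_1,p_2\},\{p_1,p_4\},\{p_2,p_4\}$ and the self-pairs) or is one of the few allowed $A$-exceptions ($cv_3$, $\ell_1v_2$, $\ell_2v_2$), each of which respects the bounds $L_N(p_3,p_2)=L_N(p_2,p_3)=L_N(p_3,p_4)=1$, $L_N(p_4,p_3)=2$. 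So the class encoded by your triple is not contained in $\forbind{\HabcGraphFOUR{1}{0}{2}{0},K_5}$, and the reduction does not land inside the target class. The root problem is that $(J,L)$ is a tree and $\HabcGraphFOUR{1}{0}{2}{0}$ is a tree that folds onto a star, so the tree structure is too permissive; also, the encoded graphs are only $4$-partite, not bipartite, since $L_N(p_1,p_2)=\infty$ permits arbitrary edges between those classes.

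The paper instead takes $(J,L)$ to be the complete bipartite graph $K_{2,2}$ with parts $\{p_1,p_2\}$ and $\{p_3,p_4\}$, i.e.\ $L(v,v')=N$ iff $\{v,v'\}\in\{\{p_1,p_2\},\{p_3,p_4\}\}$ or $v=v'$, and the same $L_N$ as in Theorem~\ref{thm:bipart:2P3:K1:free:iso:compl}. The extra $A$-pairs $\{p_1,p_4\}$ and $\{p_2,p_4\}$, which carry $L_N$-value $0$, immediately kill the encoding above: for instance $\ell_1\mapsto p_2$ and $\ell_2\mapsto p_4$ are non-adjacent, contradicting $L_N(p_2,p_4)=0$ when $L(p_2,p_4)=A$. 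The paper's case analysis then works by first showing that the image of the edge $\{x_2,x_3\}$ (your $\{v_2,v_3\}$) must lie in $\{p_1,p_2\}$, $\{p_2,p_3\}$, or $\{p_3,p_4\}$, because there is an edge of $\HabcGraphFOUR{1}{0}{2}{0}$ disjoint and non-adjacent to $\{x_2,x_3\}$, and then ruling out each case. So your plan needs the denser base graph $K_{2,2}$ in place of the star; with the star the argument cannot be repaired by tuning $L_N$ alone, since I gave an explicit valid encoding.
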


%\proofatend
\begin{figure}[t]
\centering
\begin{tikzpicture}[scale=1,thick]
\node (v1) at (0,0) {};
\node (v2) at (2,0) {};
\node (v3) at (4,0) {};
\node (v4) at (6,0) {};

\draw (v1) .. controls (0,1.3) and (6,1.3) .. (v4);
\draw (v1) .. controls (0,0.8) and (4,0.8) .. (v3);
\draw (v2) .. controls (2,0.8) and (6,0.8) .. (v4);

\draw[dotted] (v1) -- (v2) -- (v3) -- (v4);
\path[dotted] (v1) edge node[midway,below] {$(\infty,N,1)$} (v2);
\path[dotted] (v2) edge node[midway,below] {$(1,A,1)$} (v3);
\path[dotted] (v3) edge node[midway,below] {$(1,N,2)$} (v4);

\node[vertex] at (v1)  {N};
\node[vertex] at (v2)  {N};
\node[vertex] at (v3)  {N};
\node[vertex] at (v4)  {N};
\draw (0,-0.8) node {$p_1$};
\draw (2,-0.8) node {$p_2$};
\draw (4,-0.8) node {$p_3$};
\draw (6,-0.8) node {$p_4$};
\end{tikzpicture}
\caption{An encoding that is suitable for bipartite $\forbindplain{2P_3 \dunion K_1}$ graphs.}
\label{fig:encoding:H1020:K5}
\end{figure}%

\begin{proof}
Let~$(\{p_1,p_2,p_3,p_4\}, L, L_N)$ be the encoding depicted in Figure~\ref{fig:encoding:H1020:K5} with~$L(v,v')= N$ if and only if~$\{v,v'\}= \{p_1,p_2\}$,~$\{v,v'\}= \{p_3,p_4\}$   or~$v= v'$, and with~$L_N$ defined as in the proof of Theorem~\ref{thm:bipart:2P3:K1:free:iso:compl}. Since graphs encoded by this encoding are~$4$-partite, it suffices to show that~$\HabcGraphFOUR{1}{0}{2}{0}$ cannot be encoded. Let~$h$ be the vertex of degree 3 in~$\HabcGraphFOUR{1}{0}{2}{0}$ and let~$x_1$,~$x_2$ and~$x_3$ be the vertices on the path of length~$4$ starting in~$h$. In any encoding, the vertices~$x_3$ and~$x_2$ are mapped to different elements of~$\{p_1,p_2,p_3,p_4\}$. Since there is an edge in~$\HabcGraphFOUR{1}{0}{2}{0}$ whose endpoints are non-adjacent to the endpoints of the edge~$\{x_2,x_3\}$ the possible images of~$\{x_2,x_3\}$ under a tentative encoding are~$\{p_1,p_2\}$,~$\{p_2,p_3\}$ and~$\{p_3,p_4\}$. If the set~$\{x_2,x_3\}$ is mapped to~$\{p_2,p_3\}$ then the~$3$ vertices that are neither adjacent to~$x_2$ nor~$x_3$ must all be mapped to~$p_2$ or~$p_3$. In any case, two of these vertices are mapped to the same vertex in~$\{p_2,p_3\}$ and there is a vertex in~$\{x_2,x_3\}$ non-adjacent to both of them and not mapped to the same vertex in~$\{p_2,p_3\}$ contradicting the encoding.
We now assume that~$\{x_2,x_3\}$ is mapped to~$\{p_1,p_2\}$. The case where the set is mapped to~$\{x_3,x_4\}$ is symmetric.
This implies that all vertices except possibly~$x_1$ are mapped to~$p_1$ or~$p_2$. Since vertices that are mapped to~$p_2$ can have at most one neighbor mapped to~$p_1$ we conclude that~$h$ is mapped to~$p_1$ and the leaves adjacent to~$h$ are mapped to~$p_2$. Since vertices mapped to~$p_3$ cannot have two non-neighbors mapped to~$p_2$ we conclude that~$x_1$ is mapped to~$p_2$. This implies that~$x_2$ is mapped to~$p_1$ which is a contradiction since~$x_1$ cannot have two neighbors mapped to~$p_1$.
\end{proof}
%\endproofatend
}

In the appendix in Section~\ref{sec:reductions:applied}, we apply the theorem to show that various classes  (namely the classes $\forbind{2K_2 \dunion K_1,K_4}$, $\forbind{P_6,P_4\dunion P_2, K_4}$,  $\forbind{\HabcGraphFOUR{1}{0}{3}{0}, K_4}$,
bipartite~$\forbind{2P_3 \dunion K_1}$,~$\forbind{\HabcGraphFOUR{1}{0}{2}{0},K_5}$) are isomorphism complete. This in turn implies that various classes~$\forbind{H_1, H_2}$ that are superclasses of one of these are isomorphism complete (such as~$\forbind{K_3,2P_3 \dunion K_1}$).
\ifhasappendix \else
The graph classes characterized by encodings as described in \ifhasappendix{Section~\ref{sec:reductions:applied} }\else{this section }\fi contain many classes useful to show isomorphism-completeness of various hereditary graph classes. \fi These results are sufficient for the purpose of this paper, proving Theorem~\ref{thm:main:theorem}. However, we remark that there are several completeness results for classes of the form~$\forbind{H_1,H_2}$ where further restrictions on the encoded graphs become necessary. For example when excluding the diamond
, it is beneficial to require that there is only one encoded path between vertices mapped to~$p_1$ and~$p_t$.
Furthermore, there are some reductions that do not seem to fit into the reduction scheme of this section, even with further restrictions. Among those are in particular the line graph reductions (see~\cite{DBLP:conf/wg/KratschS12}).

\section{Isomorphism, Invariants and Canonical Labeling}\label{sec:invars:col:val}
%\differingappendixstatement{}{\section{Proofs of Section~\ref{sec:invars:col:val}}}

\ifhasappendix \else We recall the concept of a complete invariant and the related concept of a canonical labeling.\fi
%\begin{definition}
A \emph{complete graph invariant} for a graph class~$\mathcal{C}$ is a map~$\ingredient \colon \mathcal{C} \rightarrow \mathcal{D}$ into some class~$\mathcal{D}$ such that for graphs~$G_1$ and~$G_2$ in~$\mathcal{C}$ we have~$\ingredient(G_1)= \ingredient(G_2)$ if and only if~$G_1$ and~$G_2$ are isomorphic.
A \emph{canonical labeling} is a map that assigns to every graph~$G$ a graph~$C(G)$ with vertex set~$V(C(G))= \{1,\ldots,|G|\}$ and an isomorphism~$\phi \colon G \rightarrow C(G)$ such that the map assigning~$C(G)$ to~$G$ is a complete invariant.
%\end{definition}
\ifhasappendix
\else

\fi
There are relatively general techniques with which one can turn a complete invariant into a canonical labeling algorithm~\ifhasappendix\cite{DBLP:journals/eatcs/Gurevich97,DBLP:conf/csr/KoblerV08}\else\cite{DBLP:journals/eatcs/Gurevich97,DBLP:books/ws/phaunRS01/Gurevich01c,DBLP:conf/csr/KoblerV08}\fi. 
\ifhasappendix
\else
These techniques are typically only applicable when dealing with invariants for graph classes that have additional properties, for example the ability to attach gadgets to vertices in order to encode information or having balanced separators of bounded size. However, in our situation we can employ colors to achieve this. 

\fi
\ifhasappendix \else Intuitively the technique finds the first vertex of the canonical labeling by individualizing each vertex one by one and choosing the vertex which yields lexicographically the smallest invariant. By repeating the process, having already individualized the first vertex, then individualizing each possible second vertex and applying the invariant, the second vertex is found. Iterating this procedure yields the canonical labeling.\fi 

In this paper, we are mainly interested in isomorphism algorithms, as opposed to canonical labeling algorithms or complete invariants. We will therefore require a tool to simulate an invariant within one execution of our algorithm, given only an algorithm that performs isomorphism checks. \ifhasappendix \else The key difference between having an invariant and an isomorphism algorithm is that the invariant must be consistent across different executions of an algorithm.\fi Our simulated invariant will not be consistent across different calls of the same algorithm.

\begin{theorem}\label{thm:iso:algo:replaces:inv}
%\ifhasappendix
Let~$\mathrm{A}$ be an polynomial time algorithm with access a complete invariant~$\mathcal{O}$ for a graph class~$\mathcal{C}$ given as oracle. Suppose the outputs of~$\mathrm{A}$ are independent of the choice of the invariant~$\mathcal{O}$. If isomorphism of graphs in~$\mathcal{C}$ can be solved in polynomial time then there is a  polynomial-time algorithm~$\mathrm{B}$ whose outputs coincide with those of~$\mathrm{A}$ which does not require access to an oracle.
\end{theorem}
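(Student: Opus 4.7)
The plan is to simulate the invariant oracle on the fly, using the polynomial-time isomorphism test for~$\mathcal{C}$ to guarantee consistency of the simulated values. Since~$\mathrm{A}$ runs in polynomial time, it issues only polynomially many oracle queries; denote them~$G_1,G_2,\ldots,G_m$ in the order in which~$\mathrm{A}$ would issue them. Algorithm~$\mathrm{B}$ executes~$\mathrm{A}$ step by step and maintains a list~$L$ of pairs~$(G_i, c_i)$ where~$c_i$ is the value answered for the~$i$th query. When~$\mathrm{A}$ requests~$\mathcal{O}(G_i)$, algorithm~$\mathrm{B}$ runs the polynomial-time isomorphism test between~$G_i$ and each previously stored~$G_j$. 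If some stored~$G_j$ is isomorphic to~$G_i$, then~$\mathrm{B}$ returns the stored value~$c_j$; otherwise~$\mathrm{B}$ chooses a fresh value~$c_i$ distinct from all values used so far (for instance~$c_i = i$) and appends~$(G_i, c_i)$ to~$L$.

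The partial function from graphs to values produced by this simulation agrees with some complete invariant on~$\mathcal{C}$ when restricted to~$\{G_1,\ldots,G_m\}$: by construction, two queried graphs receive the same value if and only if they are isomorphic. Hence the partial function can be extended to a full complete invariant~$\mathcal{O}'$ on~$\mathcal{C}$ by assigning, in any way, distinct fresh values to the remaining isomorphism classes. The crucial point is that~$\mathrm{A}$ executed on the original input with oracle access to~$\mathcal{O}'$ would produce exactly the same query sequence and the same internal behavior as the simulation performed by~$\mathrm{B}$, because at every step~$\mathrm{A}$'s state is a function of its input and the oracle responses, and these responses coincide. By the hypothesis that the output of~$\mathrm{A}$ is independent of the choice of the complete invariant~$\mathcal{O}$, the output of~$\mathrm{A}$ under~$\mathcal{O}'$ equals the output under any fixed~$\mathcal{O}$, so~$\mathrm{B}$ returns the correct answer.

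For the running time, we observe that~$\mathrm{A}$ issues at most polynomially many queries; each query requires at most polynomially many calls to the isomorphism test for~$\mathcal{C}$, and each such call runs in polynomial time by assumption. The remaining bookkeeping is a linear scan of the list~$L$, so the total running time of~$\mathrm{B}$ is polynomial.

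The main obstacle is interpreting the hypothesis that~$\mathrm{A}$'s output is \emph{independent of the choice of invariant}. For the construction to work, this assumption must entail that~$\mathrm{A}$ treats oracle responses only up to equality, i.e.\ that the entire execution trace of~$\mathrm{A}$ depends only on the equivalence relation which the oracle induces on its queries. This is precisely what is needed to justify the identification of~$\mathrm{B}$'s on-the-fly simulation with a genuine execution of~$\mathrm{A}$ under the extended invariant~$\mathcal{O}'$; once this identification is established, the argument proceeds cleanly.
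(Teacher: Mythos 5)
Your proposal is correct and takes essentially the same approach as the paper: maintain a list of representatives of previously queried isomorphism classes, test each new query against the list using the polynomial-time isomorphism algorithm, and return a consistent value. The paper's version simply returns the stored representative graph itself as the invariant value rather than a fresh integer, and it does not spell out the argument that the simulated responses extend to an actual complete invariant; your additional care on that point, and on why independence of the choice of invariant licenses the identification of the simulated run with a genuine run, is a legitimate and useful elaboration of the same proof rather than a different route.
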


%\proofatend
\begin{proof}
We simulate an algorithm that has access to a complete invariant as follows. We maintain a list of graphs that contains the first graph of each isomorphism class which has been used as input to the oracle so far. Whenever a new oracle call is made with a graph~$G$, we check, using~$Iso$, whether~$G$ is isomorphic to a graph that was already  used as input to the oracle. If so, the returned invariant is the graph in the database isomorphic to~$G$. Otherwise~$G$ is added to the list and returned as invariant. 
Concerning the running time, checking whether a graph already has an isomorphic copy in the list requires a number of calls to~$Iso$ equal to the number of graphs already added to the list. The total number of calls to~$Iso$ is thus at most quadratic in the number of oracle calls.
\end{proof}
%\endproofatend

We will use the theorem to replace an invariant with a isomorphism algorithm in the next section when dealing with modular decompositions (more precisely to prove Theorem~\ref{thm:iso:easy:when:prime:graphs:easy}).

\ifhasappendix{}\else 
The reason why the technique described in the proof of~\ref{thm:iso:algo:replaces:inv} not viable in practical applications of isomorphism algorithms is two-fold. Firstly, the algorithm suffers from a polynomial increase in particular in the space required and secondly, the technique makes the algorithm inherently sequential, whereas the advantage of canonization is its parallelizability. Another way of stating this is that the values of the invariant are not compatible across different executions of the algorithm.
\fi
\section{Colored modular decomposition}\label{sec:modular:decomp}
\differingappendixstatement{}{\section{Proofs of Section~\ref{sec:modular:decomp}}}

In this section we are concerned with modular decompositions and their application to the isomorphism problem. We will work with colored graphs since this is convenient in the graph isomorphism context. However, we will also generalize the concept of a module to that of a colored module, since this is required by our later applications. Since we do not require previous knowledge about the uncolored modular decomposition, we will not define it. We refer the reader to the survey by Habib and Paul~\cite{DBLP:journals/csr/HabibP10} for more information on the uncolored decomposition and its algorithmic applications. For the colors, we will assume that there is a linear order on the colors. In algorithmic applications such a linear order can always be obtained by comparing the bit-strings corresponding to the colors lexicographically.

\begin{definition}
Let~$G$ be a colored graph. A \emph{colored module} is a subset~$M$ of~$V(G)$ such that for all~$v\in V(G)\setminus M$,  if~$x,x'\in M$ are of the same color then either~$v$ is adjacent to both vertices~$x$ and~$x'$ or to neither~$x$ nor~$x'$.
\end{definition}

A module~$M$ is \emph{non-trivial} if it contains at least two vertices that cannot be distinguished by vertices outside of the module. That is,~$M$ is non-trivial if there are~$x,y\in M$ such that for all~$v\in V(G)\setminus M$ the vertex~$v$ is adjacent to~$x$ if and only if it is adjacent to~$y$. Note that every module that contains two vertices of the same color is non-trivial.

\begin{definition}
A map assigning to every graph~$G$ a subset of the vertices~$M(G)$ is said to be \emph{isomorphism invariant} if for every isomorphism~$\phi \colon G \rightarrow G'$ we have~$\phi(M(G)) = M(G')$. 
\ifhasappendix\else 

\fi A map that assigns to every graph~$G$ a partition of a subset of the vertices~$M(G) = \{M_1,\ldots,M_k\}$ is said to be \emph{isomorphism invariant} 
if for every isomorphism~$\phi \colon G \rightarrow G'$ we have~$M(G') = \{\phi(M'_1),\ldots,\phi(M'_{k'})\}$.
\end{definition}

\begin{definition}
A \emph{decomposition functor} is a ``map'' assigning to every graph~$G$ a partition of a subset of the vertices into modules which is isomorphism invariant.
\end{definition}
\ifhasappendix{\marginpar{\scriptsize{\vspace{-1cm}{\color{gray}{See Appendix~\ref{temoninlogy:sec} for the choice of terminology}}}}}\fi 

\hereorinappendixstatement{\label{temoninlogy:sec}
We elaborate briefly on the choice of terminology ``decomposition functor''. A decomposition functor can be seen as a covariant functor in the sense that every homomorphism between graphs induces a homomorphism between the subgraphs induced by the modules. Moreover it induces a homomorphism between the quotient graphs defined below. Similarly, color refinements can be seen as functors (see~\cite[Page 22]{SchweitzerThesis} for more information on this).
Note that a decomposition functor assigns to every graph a decomposition. It is thus not simply a decomposition of just a single graph. This is what the word functor expresses, rather than talking about just having a decomposition for graphs.
The crux is that these decompositions must be consistent across isomorphism classes. 

Technically a decomposition functor is not a set-theoretic map since the class of all graphs is not a set.  However, these category theoretic hurdles will not be important in the rest of this paper. In any case it does not seem that the word ``invariant decomposition map'' would express the concept really well.

}

A graph~$G$ is \emph{prime} with respect to a decomposition functor~$\ModDecFunc$ if~$\ModDecFunc(G)$ does not contain non-trivial modules. While there is a standard decomposition functor for the uncolored case, for the colored case it is in general not clear whether we can find a useful functor to decompose the graphs, and we have to find such a functor for a given graph class first, in order to decompose the graphs.

\ifhasappendix{\marginpar{\scriptsize{\vspace{-1cm}{\color{gray}{The appendix contains an example illustrating that this allows us to decompose graphs that are prime with respect to classical modular decomposition.}}}}}\fi
\hereorinappendixstatement{ We illustrate that colored modules allow us to decompose graphs that are prime as uncolored graphs. Consider a connected, bipartite graph with bipartition classes~$A$ and~$B$ of different colors. Let~$A'\subset A$ and~$B'\subset B$ be sets such that~$A'\cup B'$ induces a matching between the vertices in~$A'$ and the vertices in~$B'$ and such that all vertices of~$A'$ are adjacent to all vertices of~$B\setminus B'$ and all vertices of~$B'$ are adjacent to all vertices of~$A\setminus A'$. Then~$A'\cup B'$ forms a colored module, but no vertex of~$A' \cup B'$ is contained in a non-trivial classical (uncolored) module. 
}

In the remainder of this section, we argue for certain decomposition functors that graph isomorphism for a hereditary graph class can be solved in polynomial time if the isomorphism problem for colored prime graphs in the class can be solved in polynomial time. To facilitate the proof we can assume that we are given a complete invariant for the prime graphs in the hereditary graph class and then apply Theorem~\ref{thm:iso:algo:replaces:inv}.

Our next goal is to define the concept of the quotient graph. In the uncolored case, the quotient graph is obtained by replacing each module with a single vertex whose adjacency to the rest of the graph is the same as that of every vertex of the module. However, for the colored case, the adjacency to the rest of the graph depends on the color of the vertex in the module. This means that for every adjacency type we need to retain a vertex that has the same adjacency type with respect to vertices outside the module.

A \emph{replacement operator} is an isomorphism invariant map that assigns every non-trivial module~$M$ in a decomposition of a graph~$G$ an induced subgraph of~$M$ in which the vertices are possibly recolored. We require that for every adjacency type of vertices in~$M$ at least one vertex of~$M$ is maintained.
 Let~$\ingredient(M)$ be a complete graph invariant. Given a family of modules~$\{M_1,\ldots,M_k\}$ that partitions the graph, the \emph{quotient graph} is obtained by simultaneously replacing all modules using the replacement operator and then recoloring every vertex~$v$ as the triple~$(\chi(v), L,\ingredient(M_v))$, where~$\chi(v)$ is the color of~$v$ after the replacement,~$L$ is a list of the colors of vertices with the same adjacency type as~$v$, and~$\ingredient(M_v)$ is the invariant of the module containing~$v$.\ifhasappendix{
 \marginpar{\scriptsize{\vspace{-1cm}{\color{gray}{ Intuitively the goal of this coloring is to provide information how to reconstruct the original graph. However, it is not necessarily possible to retrace which vertices belong together, i.e., which vertices originated from the same module. This is resolved in the lemma below}}}}%
 \else 
 Intuitively the goal of this coloring is to provide information how to reconstruct the original graph. However \fi
 \ifhasappendix%
 \else
 
 \fi
 We say that the decomposition functor is \emph{simple} with respect to a replacement operator if for every complete invariant every quotient graph is prime. Intuitively this means that the decomposition functor provides us with maximal modules.

\ifhasappendix{
\marginpar{\scriptsize{\vspace{-1cm}{\color{gray}{\pascalsout{A more formal definition can be found in the appendix in Definition~\ref{def:quotient:graph}.}}}}
}

\fi

\begin{definition}Given a decomposition functor, we say a replacement operator is reversible if the following holds: two graphs~$G_1$ and~$G_2$ are isomorphic if and only if their colored quotient graphs with respect to the decomposition functor and the replacement operator are isomorphic.\end{definition}
Note that reversibility does not depend on the complete invariant that is used for the recoloring of the quotient graph.

%\endproofatend 
 We remark that for uncolored graphs the definitions of module,  primality and the quotient graph coincide with the usual definition from the literature~(see~\cite{DBLP:journals/csr/HabibP10}). 
In  that context, the decomposition functor is typically chosen to partition the graph into components, components of the complement graph or maximal modules. The replacement operator simply replaces the entire module by one vertex. However, for the applications we have in mind, we require the more general concept of colored modules. 
There are certain situations where is it immediate that a replacement operator is reversible. 

\begin{lemma}
\ifhasappendix
A replacement operator is reversible if 
1.) There is only one trivial module, 2.)
all replacements contain only one vertex, or 3.)
non-trivial modules induce connected graphs, but the non-trivial modules are pairwise non-adjacent.
\else
A replacement operator is reversible in the following situations.
\begin{enumerate}
\item There is only one trivial module,
\item all replacements contain only one vertex, or
\item non-trivial modules induce connected graphs, but the non-trivial modules are pairwise non-adjacent.
\end{enumerate}
\fi
\end{lemma}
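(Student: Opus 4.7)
The forward direction, that $G_1\cong G_2$ implies $Q(G_1)\cong Q(G_2)$, is immediate in all three situations from the isomorphism-invariance of the decomposition functor, the replacement operator, and the complete invariant $\ingredient$ used in the recoloring: any isomorphism $\psi\colon G_1\to G_2$ sends matched modules to matched modules with equal invariants and restricts to a color-preserving isomorphism of the quotient graphs. The plan for the converse is uniform. Given a color-preserving isomorphism $\phi\colon Q(G_1)\to Q(G_2)$, build $\Phi\colon G_1\to G_2$ in two stages. First, recover from $Q(G_i)$ alone the partition of $V(Q(G_i))$ into the images-under-replacement of the modules of $G_i$ and show that $\phi$ carries the partition for $G_1$ bijectively onto that for $G_2$, matching modules of equal invariant. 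Second, for each matched pair $(M,M')$ extend $\phi$ by a color-preserving isomorphism $\alpha_M\colon G_1[M]\to G_2[M']$, which exists because $\ingredient$ is a complete invariant, and set $\Phi=\bigsqcup_M\alpha_M$. The map $\Phi$ is automatically an isomorphism: adjacency inside any single module is preserved by $\alpha_M$, and adjacency between two different modules is, by the defining property of a colored module, determined entirely by the endpoint colors and faithfully recorded in $Q(G_i)$, because the replacement retains at least one representative of each adjacency type.

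The real work is the first stage, and this is where the three situations diverge. In situation~2 each module contributes exactly one vertex to $Q(G_i)$, so the partition is the partition into singletons and $\phi$ matches modules one-for-one, with equal invariants enforced by color preservation. In situation~3 the pairwise non-adjacency of non-trivial modules of $G_i$ lifts directly to $Q(G_i)$, so the induced subgraph of $Q(G_i)$ on vertices of non-trivial-module origin has no edges between the replacements of distinct non-trivial modules; hence each non-trivial module is recovered as a union of connected components of this subgraph, and the remaining vertices, originating from trivial modules, are by definition either singletons or pairwise distinguishable from outside and are therefore recovered individually. In situation~1 the unique trivial module is singled out by the uniqueness of its invariant-coordinate in the color triple, and all other vertices of $Q(G_i)$ originate from non-trivial modules; grouping them by the $\ingredient(M_v)$-coordinate of their color recovers the partition up to permutation among non-trivial modules with identical invariants, which is harmless because such modules are pairwise isomorphic and hence interchangeable in the gluing~$\Phi$.

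I expect the main obstacle to lie in situation~3: one must verify that the replacement of a single non-trivial module is not split across several incomparable pieces of $Q(G_i)$ that also appear inside the replacements of other modules. The pairwise non-adjacency assumption rules out edges between the replacements of distinct modules, so the pieces of distinct modules are cleanly separated; if on top of this the replacement happens to disconnect a module, the resulting components all share the same external neighborhood pattern (since all vertices of a module share such a pattern up to color), which together with the $\ingredient$-coordinate of the coloring suffices to determine an admissible grouping. Once the module partition has been recovered and matched by $\phi$ in each of the three situations, the color-preserving internal isomorphisms $\alpha_M$ produced from equality of invariants glue together to give the desired isomorphism $\Phi\colon G_1\to G_2$.
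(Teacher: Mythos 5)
Your overall strategy is the same as the paper's: reduce reversibility to the task of reconstructing, from the quotient alone, which vertices of $Q(G_i)$ originated from the same module, and then glue module-internal isomorphisms (which exist because of the complete invariant) with the map on the quotient. Your treatments of situation~2 (singleton replacements, hence trivial partition) and situation~3 (use pairwise non-adjacency to read off the replacement sets as connected pieces, additionally keyed by the $\ingredient(M_v)$-coordinate) match the paper's argument almost verbatim, so for those two cases your proof is essentially the paper's.

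The problem is situation~1. Your argument there is to single out the unique trivial module by its $\ingredient$-coordinate and then group the remaining vertices by their $\ingredient(M_v)$-values, conceding that this merges replacements of non-trivial modules with equal invariants and asserting that this is ``harmless because such modules are pairwise isomorphic and hence interchangeable.'' That assertion is exactly the gap. If two distinct non-trivial modules $M_1,M_2$ have the same invariant \emph{and are adjacent} (which situation~1, unlike situation~3, does not forbid), their replacements $M_1',M_2'$ land in a single $\ingredient$-level set of $Q(G_i)$, and there is no way to recover the split $M_1'\sqcup M_2'$ from $Q(G_i)$ alone. It is not automatic that any admissible re-partitioning of the merged set, followed by blow-up, produces a graph isomorphic to $G_i$: the edges already present inside the merged set constrain which re-partitions are admissible, and the blow-up adds internal edges whose placement depends on which vertices you declared to belong together. ``Pairwise isomorphic'' gives you freedom to permute whole replacement sets, not to re-cut them. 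A second, smaller issue: your claim that the trivial module is ``singled out by the uniqueness of its invariant-coordinate'' is unsupported; nothing in the definitions prevents a non-trivial module from having the same invariant as the trivial one. The paper declares situation~1 ``obvious,'' which strongly suggests its intended reading makes reconstruction immediate (for instance, only one module needs replacing, so the lone replacement set is simply everything outside the intact part of the quotient); your reading leads you to an argument that genuinely needs the extra work you wave away.
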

\proofatend 
To reconstruct the original graph we need to replace each set~$M'$ that is a replacement for module~$M$ by the original module~$M$. We also need to determine the adjacency of the vertices in~$M$ to vertices outside of~$M$. This adjacency information is encoded in the coloring of the vertices of~$M'$ by definition. However, it is not possible for all replacements to determine the sets~$M'$ that are replacements of modules. 

However, in each of the three cases in the lemma, it is possible to reconstruct the sets of vertices~$M'$ that originated from the same module. This is obvious for the first two cases. In the third case the sets~$M'$ originating form non-trivial modules are the maximal connected subgraphs consisting of vertices whose third entry in the color, which is~$\ingredient(M_v)$ by definition of the quotient graph, is the same.

The reason is that in each of these cases it is possible to reconstruct which replacement vertices originate from the same module.
\endproofatend
%}

A reversible simple decomposition functor can be used to test isomorphism by considering only isomorphisms between quotient graphs and modules. Iterating this\ifhasappendix{ yields }\else, we will now develop an algorithm that performs\fi an isomorphism test for decomposable graphs if one has access to a complete invariant for prime graphs. %
\hereorinappendixstatement{
\begin{algorithm}[t]
\caption{$\ingredient$: A complete invariant for a hereditary class~$\mathcal{C}$ with decomposition functor.}
\label{algo:compl:invariant:moddecomp}
\begin{algorithmic}[1]
\REQUIRE $(G, \ModDecFunc, R,\ingredientprime)$: A graph~$G$ in~$\mathcal{C}$, a simple decomposition functor~$\ModDecFunc$, a reversible replacement operator~$R$, and an algorithm~$\ingredientprime$ that computes a complete invariant for graphs in~$\mathcal{C}$ which are colored and prime with respect to~$\ModDecFunc$.
\ENSURE  
$\ingredient(G)$: The value of~$G$ with respect to a complete invariant~$\ingredient$ for~$\mathcal{C}$.
\ENSUREGAP
\STATE $Q\leftarrow G$
\IF {$Q$ is not prime}
	\STATE Compute $\ModDecFunc(G)$.
	\FORALL {non-trivial $M\in \ModDecFunc(G)$}
		\IF {$M$ is prime}
			\STATE Compute~$\ingredientprime(M)$.
		\ELSE
			\STATE \label{line:recurive:call} Recursively compute~$\ingredient(M, \ModDecFunc, \ingredientprime)$.      	
		\ENDIF
	\ENDFOR
	\STATE\label{line:quotient:graph} Compute the quotient graph~$Q'$ of~$G$ with respect to~$\ModDecFunc$,~$R$, and the computed invariants~$\ingredientprime(M)$. 
	\STATE $Q \leftarrow Q'$
\ENDIF
\RETURN $\ingredientprime(Q)$ 
\end{algorithmic}
\end{algorithm}

\emph{(Description of Algorithm~\ref{algo:compl:invariant:moddecomp})} Given a reversible, simple decomposition functor~$\ModDecFunc$ and an algorithm~$\ingredientprime$ that computes a complete invariant for graphs in a graph class~$\mathcal{C}$ that are prime with respect to~$\ModDecFunc$, Algorithm~\ref{algo:compl:invariant:moddecomp} constitutes a complete invariant for graphs in~$\mathcal{C}$. The algorithm  applies the functor~$\ModDecFunc$. If this application yields non-trivial modules, the algorithm recursively computes the invariant of the modules and forms the quotient graph.   The invariant of the input is the invariant of this prime graph. 
}%
\hereorinappendixstatement{
\begin{lemma}\label{lem:iso:algo:given:invariant}
Let~$\mathcal{C}$ be a hereditary graph class and~$\ModDecFunc$ a simple polynomial time computable decomposition functor for~$\mathcal{C}$ with polynomial-time computable reversible replacement operator~$R$. Given a complete invariant~$\ingredientprime$ for colored prime graphs in~$\mathcal{C}$, Algorithm~\ref{algo:compl:invariant:moddecomp} constitutes a complete invariant for all graphs in~$\mathcal{C}$. If the total running time for query calls to~$\ingredientprime$ is polynomially bounded, then Algorithm~\ref{algo:compl:invariant:moddecomp} runs in polynomial time.
\end{lemma}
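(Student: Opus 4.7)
The plan is to prove completeness of the returned invariant and polynomial running time simultaneously, by strong induction on~$|V(G)|$, leveraging that~$\mathcal{C}$ is hereditary so every recursively encountered module again lies in~$\mathcal{C}$. In the base case~$G$ is prime with respect to~$\ModDecFunc$, and the algorithm returns~$\ingredientprime(G)$, which is a complete invariant by hypothesis. For the inductive step, one notes that the non-trivial modules appearing in~$\ModDecFunc(G)$ are proper subsets of~$V(G)$ (otherwise simplicity of~$\ModDecFunc$ would already force~$G$ to be prime), so the inductive hypothesis furnishes complete invariants on each of them.

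For isomorphic inputs~$G \cong G'$, isomorphism invariance of~$\ModDecFunc$ matches non-trivial modules in pairs; the inductive hypothesis then assigns matched modules equal invariant values, and the isomorphism invariance of~$R$ together with the isomorphism-invariant recoloring by the triple~$(\chi(v),L,\ingredient(M_v))$ yields isomorphic colored quotient graphs. Simplicity of~$\ModDecFunc$ makes these quotients prime, so~$\ingredientprime$ assigns them the same value. Conversely, equality of the algorithm's outputs means~$\ingredientprime(Q)=\ingredientprime(Q')$ on the prime colored quotient graphs, which, by completeness of~$\ingredientprime$ on prime colored graphs in~$\mathcal{C}$, forces~$Q \cong Q'$; reversibility of~$R$ -- which, by the remark after its definition, does not depend on the particular complete invariant used for recoloring -- then lifts this to~$G \cong G'$.

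For the running time, the non-trivial modules appearing in a single invocation of~$\ModDecFunc(G)$ are pairwise disjoint proper subsets of~$V(G)$; hence at any fixed recursion depth the total vertex count across invocations is at most~$n$, and the depth is at most~$n$, bounding the number of recursive calls by~$O(n^2)$. Each call does polynomial local work for~$\ModDecFunc$, for~$R$, and for forming and recoloring the quotient graph, while the total cost of~$\ingredientprime$-queries is polynomial by assumption. The delicate part of the argument is the converse direction of correctness: one must check that the invariant-based recoloring really encodes enough information for reversibility of~$R$ to apply, which goes through exactly because the recursively computed invariants are complete on the modules by induction, so the quotient graph produced by the algorithm is precisely the object to which the definition of reversibility applies.
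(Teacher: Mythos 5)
Your proof follows the paper's argument in its essentials: induction on $|V(G)|$ using heredity of $\mathcal{C}$; for the forward direction, isomorphism invariance of $\ModDecFunc$, of the replacement operator, and of the recoloring (together with equality of the recursively computed module invariants) to conclude $Q\cong Q'$, with simplicity supplying primality of the quotients so that $\ingredientprime$ applies; for the converse, completeness of $\ingredientprime$ on prime colored graphs to recover $Q\cong Q'$ and then reversibility of $R$ to recover $G\cong G'$. You helpfully make explicit two points the paper leaves implicit: that reversibility is independent of the complete invariant used for the recoloring (so the quotient the algorithm actually builds is one to which the definition applies), and that the modules remain in $\mathcal{C}$ by heredity so the inductive hypothesis is available.

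Your running-time accounting diverges from the paper's and is in fact a bit cleaner and tighter. The paper bounds the number of leaves of the recursion tree by $\binom{n}{2}$, assigning to each leaf a pair of vertices that are indistinguishable inside its module and arguing the assignment is injective, and then multiplies by the height, giving $O(n^3)$ nodes. You instead observe that the modules produced by a single call are pairwise disjoint, so the total vertex count at any fixed recursion depth is at most $n$; since each non-trivial module has at least two vertices and the depth is at most $n$, this caps the number of recursive calls at $O(n^2)$ directly.

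One small caution about a side remark. Your parenthetical — that simplicity of $\ModDecFunc$ forces every non-trivial module in $\ModDecFunc(G)$ to be a \emph{proper} subset of $V(G)$ — is not a consequence of the definitions. Simplicity only guarantees that the quotient graph is prime; a decomposition functor returning $\{V(G)\}$ is not excluded (its quotient $R(V(G))$ could perfectly well be prime while $G$ itself is not), and with such a functor both termination and your depth bound would fail. The paper asserts the same shrinking property without justification, so the gap is shared, but you should be aware that the justification you supply for it is a non sequitur; what is actually being assumed is that $\ModDecFunc$ produces genuine decompositions into strictly smaller modules.
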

 
\begin{proof}
%\proofatend
\emph{(Correctness)} First note that whenever a recursive call is performed in Line~\ref{line:recurive:call}, the module~$M$, which is the new input graph, has less vertices than the original graph. Thus the algorithm terminates. Suppose~$G_1$ and~$G_2$ are isomorphic graphs to which the algorithm is applied. By definition, the decomposition functor is invariant, and thus, by induction, the recursive calls compute the same invariant for corresponding modules. Since decomposition functors are isomorphism invariant, 
the quotient graphs are isomorphic. These observations 
 show that the invariants computed for~$G_1$ and~$G_2$ are the same. Suppose now that two non-isomorphic graphs~$G_1$ and~$G_2$ are given to Algorithm~\ref{algo:compl:invariant:moddecomp}.
 By induction we can assume that the invariants for the modules form a complete invariant for the modules. Therefore, the correctness of the algorithm follows from the assumption that~$R$ is reversible.

\emph{(Running time)} We show that, given a complete invariant for prime graphs  that runs in polynomial time, Algorithm~\ref{algo:compl:invariant:moddecomp} also runs in polynomial time. Consider the recursion tree of a call to Algorithm~\ref{algo:compl:invariant:moddecomp} for a graph~$G$ of size~$n$. Since all steps can be performed in polynomial time, it suffices to polynomially bound the number of nodes in the tree. The number of vertices for each node is smaller than the number of vertices of the parent. This bounds the height of the recursion tree, and thus it suffices for us to bound the number of leaves of the tree. With the possible exception of the root, for each recursive call corresponding to a leaf, the input graph is a non-trivial module~$M$ of some coloring of~$G$ and thus in particular contains two indistinguishable vertices. Since at least one of these vertices is not present in any quotient graph that has a module that contains~$M$, this pair of vertices cannot be a pair of indistinguishable vertices in any other call corresponding to a leaf. The number of leaves is thus bounded by the number of pairs of vertices.
\end{proof}
%\endproofatend
}%
Using Theorem~\ref{thm:iso:algo:replaces:inv} we can replace the requirement for a complete invariant by an isomorphism algorithm.

\begin{theorem}\label{thm:iso:easy:when:prime:graphs:easy}
Let~$\mathcal{C}$ be a hereditary graph class and~$\ModDecFunc$ a simple polynomial time computable decomposition functor with polynomial-time computable, reversible replacement operator~$R$ for colored graphs in~$\mathcal{C}$. If the isomorphism problem for colored prime graphs in~$\mathcal{C}$ can be solved in polynomial time then the isomorphism problem of all graphs in~$\mathcal{C}$ can be solved in polynomial time. 
\end{theorem}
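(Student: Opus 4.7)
The plan is to combine Lemma~\ref{lem:iso:algo:given:invariant} with Theorem~\ref{thm:iso:algo:replaces:inv}. Lemma~\ref{lem:iso:algo:given:invariant} already gives us what we want, but only under the stronger hypothesis that a complete \emph{invariant} $\ingredientprime$ for colored prime graphs in $\mathcal{C}$ is available. Our hypothesis is weaker: we only have an isomorphism test for colored prime graphs. Theorem~\ref{thm:iso:algo:replaces:inv} is tailored to exactly this situation, so the strategy is to view Algorithm~\ref{algo:compl:invariant:moddecomp} as an oracle machine that queries $\ingredientprime$ and then invoke Theorem~\ref{thm:iso:algo:replaces:inv} to replace that oracle with the isomorphism algorithm we are actually given.

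First, I would formalize the reduction. By Lemma~\ref{lem:iso:algo:given:invariant}, Algorithm~\ref{algo:compl:invariant:moddecomp} with access to an oracle $\ingredientprime$ computing a complete invariant for colored prime graphs in $\mathcal{C}$ produces, in polynomial time, a complete invariant $\ingredient$ for every graph in $\mathcal{C}$. Composing two copies of this algorithm and returning the boolean ``$\ingredient(G_1) = \ingredient(G_2)$'' yields an oracle algorithm $\mathrm{A}$ that, for any two input graphs $G_1,G_2\in\mathcal{C}$, decides isomorphism in polynomial time using oracle calls to $\ingredientprime$. All oracle queries are on colored prime graphs in $\mathcal{C}$, since they are invoked either on modules that happen to be prime or on quotient graphs, which are prime because $\ModDecFunc$ is simple.

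The main step I expect to require care is the verification of the hypothesis of Theorem~\ref{thm:iso:algo:replaces:inv} that the outputs of $\mathrm{A}$ are independent of the particular complete invariant $\ingredientprime$ chosen. This is the crucial consistency check: although the intermediate values $\ingredient(M)$ and the coloring of vertices in the quotient graph in Line~\ref{line:quotient:graph} depend on $\ingredientprime$, the final yes/no answer must not. The argument here is that Lemma~\ref{lem:iso:algo:given:invariant} holds for any complete invariant of colored prime graphs; hence for \emph{any} choice of $\ingredientprime$ the computed $\ingredient$ is a complete invariant of $\mathcal{C}$, and in particular $\ingredient(G_1)=\ingredient(G_2)$ holds iff $G_1\cong G_2$. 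So the boolean output of $\mathrm{A}$ is $\ingredientprime$-independent even though individual invariant values are not. A secondary point to check is that, despite the self-referential flavour (the simulated invariant built by Theorem~\ref{thm:iso:algo:replaces:inv} is itself used by the outer algorithm), there is no circularity: Theorem~\ref{thm:iso:algo:replaces:inv} only requires the existence of a polynomial-time isomorphism test on the oracle's input class, which by hypothesis exists for colored prime graphs in $\mathcal{C}$.

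Finally, Theorem~\ref{thm:iso:algo:replaces:inv} produces a polynomial-time algorithm $\mathrm{B}$ with the same outputs as $\mathrm{A}$ that makes no oracle calls. This algorithm $\mathrm{B}$ solves the isomorphism problem for all graphs in $\mathcal{C}$ in polynomial time, which is exactly the conclusion of the theorem. The running time remains polynomial because Algorithm~\ref{algo:compl:invariant:moddecomp} issues polynomially many oracle queries on graphs of total size polynomial in $n$, so the simulation by Theorem~\ref{thm:iso:algo:replaces:inv} incurs only a polynomial overhead on top of the polynomial-time isomorphism test on colored prime graphs.
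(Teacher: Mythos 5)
Your proof takes the same route as the paper's: combine Lemma~\ref{lem:iso:algo:given:invariant} with Theorem~\ref{thm:iso:algo:replaces:inv}, checking that the oracle algorithm's output is independent of the choice of invariant so that Theorem~\ref{thm:iso:algo:replaces:inv} applies, and noting that the oracle queries are made only on colored prime graphs. Your verification that the boolean output of $\mathrm{A}$ is $\ingredientprime$-independent and your observation that there is no circularity are both correct and worth making explicit.

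However, there is a gap in the running-time argument. You assert that Algorithm~\ref{algo:compl:invariant:moddecomp} ``issues polynomially many oracle queries on graphs of total size polynomial in $n$,'' but this is not automatic. The quotient graph at Line~\ref{line:quotient:graph} recolors vertices with triples $(\chi(v),L,\ingredient(M_v))$, and the simulated invariant from Theorem~\ref{thm:iso:algo:replaces:inv} returns an entire graph as the invariant value. Because these colorings are nested through the recursion, the bit-length of the color encodings can grow super-polynomially even though the number of vertices and the number of oracle calls stay polynomial. The paper's proof explicitly addresses this by hashing colors: maintain a numbered list of colors encountered so far and replace each color by its index before it is used in an oracle query, so that only short descriptions ever appear and the overall running time stays polynomial. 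Without this step, the claim ``graphs of total size polynomial in $n$'' is unjustified, so you should add the color-hashing argument to close the gap.
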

%\begin{proof}
\proofatend
The theorem follows by combining Lemma~\ref{lem:iso:algo:given:invariant} and Theorem~\ref{thm:iso:algo:replaces:inv}.
However, there is an issue with the length of the encoding of colors. In particular, if the calculated invariants are repeatedly applied to recolor the graphs then, even with a polynomial number of iterations, the encoding of the colors may become super-polynomial. To avoid this, we apply the standard technique of hashing colors. More precisely, we maintain a numbered list of colors. Every time a new color is encountered, the next unused number is assigned to this color. When an oracle call is performed by Algorithm~\ref{algo:compl:invariant:moddecomp}, and consequentially according to Theorem~\ref{thm:iso:algo:replaces:inv} the algorithm for prime graphs is invoked to simulate this oracle call, then the graph will be given in encoded form with short descriptions for the colors. Since only polynomially many colors can appear during the execution of the algorithm, this shows that the entire algorithm runs in polynomial time.
%\end{proof}
\endproofatend

The theorem in particular applies to the standard uncolored modular decomposition. This decomposition is associated with a simple decomposition functor with polynomial-time computable, reversible replacement. In his diploma thesis, Fuhlbr\"{u}ck~\cite{fuhlbrueck} also describes a reduction for the standard uncolored decomposition factor. Already when trying to solve the isomorphism problem for uncolored graphs in a graph class~$\mathcal{C}$ with the described method, the fact that the quotient graph needs to be colored implies that we require an isomorphism algorithm for colored prime graphs. 
In~\cite{rao}, Rao describes a special case of the theorem for the uncolored modular decomposition, essentially considering hereditary graph classes in which every prime graph is 
of bounded size. 
\ifhasappendix{}\else In principle the algorithm described in Rao's paper is a special case of the algorithm described here, but this is disguised slightly by the fact that in Rao's description the algorithm simultaneously performs the computation of the (uncolored) modular decomposition, the invariant for the possible prime graphs, and the hashing of colors.\fi Moreover, in the isomorphism context, the bi-join decomposition, also described in~\cite{rao}, can also be treated by using colored modules. 
\ifhasappendix\else In different terminology, the graph classes considered by Rao are sometimes called graphs of bounded composition-width or modular width.
However, we caution the reader that there are two unrelated definitions of composition width for groups and graphs, respectively, both appearing in the context of isomorphism (see~\cite{BabaiLuks}).\fi 

\ifhasappendix
\else
Together with the observation that vertices of degree~1 can be removed in an isomorphism invariant fashion by recoloring their neighbors, the previous theorem also provides means of solving isomorphism of distance hereditary graphs. Indeed, by a theorem of 
Bandelt and Mulder~\cite{Bandelt1986182} every prime distance hereditary graph has a vertex of degree at most~1. However, there is a faster, tailored algorithm for distance hereditary graphs running in linear time~\cite{uehara}.
\fi

In our applications of the Theorem~\ref{thm:iso:easy:when:prime:graphs:easy} we also use another unrelated technique of dealing with small color classes that we describe next.

\section{Bounded generalized color valence}\label{sec:gen:col:val}
\differingappendixstatement{}{\section{Proofs of Section~\ref{sec:gen:col:val}}}

In this section we show that isomorphism of graphs of bounded generalized color valence can be solved in polynomial time. However, the proofs of the lemmas and theorems within the section require familiarity with the computational group theoretic methods that have been developed within the context of the isomorphism problem. For example, when computing automorphism groups, cosets, and sets of isomorphisms between two combinatorial objects, we use a succinct representation by generators and a representative.
We also use the set stabilizer theorem~\cite{Luks:1982} for groups with composition factors of bounded size~(see also~\cite{DBLP:conf/fsttcs/ArvindDKT10,BabaiLuks,Miller2}), which implies that for a permutation group with composition factors of bounded size acting faithfully on a set, we can compute the stabilizer of any given subset in polynomial time. For a good overview over the various computational group theoretic techniques, we refer the reader to \cite{seress2003permutation}.

Before we solve their isomorphism problem, let us consider some examples of graphs of bounded generalized color valence.
Hypergraphs on sets of bounded color class size can directly be encoded as graphs by adding a vertex for every hyper edge which is adjacent to the elements of the hyper edge. Thus, a polynomial time algorithm for bounded generalized color valence also gives rise to a polynomial time algorithm for hypergraphs of bounded color class size. However, there are examples that are not captured by this. Consider a graph of maximum degree at most~$c$ consisting of a large number of components such that there is only a small number of isomorphism types among these components.
Now add an arbitrary finite number of new vertices colored with new colors such that the added vertices form color classes of size at most~$c$. The new vertices are connected via edges in an arbitrary way to the original vertices.
The resulting graph has bounded generalized color valence.
However, its automorphism group may contain composition factors of arbitrarily large size.
The isomorphism algorithm for graphs of bounded degree exploits the fact that at least for components it is possible, by individualizing one vertex, to obtain a group with bounded composition factor. However in the example graphs just described 
it is not clear how the standard group theoretic arguments can be applied.

Nevertheless, our goal is to prove that graph isomorphism of graphs with bounded generalized color valence can be solved in polynomial time. Note that the classes of bounded generalized color valence are closed under refinement operations such as individualization and naive vertex refinement. 

To solve the isomorphism problem, we need to deal with the small color classes. If there are only a bounded number of them, we can apply individualization to all vertices in small color classes. However, the number of small color classes can be linear in the number of vertices. To remedy this problem, we exploit the existence of certain colored modules and use group theoretic techniques.

We first define a decomposition functor that works well with graphs of bounded generalized color valence
that have been refined with naive vertex refinement. 
Given a subset~$S$ of vertices, we say a vertex~$v'\notin S$ contained in the color class~$C'$ is of degree dependence at most~$d$ with respect to~$S$ if
there is a vertex~$v$ in~$S$ 
such that~$v'\in N(v) \wedge |(N(v) \cap C')|\leq d$ or~$v'\notin N(v) \wedge |C' \setminus (N(v))|\leq d$. Intuitively this definition says that there is a vertex~$v\in S$ such that individualization of this vertex followed by refinement with respect to adjacency towards~$v$ produces a set of size at most~$d$ within which~$v'$ can be found.

\begin{definition}
A non-empty subset~$S$ is a~\emph{$d$-degree dependence module}, 
if 
no vertex outside~$S$ has degree dependence at most~$d$ with respect to~$S$.
\end{definition}

\begin{lemma}\label{lem:dep:modules:partition}
Let~$G$ be a graph of color valence at most~$c$ which is stable under naive vertex refinement. If~$G$ does not contain color classes of size at most~$2c$ then the minimal~$c$-degree dependence modules partition~$G$. Moreover, the map assigning every such graph the family of minimal~$c$-degree dependence modules is a polynomial time computable decomposition functor.
\end{lemma}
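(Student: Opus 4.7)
The plan is to establish an intersection-closure property for $c$-degree dependence modules, derive the partition from it, verify polynomial time computability via a closure procedure, and finally check that such modules are colored modules in the sense of Definition~5. I expect the last verification to be the main substantive step.

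First I would show that the nonempty intersection of two $c$-degree dependence modules $S_1,S_2$ is again a $c$-degree dependence module. Given $v'\notin S_1\cap S_2$, say $v'\notin S_1$, if $v'$ had degree dependence at most $c$ with respect to $S_1\cap S_2$ witnessed by some $v\in S_1\cap S_2$, then the same $v\in S_1$ would witness degree dependence at most $c$ with respect to $S_1$, contradicting the module property of $S_1$. Since $V(G)$ is vacuously a $c$-degree dependence module, every vertex lies in some such module, and intersection closure yields a unique minimal $c$-degree dependence module through each vertex. Consequently the minimal modules partition $V(G)$.

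For polynomial time computability I would describe the minimal module containing a vertex $v$ as the closure of $\{v\}$ under iteratively adjoining all vertices with degree dependence at most $c$ with respect to the current set. A short induction (any $c$-degree dependence module containing $v$ is closed under this operation) shows the closure is contained in every such module and is itself one, hence equals the minimal one. The iteration terminates after at most $|V(G)|$ rounds and each round is implementable in polynomial time. Isomorphism invariance of the resulting family of minimal modules is immediate since the notion of degree dependence is defined purely in terms of colors, adjacency, and cardinalities, all of which are preserved by isomorphisms.

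The main obstacle is verifying that each $c$-degree dependence module $S$ is a colored module. Suppose for contradiction that $v\notin S$ is adjacent to $x\in S$ but not to $x'\in S$, with $x$ and $x'$ of the same color, and let $C_v$ denote the color class of $v$. Stability under naive vertex refinement forces $|N(x)\cap C_v|=|N(x')\cap C_v|$. Combined with $|C_v|>2c$ and the color valence bound, exactly one of $|N(x)\cap C_v|\leq c$ or $|C_v\setminus N(x)|\leq c$ holds, and the same side is the small side for $x'$. In the first case, $v\in N(x)\cap C_v$ lies in a small side of size at most $c$, so $v$ has degree dependence at most $c$ with respect to $x\in S$; in the second case, $v\in C_v\setminus N(x')$ lies in the small side of $x'$, so $v$ has degree dependence at most $c$ with respect to $x'\in S$. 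Either way $v$ would have to lie in $S$, a contradiction. This step is precisely where the three hypotheses (color valence at most $c$, refinement stability, and absence of color classes of size at most $2c$) are jointly essential.
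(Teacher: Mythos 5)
Your intersection-closure step is correct and shows that each vertex $v$ lies in a unique minimal module $M_v$, namely the intersection of all $c$-degree dependence modules containing $v$. But the inference ``consequently the minimal modules partition $V(G)$'' does not follow from closure under intersection alone, and this is exactly where the three hypotheses are needed. The set system of $c$-degree dependence modules is closed under nonempty intersection in \emph{any} colored graph, yet the $M_v$ can be nested without being equal: if $v'$ has degree dependence at most $c$ with respect to $\{v\}$ but not conversely, then $\{v\}$ is not a module while $\{v'\}$ can be, so $M_{v'}\subsetneq M_v$ and the two overlap without coinciding. Equivalently, the absolutely minimal modules need not cover $V(G)$. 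Your argument for the partition never invokes stability, the color-valence bound, or the bound on color class sizes, and indeed the partition statement is false without them.

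The paper's proof spends essentially all of its effort closing exactly this gap: it shows that under the three hypotheses the degree-dependence relation between singletons is symmetric, i.e.\ for $v\in C$ and $v'\in C'$ one has $|N(v)\cap C'|\leq c$ if and only if $|N(v')\cap C|\leq c$, and similarly for non-neighbor counts. The counting inequality $|C|\,|C'|\leq |C|\,c+|C'|\,c$, which forces one of the classes to have size at most $2c$ in the asymmetric case, is precisely where the size hypothesis enters. With this symmetry the minimal modules are the connected components of the symmetric relation and the partition is immediate. Your colored-module verification at the end does use the hypotheses correctly and is a careful spelling-out of what the paper asserts in one sentence, but it is a separate check and does not repair the partition step. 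The closure computation, polynomial-time bound, and isomorphism invariance are fine.
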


%\begin{proof}
\proofatend
To show the first part of the lemma it suffices to show that the following holds in~$G$: if~$v$ and~$v'$ are vertices in the color classes~$C$ and~$C'$ respectively then~$|(N(v) \cap C')|\leq c$  if and only if~$|(N(v') \cap C)|\leq c$ and also~$|C' \setminus (N(v))|\leq c$ if and only if~$|C \setminus (N(v'))|\leq c$. We only show the first equivalence, the second one follows by considering the complement graph.
For the first equivalence, by symmetry, it suffices to show one direction. Thus we assume~$|(N(v) \cap C')|\leq c$. If it is not the case that~$|(N(v') \cap C)|\leq c$ then~$|C \setminus (N(v'))|\leq c$ since the graph has color valence at most~$c$. We count the number of edges and non-edges between~$C$ and~$C'$. There are at most~$|C| c$ edges and at most~$|C'| c$ non-edges. This implies that~$|C| |C'| \leq |C| c + |C'| c$ which shows that one of the color classes has size at most~$2c$.

The fact that the minimal~$c$-degree dependence modules are isomorphism invariant follows from the fact that the definition of the modules is isomorphism invariant. They form colored modules since the color valence of the graph is assumed to be at most~$c$. To compute the modules in polynomial time one can, for every vertex, compute its module by repeatedly adding vertices that have color dependence at most~$c$ with respect to the set created so far.
%\end{proof}
\endproofatend

In Lemma~\ref{lem:dep:modules:partition}, using naive vertex refinement is essential. Indeed, consider a wheel, i.e., a cycle with an added center adjacent to every other vertex. In this graph, there is only one non-trivial~$3$-degree dependence module, namely the set that contains only the center of the wheel. Thus the~$3$-degree dependence modules do not partition the graph.

\hereorinappendixstatement{
We show now that the set of isomorphisms between subgraphs induced by certain~$c$-degree dependence modules can be computed in polynomial time.

\begin{lemma}\label{lem:iso:of:dep:modules}
Let~$c$ be a positive integer.
Let~$G_1$ and~$G_2$ be colored graphs with vertices~$v_1$ and~$v_2$. For~$i \in\{1,2\}$ let~$M_i$ be a~$c$-degree dependence module that is minimal among the~$c$-degree dependence modules containing~$v_i$. Given~$M_1$ and~$M_2$, we can in polynomial time compute all isomorphisms from~$G_1[M_1]$ to~$G_2[M_2]$ (as a coset of a group).
\end{lemma}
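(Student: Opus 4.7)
The plan is to reduce the problem to isomorphism of colored graphs whose color classes have bounded size, for which polynomial-time algorithms based on Luks's set stabilizer theorem for groups with bounded composition factors are available \cite{Luks:1982,Miller2,BabaiLuks}.

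First I would exploit the minimality of $M_i$ to obtain a constructive description. Starting with $S=\{v_i\}$, observe that $M_i$ is generated iteratively as follows: whenever there is a vertex $u\notin S$ of degree dependence at most $c$ with respect to $S$ through some $w\in S$ and some color class $C'$, add the entire ``small batch'' (either $N(w)\cap C'$ or $C'\setminus N(w)$, whichever has size at most $c$) to $S$. By the minimality of $M_i$ among $c$-degree dependence modules containing $v_i$, this process terminates exactly at $M_i$, and each batch added has size at most $c$.

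Next, for each candidate image $v_2'\in M_2$ of $v_1$ of the same color, I would individualize $v_1$ in $G_1[M_1]$ and $v_2'$ in $G_2[M_2]$ and apply naive vertex refinement. By the $c$-degree dependence property, the first batch of vertices from the iterative construction lands in color classes of size at most $c$. Individualizing a small color class and refining again puts the next batch into color classes of size at most $c$, and the same argument iterates through every batch in the construction of $M_i$. After finitely many rounds, each vertex in $M_i$ lies in a color class of size at most~$c$. At this point the classical polynomial-time isomorphism algorithm for colored graphs with bounded color class size produces the coset of color-preserving isomorphisms $G_1[M_1]\to G_2[M_2]$ mapping $v_1$ to $v_2'$. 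Iterating over all choices of $v_2'$ and combining the resulting cosets via standard computational group theory assembles the full set of isomorphisms as a single coset of $\Aut(G_1[M_1])$.

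The key obstacle to watch out for is avoiding exponential branching during the iterative individualization of small color classes. This is handled group-theoretically from the outset: rather than branching over each small color class explicitly, one maintains the tower of setwise stabilizers corresponding to the batches processed so far. Each batch of size at most $c$ contributes a quotient of order at most $c!$ to this tower, so the composition factors of the resulting chain are bounded in terms of $c$, and Luks's algorithm computes coset representatives for setwise stabilizers along the chain in polynomial time. This is precisely the setting in which the hypergraph isomorphism result for bounded color class size applies, so the reduction sketched above yields a polynomial-time algorithm.
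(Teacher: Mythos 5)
Your proposal is aimed in the right direction---the batch construction of $M_i$ from $v_i$, and the observation that each batch of size at most $c$ contributes a quotient of order at most $c!$, is indeed the structural reason the group-theoretic machinery applies---but the pivotal intermediate claim contains a gap. You assert that after individualizing $v_1$ and applying naive vertex refinement (cascading through batches), every vertex of $M_i$ ends up in a color class of size at most $c$, so that the hypergraph-isomorphism-with-bounded-color-class-size algorithm can be invoked directly. That claim does not follow. The degree-dependence property gives, for a newly added vertex $u$ in original color class $C'$, a witness $w\in S$ with $|N(w)\cap C'|\leq c$ (or the non-neighbor analogue), but refinement with only $v_1$ individualized does not split $C'$ along $w$'s neighborhood unless $w$ itself is a singleton. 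Forcing $w$ to be a singleton requires individualizing a representative of each batch, which is exactly the exponential branching you then defer to the "tower of setwise stabilizers." Once you handle the branching group-theoretically, you are no longer in the bounded-color-class-size regime at all: the refined classes in $M_i$ can remain large, and what you actually possess is bounded color \emph{valence} along a chain of classes, not bounded class size. Stating the reduction as one to bounded color class size introduces a claim that is neither true nor needed.

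The paper's proof individualizes only $v_1$ and $v_2$, refines, and then observes that the color classes of $G_i[M_i]$ admit a canonical ordering $C_1,\ldots,C_t$ with $C_1=\{v_1,v_2\}$ such that for each $i>1$ some earlier $C_j$ has at most $c$ neighbors or at most $c$ non-neighbors in $C_i$ from each of its vertices. This is the bounded color valence setting, which immediately yields automorphism groups whose composition factors are subgroups of $S_c$, and the Babai--Luks / Miller techniques apply to compute the isomorphism coset along this chain. Your batch construction is the right intuition for why such an ordering exists, but the target of the reduction should be the bounded-valence chain, not bounded color class size; with that correction, and with the "tower of stabilizers" made precise as a computation along the ordered chain of classes, your argument would align with the paper's.
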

\begin{proof}
%\proofatend
For~$i\in \{1,2\}$, by iterating over all vertices in~$M_i$, we can find all vertices~$v_i$ such that~$M_i$ is minimal among all~$c$-degree dependence modules containing~$v_i$. Since there are only quadratically many pairs of such vertices, it thus suffices to be able to compute all isomorphisms from~$M_1$ to~$M_2$ mapping~$v_1$ to~$v_2$ assuming that~$v_1$ and~$v_2$ are given. We individualize~$v_i$ in each graph (with the same color in both graphs) and apply naive vertex refinement. We now consider the graphs~$G_1[M_1]$ and~$G_2[M_2]$. If the color classes that appear are not the same then there is no isomorphism from~$G_1[M_1]$ and~$G_2[M_2]$, so we assume otherwise. Note that we can order the color classes in a canonical way to~$C_1\ldots,C_t$ such that~$C_1 = \{v_1,v_2\}$ and such that for~$i\in \{2,\ldots,t\}$ and for vertices in~$C_i$ there is a color class~$C_j$ with~$j<i$ such that each vertex in~$C_j$ has at most~$c$ neighbors or at most~$c$ non-neighbors in~$C_i$. This enables us to compute the set of isomorphisms between the graphs induced by the first~$i$ color classes on the basis of the isomorphisms between the graphs induced by the first~$i-1$ color classes.
Indeed, this shows that the graphs have automorphism groups that have composition factors that are subgroups of~$S_c$, the symmetric group of degree~$c$.
Therefore we can apply the known techniques based on group theoretic isomorphism algorithms. 
For example we can apply techniques by Babai and Luks~\cite{BabaiLuks} or Miller~\cite{Miller2}. We refer the reader to \cite[Section 4.2]{BabaiLuks} for an explanation of all these options. 
\end{proof}
%\endproofatend
}

\hereorinappendixstatement{

We next provide a corollary of Lemma{lem:iso:of:dep:modules} that will facilitate its application in our main theorem.

\begin{corollary}\label{cor:two:modules:and:bounded:color:classes}
Let~$G_1$ and~$G_2$ be graphs of generalized color valence at most~$c$. Let~$V_1$ and~$V_2$ be the set of vertices contained in a color class of size at most~$2c$ in~$G_1$ and~$G_2$ respectively. Let~$M_1$ and~$M'_1$ be minimal~$c$-degree dependence modules of~$G_1\setminus V_1$ and let~$M_2$ be a minimal~$c$-degree dependence module of~$G_2\setminus V_2$.
We can compute in polynomial time the automorphism group of~$G[M_1\cup M_1'\cup V]$ and the automorphism group of~$G_1[M_1\cup V_1]\dunion G_2[M_2\cup V_2]$ preserving~$G_1[M_1\cup V_1]$ and~$G_2[M_2\cup V_2]$ as blocks.
\end{corollary}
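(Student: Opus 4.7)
The plan is to reduce both automorphism-group computations to set-stabilizer problems in permutation groups with bounded composition factors, and then invoke Luks's set-stabilizer theorem. The previous Lemma~\ref{lem:iso:of:dep:modules} already gives us this for the purely modular part, and the vertices in $V_i$ can be folded in because they sit in color classes of size at most $2c$.

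First I would handle the second claim by a standard reduction: automorphisms of $G_1[M_1\cup V_1]\dunion G_2[M_2\cup V_2]$ that preserve the two blocks decompose into the direct product of the two individual automorphism groups together with (possibly) a swap parameterized by the coset of isomorphisms between the two blocks. Thus it suffices to compute $\mathrm{Aut}(G_i[M_i\cup V_i])$ for each $i$ and the coset $\mathrm{Iso}(G_1[M_1\cup V_1],G_2[M_2\cup V_2])$. By Lemma~\ref{lem:iso:of:dep:modules} applied to the $M_i$'s alone, we obtain in polynomial time the coset of isomorphisms $G_1[M_1]\to G_2[M_2]$ (and the automorphism group of $G_1[M_1]$), and this coset lies inside a group whose composition factors are subgroups of $S_c$.

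Next I would extend this coset by the direct factor consisting of all color-preserving bijections of $V_1$ onto $V_2$; since each color class in $V_i$ has size at most $2c$, this factor is contained in a direct product of symmetric groups of degree at most $2c$, so its composition factors are subgroups of $S_{2c}$, and hence the enlarged ambient coset still has composition factors of bounded size. An isomorphism $G_1[M_1\cup V_1]\to G_2[M_2\cup V_2]$ is precisely an element of this ambient coset that additionally sends the edge set on $(M_1\cup V_1)\times(M_1\cup V_1)$ to the corresponding edge set on $(M_2\cup V_2)\times(M_2\cup V_2)$. This is a set-transporter problem in a group with bounded composition factors, so by Luks's set-stabilizer theorem (applied to the disjoint union of the two graphs exactly as in Lemma~\ref{lem:iso:of:dep:modules}) it is solvable in polynomial time. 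The automorphism-group computations $\mathrm{Aut}(G_i[M_i\cup V_i])$ are the special case with $G_1=G_2$.

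For the first claim, $\mathrm{Aut}(G_1[M_1\cup M_1'\cup V_1])$, the same strategy applies with one additional consideration: by Lemma~\ref{lem:dep:modules:partition} distinct minimal $c$-degree dependence modules are disjoint, so either $M_1=M_1'$ (reducing to the previous paragraph) or $M_1\cap M_1'=\emptyset$. In the latter case I would compute the pieces from Lemma~\ref{lem:iso:of:dep:modules} separately for $G_1[M_1]$, $G_1[M_1']$, and the coset $\mathrm{Iso}(G_1[M_1],G_1[M_1'])$ (possibly empty if the two are non-isomorphic as colored graphs), assemble a permutation group on $M_1\cup M_1'$ that allows both ``swap'' and ``fix-each-block'' motions, take its direct product with the color-respecting symmetric group on $V_1$, and finally intersect with the stabilizer of the edge set of $G_1[M_1\cup M_1'\cup V_1]$ via Luks. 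The ambient group still has composition factors in $S_{2c}$, so this set stabilization runs in polynomial time. The main obstacle is bookkeeping: one must ensure that all adjacencies between $M_1$, $M_1'$, and $V_1$ are captured by a single set-stabilizer call inside a group whose composition factors remain bounded; this is true because the candidate group is a subgroup of $\mathrm{Aut}(G_1[M_1])\wr C_2 \times \prod_{C\subseteq V_1} \mathrm{Sym}(C)$ with each factor having composition factors in $S_{2c}$, so the full group does too, and the stabilization of a single relation (the induced edge set) falls within the scope of the standard machinery.
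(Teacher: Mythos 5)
Your argument follows the same underlying strategy as the paper's---exploit that both the module part and the small--color-class part contribute only bounded composition factors, then finish by group-theoretic stabilizer machinery---but you reconstruct the ambient group and the set-stabilizer call by hand, whereas the paper's proof is a one-line observation: the sets $M_1\cup V_1$, $M_1'\cup V_1$, $M_2\cup V_2$ are each contained in minimal $2c$-degree dependence modules of their respective graphs, so Lemma~\ref{lem:iso:of:dep:modules} (applied with parameter $2c$ in place of $c$) already yields everything. Your version is therefore more explicit about what the ambient group looks like (a product of a $\mathrm{Aut}(G_1[M_1])$-type factor and direct products of $\mathrm{Sym}(C)$ with $|C|\le 2c$, optionally extended by the isomorphism coset and a swap), while the paper's version buys conciseness by noticing that the machinery of the previous lemma subsumes the small color classes automatically once the degree-dependence parameter is doubled.

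One point you should be more careful about: for the first claim, your candidate ambient group sits inside $\mathrm{Aut}(G_1[M_1])\wr C_2 \times \prod_C \mathrm{Sym}(C)$, which by construction only contains permutations that preserve the pair $\{M_1,M_1'\}$ as blocks. But $M_1,M_1'$ were defined as minimal $c$-degree dependence modules of $G_1\setminus V_1$, i.e. via the global structure of $G_1$; they are not, a priori, an invariant of the induced subgraph $G_1[M_1\cup M_1'\cup V_1]$, so an automorphism of that induced subgraph need not respect $\{M_1,M_1'\}$ at all, and your intersection with the edge-set stabilizer could then miss some automorphisms. (The paper's $2c$-module phrasing has the same delicacy, and the corollary statement only says ``preserving \dots\ as blocks'' for the second claim; it is plausible that the block-preservation qualifier was intended for the first claim as well, in which case your argument is fine, and in any event the applications in the proof of the main theorem only ever need the block-preserving subgroup.) If you want a fully rigorous self-contained statement, either state the first claim with the explicit ``preserving $\{M_1,M_1'\}$ as blocks'' caveat, or give an argument that an automorphism of $G_1[M_1\cup M_1'\cup V_1]$ must in fact preserve that partition.
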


\begin{proof}
The corollary follows from Lemma~\ref{lem:iso:of:dep:modules} by observing that~$M_1 \cup V_1$,~$M'_1 \cup V_1$ and~$M_2\cup V_2$ are contained in minimal~$2c$-degree dependence modules of their respective graphs.
\end{proof}
}
\hereorinappendixstatement{
In the following we will be working on two graphs simultaneously. Note that when performing naive vertex refinement on two graphs separately, it may be the case that the color degrees differ in the two graphs even within the same color class. To avoid this, we perform \emph{a joint refinement}, by performing the refinement on the disjoint union of the two graphs.

\begin{lemma}\label{lem:bijection:on:deeg:dep:modules:yields:isomorphism}
Let~$G_1$ and~$G_2$ be colored graphs, stable under the joint vertex refinement, which are of color valence at most~$c$ and which do not contain any color class of size at most~$2c$. If~$\phi$ is a map from~$V(G_1)$ to~$V(G_2)$ that maps the~$c$-degree dependence modules of~$G_1$ isomorphically to the~$c$-degree dependence modules of~$G_2$ then~$\phi$ is an isomorphism.
\end{lemma}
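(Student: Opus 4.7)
The plan is to show directly that $\phi$ preserves adjacency of every pair of vertices. By Lemma~\ref{lem:dep:modules:partition} the minimal $c$-degree dependence modules partition both $V(G_1)$ and $V(G_2)$. Since $\phi$ restricted to each module $M$ is a colored isomorphism onto $\phi(M)$, adjacency is preserved for pairs within the same module and $\phi$ is globally color-preserving. Hence the remaining task is to check that $\phi$ preserves adjacency across pairs of distinct modules.

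The central step I plan to establish is a structural claim: for $x$ in a $c$-degree dependence module $M$ of $G_1$ and $v \in V(G_1) \setminus M$, whether $x \sim v$ is a function only of the colors of $x$ and $v$. To prove it, let $c_v$ be the color of $v$ and $C'$ its color class. Color valence gives $|N(x) \cap C'| \leq c$ or $|C' \setminus N(x)| \leq c$; since $|C'| > 2c$, exactly one of the two holds. In the first case, any neighbor of $x$ in $C'$ lying outside $M$ would have degree dependence at most $c$ with respect to $M$ (witnessed by $x$), contradicting that $M$ is a $c$-degree dependence module, so $v \notin N(x)$. Symmetrically, in the second case $v \in N(x)$. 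Which case applies is controlled by the values $|N(x) \cap C'|$ and $|C' \setminus N(x)|$, and by stability of naive vertex refinement these depend only on the colors of $x$ and of $C'$.

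To conclude, I apply the analogous analysis in $G_2$ and compare the two graphs. Since $\phi$ is color-preserving, $\phi(x)$ and $\phi(v)$ carry the same colors as $x$ and $v$. Joint refinement stability on $G_1 \dunion G_2$ guarantees that for each pair of colors the counts of the form $|N(\cdot) \cap C|$ agree across the two graphs for corresponding color classes, and hence the case determination is identical in $G_1$ and $G_2$. This yields $x \sim_{G_1} v$ if and only if $\phi(x) \sim_{G_2} \phi(v)$, as required.

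The main obstacle is the structural claim of the second paragraph; in particular, the strict inequality $|C'| > 2c$ is crucial so that the two cases are genuinely mutually exclusive, which is precisely what the hypothesis forbidding color classes of size at most $2c$ provides. Given this claim, the passage from the single-graph statement to the cross-graph comparison via joint refinement is straightforward.
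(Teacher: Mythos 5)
Your proof is correct and takes essentially the same approach as the paper's: reduce to cross-module pairs, use color valence plus the definition of $c$-degree dependence module to show that all vertices of a given color outside $M$ have the same adjacency type toward a vertex in $M$, and then use joint refinement stability to conclude that this type agrees in $G_1$ and $G_2$. The only cosmetic difference is that you phrase the key step as a stand-alone structural claim before invoking joint refinement, whereas the paper folds both steps into one paragraph.
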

\begin{proof}

%\proofatend
It suffices to show the following: if two vertices~$v$ and~$v'$ of~$G_1$ which are not contained in the same~$c$-degree dependence module are adjacent (non-adjacent) then~$\phi(v)$ and~$\phi(v')$ are adjacent (non-adjacent).
Let~$C$ be the color class of~$v'$. Since~$G_1$ has color valence at most~$c$ the vertex~$v$ has at most~$c$ neighbors in~$C$ or at most~$c$ non-neighbors in~$C$. In the former case all neighbors of~$v$ of color~$C$ are in the same~$c$-degree dependence module as~$v$. In the latter case all non-neighbors of~$v$ are in the same~$c$-degree dependence module as~$v$. This implies that all vertices of~$C$ that are not in the same~$c$-degree dependence module have the same adjacency type towards~$v$. This is also the case in~$G_2$ and this adjacency type is the same in the graphs~$G_1$ and~$G_2$, since they have been jointly refined. This shows that~$\phi$ is an isomorphism.
\end{proof}

%\endproofatend
}
\hereorinappendixstatement{
The previous proof implicitly shows that for graphs of color valence at most~$c$ that are stable under naive vertex refinement which do not contain color classes of size at most~$2c$ the~$c$-degree dependence modules are in fact colored modules in the sense of the previous section, also providing the motivation for the terminology.

A frequent observation in the graph isomorphism context is that the isomorphism problem between two graphs reduces to computation of the automorphism group of a combination of the two graphs. We will also take this route, since it drastically streamlines our proof. However, this comes at the expense of increased running time. Note that in our case, for the reduction, we cannot simply take the usual path of forming the disjoint union of two graphs and compute its automorphism group, since this union might not have bounded generalized color valence as can be seen when forming the disjoint union of two large cliques.

\begin{lemma}\label{lem:red:iso:to:aut:gen:bd:col:val}
The isomorphism problem of graphs of generalized bounded color valence at most~$c$ polynomial time reduces to the computation of the automorphism group of a graph of generalized color valence at most~$2c$.
\end{lemma}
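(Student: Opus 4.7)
The plan is to reduce the isomorphism of $G_1$ and $G_2$ to a single call of an automorphism algorithm on a combined graph $H'$ of generalized color valence at most~$2c$, using the classical trick of attaching two part-selector vertices. The naive disjoint union $G_1 \dunion G_2$ does not preserve bounded generalized color valence because for a pair of color classes that is ``dense'' (few non-neighbors per vertex) in both graphs, every vertex in the union retains many neighbors inside its own part while simultaneously gaining many non-neighbors from the opposite part; neither quantity is bounded by~$c$. The remedy is to locally bipartite-complement such pairs inside each part before taking the union.

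First, I would jointly refine $G_1$ and $G_2$ via naive vertex refinement with a shared color alphabet, rejecting as non-isomorphic if the resulting color multisets disagree. After refinement, for each ordered pair of large joint color classes $(c_1,c_2)$ the bipartite subgraph inside each $G_i$ is either uniformly ``sparse'' ($\leq c$ neighbors per $c_1$-vertex) or uniformly ``dense'' ($\leq c$ non-neighbors); joint refinement forces the classification to agree between $G_1$ and $G_2$, and a short edge-counting argument using $|V_{c_1}|,|V_{c_2}|>2c$ shows that the classification from the $c_2$-side must also agree, so each unordered pair of large joint color classes has a well-defined type. Now construct $H$ on $V(G_1) \dunion V(G_2)$ with the joint coloring, no edges between the two parts, and inside each part: $G_i$-edges for sparse pairs, the bipartite complement of the $G_i$-edges for dense pairs. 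Finally let $H'$ be $H$ together with two new vertices $u_1,u_2$ of a fresh shared color, where $u_i$ is adjacent to exactly the vertices of $V(G_i)$.

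The color class $\{u_1,u_2\}$ has size $2\leq 2c$ and is removed when evaluating generalized color valence, so the gadget does not affect the bound. For any vertex $v$ in a large joint color class of $H'$ and any large joint color class $V_{c'}$, the absence of cross edges confines the neighbors of $v$ in $V_{c'}$ to $v$'s own part, and the sparse-or-complemented construction bounds their count by~$c$; hence $H'$ has generalized color valence at most $c \leq 2c$. Any isomorphism $\phi\colon G_1 \to G_2$ lifts to an automorphism of $H'$ acting as $\phi$ on $V(G_1)$, $\phi^{-1}$ on $V(G_2)$, and swapping $u_1 \leftrightarrow u_2$, because the local complementation is performed identically in both parts. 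Conversely, any $\psi \in \Aut(H')$ with $\psi(u_1)=u_2$ must satisfy $\psi(V(G_1))=V(G_2)$ (since the neighborhood of $u_i$ is exactly $V(G_i)$), and its restriction to $V(G_1)$ is an isomorphism $G_1 \to G_2$. Therefore $G_1 \cong G_2$ if and only if $u_1$ and $u_2$ share an $\Aut(H')$-orbit, a condition that can be read off in polynomial time from any generating set of $\Aut(H')$. The principal technical obstacle is the pair of consistency statements about the sparse/dense classification across the two sides and across the two input graphs, both dispatched by joint refinement together with the edge-counting bound sketched above.
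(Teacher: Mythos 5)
Your proof is correct, and it departs from the paper's construction in one genuine way. Both constructions attach universal selector vertices (the paper's $v_1,v_2$, your $u_1,u_2$) and use the same edge-counting consistency argument after joint refinement; the difference is in how the union of the two parts is made to have bounded generalized color valence. The paper keeps each $G_i$ intact as an induced subgraph and adds all cross-edges between color classes $C, C'$ that are \emph{dense} (i.e., when vertices of one have $\leq c$ non-neighbors in the other), so that the missing cross-pairs between $G_1$ and $G_2$ never accumulate into large non-neighborhoods. You instead bipartite-complement every dense pair of large color classes \emph{within} each part and then take the disjoint union with no cross edges at all, so that every surviving pair of large classes becomes \emph{sparse} and the $c$ bound is inherited directly. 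Your variant changes the parts, which requires the (correct) observation that the set of complemented pairs is determined by the joint coloring and hence isomorphism-invariant, so an isomorphism of the complemented graphs is the same thing as an isomorphism of the originals; the paper's variant leaves the parts unchanged and only modifies the union, which sidesteps that extra observation. Both yield the desired $2c$ bound (indeed $c$ on valences) because the small classes of original size $\leq c$, including the selector pair, become size $\leq 2c$ in the union and are discarded. The final step is also phrased slightly differently (you test whether $u_1,u_2$ lie in the same $\Aut$-orbit, the paper asks whether some automorphism swaps $v_1,v_2$), but these are the same condition. In short: a correct proof with a complementation-based modification where the paper uses cross-edge insertion; the two buy essentially the same thing, with yours being marginally more work to justify and the paper's marginally more transparent.
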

\begin{proof}

%\proofatend
Let~$G_1$ and~$G_2$ be colored graphs of generalized color valence at most~$c$.
To each graph~$G_i$, we add a vertex~$v_i$ adjacent to all vertices of this graph. The two added vertices obtain a color that neither appears in~$G_1$ nor on~$G_2$.
Next, we run the joint naive vertex refinement on  each of the augmented graphs to obtain the graphs~$G'_1$ and~$G'_2$. 
If the obtained graphs do not contain the same color classes with the same multiplicities we reject the graphs as non-isomorphic.

We form their disjoint union.
Let~$C$ and~$C'$ be color classes each of multiplicity more than~$2c$ in each of the graphs~$G'_1$ and~$G'_2$.
We now determine whether we should add all edges between vertices of these colors which are in different graphs or none.
If within the graphs~$G'_1$ and~$G'_2$ vertices in~$C$ have more than~$c$ neighbors in~$C'$ then we add all edges between vertices of colors~$C$ and~$C'$ if they lie in different graphs.
Note that in this case, by arguments similar to those performed in the proof of Lemma~\ref{lem:dep:modules:partition}, vertices in~$C'$ symmetrically then also have more than~$c$ neighbors in~$C$.
If this operation is performed across all pairs of color classes which are larger than~$2c$, we obtain a graph~$G$ of bounded color valence at most~$2c$. Since we added the universal vertices~$v_1$ and~$v_2$ to each of the graphs, any automorphism of the final graph must respect~$V(G_1)$ and~$V(G_2)$ as blocks.
Thus,~$G$ contains an automorphism that swaps~$v_1$ and~$v_2$ if and only if~$G_1$ and~$G_2$ are isomorphic.
\end{proof}
%\endproofatend
}

\hereorinappendixstatement{
The lemma allows us to focus on computing the automorphism group of a single graph.
\begin{theorem}
For all positive integers~$c$, the automorphism group of a graph of generalized color valence at most~$c$ can be computed in polynomial time.
\end{theorem}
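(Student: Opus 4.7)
The plan is to combine naive vertex refinement with the decomposition into $c$-degree dependence modules and then build the automorphism group iteratively, at each step invoking Luks's set stabilizer theorem for groups with composition factors of bounded size. First I would apply naive vertex refinement to $G$ and split $V(G)$ into $V_{\mathrm{small}}$, the vertices lying in color classes of size at most $2c$, and its complement, to which Lemma~\ref{lem:dep:modules:partition} applies. This yields a partition of $V(G) \setminus V_{\mathrm{small}}$ into minimal $c$-degree dependence modules $M_1, \ldots, M_k$. By Lemma~\ref{lem:bijection:on:deeg:dep:modules:yields:isomorphism}, after joint refinement every automorphism of $G$ restricts to an isomorphism between these modules, so $\mathrm{Aut}(G)$ is exactly the group of bijections that permute the $M_i$ among themselves isomorphically and simultaneously respect the structure on $V_{\mathrm{small}}$.

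I would then group the modules by isomorphism type, using Lemma~\ref{lem:iso:of:dep:modules} to compute isomorphism cosets between every pair of modules in polynomial time. The automorphism group is computed by a Luks-style tower: starting from the symmetric group on $V_{\mathrm{small}}$ restricted to color-preserving permutations (a direct product of symmetric groups on sets of size at most $2c$, hence with composition factors of size at most $2c$), I process the modules $M_1, M_2, \ldots, M_k$ one at a time, intersecting the current group with the constraint imposed by the next module. The crucial input at each step is Corollary~\ref{cor:two:modules:and:bounded:color:classes}, which produces the automorphism group of the relevant joint structure $G[M_i \cup V_{\mathrm{small}}] \dunion G[M_j \cup V_{\mathrm{small}}]$ preserving blocks, and whose composition factors are bounded in terms of $c$. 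Luks's set stabilizer algorithm can then be applied to intersect this constraint group with the tentative automorphism group, giving the next stage in the tower.

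The main obstacle is to preserve the bounded composition factor property throughout the iteration. A priori, if many modules share the same isomorphism type, permuting them freely would introduce an unbounded symmetric action at the top level. The resolution is that any such top-level permutation must be simultaneously compatible with the shared small color class vertices $V_{\mathrm{small}}$: the constraint from Corollary~\ref{cor:two:modules:and:bounded:color:classes} couples the permutation of modules to the action on $V_{\mathrm{small}}$, so the computed stabilizer at each step is a subgroup of the previous bounded-composition-factor group intersected with another bounded-composition-factor group, which again has bounded composition factors by the standard closure properties. Once all modules have been processed, the resulting coset is exactly the set of color-preserving bijections that respect every module pairwise, which by Lemma~\ref{lem:bijection:on:deeg:dep:modules:yields:isomorphism} coincides with $\mathrm{Aut}(G)$. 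Since each stage runs in polynomial time and there are at most $k \leq |V(G)|$ stages, the overall procedure is polynomial in the size of $G$.
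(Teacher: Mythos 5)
There is a genuine gap in the ``Luks-style tower over modules.'' The difficulty is that $\Aut(G)$ may permute the modules $M_1,\ldots,M_k$ among themselves, so the constraint that a single module $M_i$ imposes on the action on $V_{\mathrm{small}}$ is the set
\[
\bigcup_{j}\bigl\{\phi\in Sym(V_{\mathrm{small}}) : \phi \text{ extends to an isomorphism } G[M_i\cup V_{\mathrm{small}}]\to G[M_j\cup V_{\mathrm{small}}]\bigr\},
\]
a \emph{union} of cosets, not a coset and not a group (in general it is not even closed under inverses: every $M_i$ having some feasible target $M_j$ does not imply that every $M_j$ has a feasible preimage). Hence ``intersecting the current group with the constraint imposed by the next module'' is not a group-theoretic operation; the intermediate sets in your tower need not be groups, and Luks's set-stabilizer machinery for bounded composition factor groups is not directly applicable to them. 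A second gap is that even the pointwise intersection of all such constraints is a proper superset of $\Aut_U$: you also need a \emph{global} permutation $\pi$ of the modules compatible with $\phi$ on $V_{\mathrm{small}}$ for all $i$ simultaneously, which requires matching equivalence classes of modules \emph{including their multiplicities}. Your proposal never constructs or verifies the existence of such a $\pi$.

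The paper sidesteps both problems by iterating not over modules but over the color classes $C\subseteq U$. It maintains at each step the genuine subgroup $\Aut_S\le Sym(S)$ of permutations that admit a simultaneously valid module permutation, and computes $\Aut_{S\cup C}$ from $\Aut_S$ by building an action of $Sym(C)\times\Aut'_S$ on congruence classes of pairs $(\phi,[M])$ and taking a set stabilizer there. Tracking those equivalence classes (and labeling them by the multiset of block sizes when refining from $\sim_{(S,\cdot)}$ to $\sim_{(S\cup C,\cdot)}$) is precisely what guarantees a consistent $\pi$ and keeps the composition factors bounded. Finally, your outline conflates the group you build on $V_{\mathrm{small}}$ with $\Aut(G)$; after computing $\Aut_U$, one still has to lift it back to $\Aut(G)$ by computing the kernel of the restriction $\Aut(G)\to Sym(U)$ (automorphisms fixing $U$ pointwise, a bounded color valence instance) together with lifts of generators, using the first isomorphism theorem. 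These steps are missing from your argument, and Lemma~\ref{lem:bijection:on:deeg:dep:modules:yields:isomorphism} alone does not supply them.
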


\begin{proof}
%\proofatend
 Let~$G$ be a colored graph of generalized color valence at most~$c$. Since refinement does not increase generalized color valence, we can assume that the coloring of the graph is stable under naive vertex refinement.
Let~$U$ be the set of vertices contained in a color class of size at most~$2c$.
Let~$\mathcal{V}$ be the set of minimal~$c$-degree dependence modules of~$G- U$.

Suppose~$S$ and~$S'$ are subsets of~$U$. We say two~$c$-degree dependence modules~$M$ and~$M'$ in~$ \mathcal{V}$  are equivalent relative to~$(S,S')$ if  there is an isomorphism from~$G[M\cup S\cup S']$ to~$G[M'\cup S\cup S']$ that fixes all vertices of~$S$. We write~$M\sim_{(S,S')} M'$ if~$M$ and~$M'$ are equivalent in this way and denote by~$[M]_{\sim_{(S,S')}}$ the equivalence class of~$M$. Equivalence of modules relative to~$\sim_{(S,S')}$ and thus computation of the equivalence class can be performed in polynomial time by Corollary~\ref{cor:two:modules:and:bounded:color:classes}.

Let~$S$ be a union of color classes in~$U$. By~$\Aut_S$ we denote the following automorphism group: we consider all permutations of~$S$. Such a permutation~$\phi$ is in~$\Aut_S$ if there exists a permutation~$\pi$ of the~$c$-degree dependence modules of~$G-U$ such that the following holds for all~$M\in \mathcal{V}$:
if~$\pi(M) = M'$ then there exists an isomorphism from~$G[M\cup U]$ to~$G[M'\cup U]$ which, when restricted to~$S$, agrees with~$\phi$.

We will first show how to compute~$\Aut_{U}$ and then explain how to compute the automorphism group of~$G$ from~$\Aut_{U}$. Note that~$\Aut_{\{\}}$ is the trivial group.  Consequently, to compute~$\Aut_{U}$, it suffices for us to show the following claim for an arbitrary color class~$C$ in~$U$ not contained in~$S$.

\emph{Claim.}
Given~$\Aut_S$ we can in polynomial time compute~$\Aut_{S\cup C}$.
%\end{claim}
%\begin{proof}[Proof of claim]

Every element of~$\psi' \in \Aut_S$ gives rise to exactly one permutation~$\overline{\psi'}$ of the equivalence classes induced by~$\sim_{(S,U \setminus S)}$. This gives rise to an action of~$\Aut_S$ on the equivalence classes. We first compute a subgroup~$\Aut'_S$ of~$\Aut_S$ that respects a certain coloring of these equivalence classes.
For this consider the equivalence classes induced by~$\sim_{(S\cup C,U \setminus (S\cup C))}$. They form a refinement of the classes induced by~$\sim_{(S,U \setminus S)}$. We label each equivalence class~$M$ under~$\sim_{(S,U \setminus S)}$ by the multiset of block-sizes that arise when~$M$ is partitioned according to~$\sim_{(S\cup C,U \setminus (S\cup C))}$.

We now define~$\Aut'_S$ to be the subgroup of~$\Aut_S$ that stabilizes the labels. This subgroup can be computed in polynomial time since the involved groups have composition factors of bounded size (see~\cite{BabaiLuks}).

Our strategy now for computing~$\Aut_{S\cup C}$ on the basis of~$\Aut'_S$ is to find a set on which a supergroup of~$\Aut'_S$ acts such that~$\Aut_{S\cup C}$ can be observed to be a stabilizer of a computable subset.
Let~$\mathcal{M}$  be the sets of equivalence classes in~$\mathcal{V}$ with respect to~$\sim_{(S\cup C,U \setminus (S\cup C))}$.
For each~$[M]_{\sim_{(S\cup C,U \setminus (S\cup C))}}$ in~$\mathcal{M}$ we define the set of pairs~$(\phi,[M]_{\sim_{(S\cup C,U \setminus (S\cup C))}})$ where~$\phi$ is a permutation of~$C$. 
We say two pairs~$(\phi,[M]_{\sim_{(S\cup C,U \setminus (S\cup C))}})$ and~$(\phi',[M']_{\sim_{(S\cup C,U \setminus (S\cup C))}})$ are \emph{congruent} if~$|[M]_{\sim_{(S\cup C,U \setminus (S\cup C))}}| = |[M']_{\sim_{(S\cup C,U \setminus (S\cup C))}}|$ and there is an automorphism from~$G[M\cup U]$ to~$G[M'\cup U]$ 
which fixes~$S$ and is equal to ${\phi'}^{-1} \phi$ on~$C$. This definition is independent of the chosen representatives~$M$ and~$M'$ and thus congruence is well defined. Moreover, congruence is an equivalence relation. The classes of~$\mathcal{M}$ embed into the congruence classes of pairs by taking~$\phi$ to be the identity.

We define an action of the group~$Sym(C)\times \Aut'_S$ on these equivalence classes of pairs. The element~$(\psi, \psi')$ maps the congruence class of~$(\phi,[M]_{\sim_{(S\cup C,U \setminus (S\cup C))}})$ to the congruence class of~$(\psi \phi f,[M']_{\sim_{(S\cup C,U \setminus (S\cup C))}})$, where the representative~$M'$ is chosen such that~$\overline{\psi'}([M]_{\sim_{(S,U \setminus S)}}) = [M']_{\sim_{(S,U \setminus S)}}$ and~$|[M]_{\sim_{(S\cup C,U \setminus (S\cup C))}}|= |[M']_{\sim_{(S\cup C,U \setminus (S\cup C))}}|$ and~$f$ is chosen as a permutation of~$C$ such that there is an isomorphism from~$G[M\cup U]$ to~$G[M'\cup U]$ which is equal to~$f^{-1}$ and~$\psi'$ on~$C$ and~$S$, respectively.
We need to argue that this action is well defined. First, observe that since~$\Aut'_S$ is obtained by stabilizing the labeling as defined above, an~$M'$ fulfilling the requirements exists. Suppose now that~$M''$ and~$f''$ also satisfy the required properties. Then there exists an isomorphism from~$G[M' \cup U]$ to~$G[M'' \cup U]$ that is equal to~$\psi' (\psi'^{-1}) = \id$ on~$S$ and equal to~$f''^{-1}f = (\psi \phi f'')^{-1} (\psi \phi f)$ on~$C$.
Thus the pairs~$(\pi \psi \phi ,[M']_{\sim_{(S\cup C,U \setminus (S\cup C))}})$ and~$( \pi' \psi \phi,[M'']_{\sim_{(S\cup C,U \setminus (S\cup C))}})$ are congruent.
We also need to show that we have defined an action. Suppose the pair~$(\overline{\psi}, \overline{\psi'})$ maps the congruence class of $( \psi \phi f,[M']_{\sim_{(S\cup C,U \setminus (S\cup C))}})$ to the congruence class of~$(\overline{\psi}\psi \phi f \overline{f} ,[\overline{M}]_{\sim_{(S\cup C,U \setminus (S\cup C))}})$  with~$\overline{M}$ and~$\overline{\pi}$ satisfying the requirements as above. Since there is an isomorphism from~$G[M\cup U]$ to~$G[M'\cup U]$ which is equal to~$\pi\psi$ and~$\psi'$ on~$C$ and~$S$ respectively and there is an isomorphism from~$G[M'\cup U]$ to~$G[\overline{M}\cup U]$ which is equal to~$\overline{\pi}\overline{\psi}$ and~$\overline{\psi'}$ on~$C$ and~$S$ respectively, there is an isomorphism from~$G[M\cup U]$ to~$G[\overline{M}\cup U]$ which is equal to~$\overline{\pi}\overline{\psi}\pi\psi$ and~$\overline{\psi'}\psi'$ on~$C$ and~$S$ respectively. It also follows from the definition that the neutral element of~$Sym(C)\times \Aut'_S$ fixes all pairs~$(\phi,[M]_{\sim_{(S\cup C,U \setminus (S\cup C))}})$.
Thus, we have defined an action.

Abusing notation, we can consider the group of color preserving permutations of~$Sym(S\cup C)$ as a subgroup of~$Sym(S) \times Sym(C)$. In this sense, we claim that the group~$\Aut_{S\cup C}$ is the subgroup of~$Sym(C)\times \Aut'_S$ that stabilizes the image of~$\mathcal{M}$ under the above-mentioned embedding~$[M]_{\sim_{(S\cup C,U \setminus (S\cup C))}} \mapsto (\id,[M]_{\sim_{(S\cup C,U \setminus (S\cup C))}})$. Suppose first that~$(\psi,\psi')$ is an element of this stabilizer subgroup. We use the action on the congruence classes to show that~$(\psi,\psi') \in \Aut_{S\cup C}$. We define a bijection on the~$c$-degree dependence modules contained in~$[M]_{\sim_{(S\cup C,U \setminus (S\cup C))}}$ by mapping the elements of every class~$[M]_{\sim_{(S\cup C,U \setminus (S\cup C))}}$ with an arbitrary bijection to the elements of~$[M']_{\sim_{(S\cup C,U \setminus (S\cup C))}}$, where the class~$[M']_{\sim_{(S\cup C,U \setminus (S\cup C))}}$ is chosen such that the pair $(\id,[M']_{\sim_{(S\cup C,U \setminus (S\cup C))}})$ is the image of~$(\id,[M]_{\sim_{(S\cup C,U \setminus (S\cup C))}})$ under~$(\psi,\psi')$. Since~$M$ and~$M'$ are arbitrary, it suffices now to observe that there is an isomorphism from~$M$ to~$M'$ which is equal to~$\psi$ and~$\psi'$ on~$C$ and~$S$ respectively.
Conversely, if~$(\psi,\psi')$ is an element of~$\Aut_{S\cup C}$ then~$\psi'$ is an element of~$\Aut_{S}$ by definition of the groups~$\Aut_{S}$ and~$\Aut_{S\cup C}$. Since it must respect multiplicities it is thus also an element of~$\Aut'_{S}$.  
Let~$M$ be an arbitrary~$c$-degree dependence module in~$\mathcal{V}$. By definition of~$\Aut_{S\cup C}$ there is an~$M'\in  \mathcal{V}$ and an isomorphism from~$G[M\cup U]$ to~$G[M'\cup U]$ that agrees with~$(\psi,\psi')$ on~$S\cup C$. This implies that under the action of~$Sym(C)\times \Aut'_S$ the pair~$(\id,[M]_{\sim_{(S\cup C,U \setminus (S\cup C))}})$ is mapped to~$(\psi \id \psi^{-1},[M']_{\sim_{(S\cup C,U \setminus (S\cup C))}})= (\id,[M']_{\sim_{(S\cup C,U \setminus (S\cup C))}})$. This shows that elements of~$\Aut_{S\cup C}$ stabilize the image of~$\mathcal{M}$ under the embedding described above.
We want to compute this stabilizer. To achieve that the action is faithful, we additionally let the group act on the set~$A\cup C$. Since the group has bounded composition factors, the stabilizer can be computed in polynomial time. This proves the claim.

It remains to explain how to compute the automorphism group of~$G$ from the group~$\Aut_{U}$. 
Consider the homomorphism~$\pi\colon \Aut(G) \rightarrow Sym(U)$ that is obtained by restricting the permutations in~$\Aut(G)$ to the set~$U$. We first argue that~$\pi(\Aut(G)) \leq \Aut_U$. To see this consider an automorphism~$\alpha$ of~$G$. Since the~$c$-degree dependence modules of~$G-U$ are isomorphism invariant, the automorphism~$\alpha$ maps every such module~$M$ to a module~$M'$ such that there is an isomorphism from~$G[M\cup U]$ to~$G[M'\cup U]$ that is equal to~$\pi(\alpha)$ when restricted to~$U$, showing~$\pi(\Aut(G)) \leq \Aut_U$. We next argue that~$\pi$ is surjective. Let~$\psi$ be an element of~$\Aut_{U}$. By definition of~$\Aut_{U}$ there is a bijection from~$\mathcal{V}$ to~$\mathcal{V}$ such that if~$M$ is mapped to~$M'$ then there is an isomorphism from~$G[M\cup U]$ to~$G[M'\cup U]$ that is equal to~$\psi$ when restricted to~$U$. By Lemma~\ref{lem:bijection:on:deeg:dep:modules:yields:isomorphism}, combining all these mappings gives an automorphism of~$G$ that is equal to~$\psi$ when restricted to~$U$.
This shows that~$\pi$ is surjective. It also shows how to compute a lift of~$\psi$ to~$\Aut(G)$ such that this lift projects to~$\psi$ under~$\Aut(G)$. Using Corollary~\ref{cor:two:modules:and:bounded:color:classes} such a lift can be computed in polynomial time.
By the first isomorphism theorem, it now suffices to compute a lift for each generator in a generating set of~$\Aut_U$ and to compute the kernel of the map~$\pi$. 
The kernel of~$\pi$ is by definition the point-wise stabilizer of~$U$. This is equal to the automorphism group of a graph obtained from~$G$ by individualizing all elements of~$S$. Refining such a graph with naive vertex refinement yields a graph  of bounded color valence, for which we already know how to compute the automorphism group in polynomial time.
\end{proof}
%\endproofatend
}
%&&&&&&&&&&&&&&&&&&&&&&&&&&&&&&&&&&&&&&&&&&&&&&&&&&7
%&&&&&&&&&&&&&&&&&&&&&&&&&&&&&&&&&&&&&&&&&&&&&&&&&&7
%&&&&&&&&&&&&&&&&&&&&&&&&&&&&&&&&&&&&&&&&&&&&&&&&&&7
%&&&&&&&&&&&&&&&&&&&&&&&&&&&&&&&&&&&&&&&&&&&&&&&&&&7
%&&&&&&&&&&&&&&&&&&&&&&&&&&&&&&&&&&&&&&&&&&&&&&&&&&7
%&&&&&&&&&&&&&&&&&&&&&&&&&&&&&&&&&&&&&&&&&&&&&&&&&&7
%&&&&&&&&&&&&&&&&&&&&&&&&&&&&&&&&&&&&&&&&&&&&&&&&&&7
%&&&&&&&&&&&&&&&&&&&&&&&&&&&&&&&&&&&&&&&&&&&&&&&&&&7
%

By defining a faithful group action of the automorphism group of a graph on the~$c$-degree dependence modules it possible to gradually compute the automorphism group, yielding also an isomorphism test for such graphs.

\begin{theorem}\label{cor:iso:of:bd:gen:col:in:poly:time}
Graph isomorphism for colored graphs of generalized color valence at most~$c$ can be solved in polynomial time.
\end{theorem}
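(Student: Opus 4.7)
The plan is to derive this theorem as an immediate corollary of the reduction Lemma~\ref{lem:red:iso:to:aut:gen:bd:col:val} combined with the preceding theorem on the computation of the automorphism group of a graph of bounded generalized color valence. Concretely, given two colored input graphs $G_1$ and $G_2$ of generalized color valence at most $c$, I would first apply Lemma~\ref{lem:red:iso:to:aut:gen:bd:col:val} to construct in polynomial time a single graph $G$ of generalized color valence at most $2c$, with the property that $G_1 \cong G_2$ if and only if $G$ admits an automorphism that swaps the two universal marker vertices $v_1$ and $v_2$.

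Next, I would invoke the preceding theorem to compute $\Aut(G)$ in polynomial time, obtained as a set of generators together with a strong generating sequence (or any other succinct representation). Since the vertices $v_1$ and $v_2$ are distinguished by a unique color class $\{v_1, v_2\}$, every element of $\Aut(G)$ acts on $\{v_1, v_2\}$ either as the identity or as the transposition $(v_1\ v_2)$. Hence checking whether the transposition is realized reduces to testing membership of the subgroup fixing $v_1$ pointwise in $\Aut(G)$; equivalently, one inspects the image of $\Aut(G)$ under the projection to $\mathrm{Sym}(\{v_1, v_2\})$. Both can be performed in polynomial time with the standard permutation group machinery from \cite{seress2003permutation}.

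If such an automorphism exists, an explicit isomorphism $G_1 \to G_2$ is recovered by taking any element of the corresponding coset and restricting it to $V(G_1)$; otherwise the graphs are declared non-isomorphic. Correctness follows from the equivalence established in Lemma~\ref{lem:red:iso:to:aut:gen:bd:col:val}, and the overall running time is polynomial because each of the three steps — the reduction, the automorphism group computation, and the group-theoretic membership/image test — is polynomial. Since the bulk of the technical work has been absorbed into the reduction and the automorphism theorem, I do not anticipate a substantive obstacle in this final step; it is essentially a routine assembly of the tools already developed in the section.
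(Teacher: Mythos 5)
Your proposal is correct and takes essentially the same approach as the paper, which simply combines the preceding automorphism-group theorem with Lemma~\ref{lem:red:iso:to:aut:gen:bd:col:val}. The additional detail you give about the group-theoretic check for a swap of $v_1$ and $v_2$ is a correct and natural elaboration of what the paper leaves implicit.
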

\proofatend
%\begin{proof}
This follows directly from the previous theorem and Lemma~\ref{lem:red:iso:to:aut:gen:bd:col:val}.
%\end{proof}
\endproofatend

In  Appendices~\ref{sec:col:ref:and:double:stars}--\ref{sec:spec:bd:clique}, we apply the theorem together with Theorem~\ref{thm:iso:easy:when:prime:graphs:easy}
to show that the classes~$\forbind{K_{1,s} \dunion K_{1,s}, K_t}$,~$\forbind{P_5, K_t}$,~$\forbind{\HabcGraphFOUR{1}{0}{b}{1}, K_3}$ and~$\forbind{\HabcGraph{1}{b}{0}, K_s}$ have a polynomial-time solvable isomorphism problem. These classes include various subcases for which the complexity of the isomorphism problem was also previously not resolved.

\hereorinappendixstatement{
\section[Color refinement and forbidden double stars]{Color refinement and forbidden double stars}\label{sec:col:ref:and:double:stars}

In this section we are concerned with graphs which do not contain double stars. More precisely we consider classes of graphs that, for some positive integer~$s$, do not contain the disjoint union of two stars~$K_{1,s} \dunion K_{1,s}$ as induced subgraphs.
Our first lemma shows that bipartite graphs without double stars refine to graphs of bounded color valence under the naive vertex refinement.

\begin{theorem}\label{thm:K1s:K1s:bipartite:refinement}
In a bipartite $\forbindplain{K_{1,s} \dunion K_{1,s}}$ graph in which the vertices are colored according to a bipartition, the naive vertex refinement produces a graph of bounded color valence. Moreover, the conclusion remains true if we only forbid double stars whose centers are in the same bipartition class.
\end{theorem}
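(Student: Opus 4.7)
The plan is to show that after naive vertex refinement, the color valence is bounded by $2s$. First, I would reduce to the nontrivial case: every color class lies in a single bipartition class (since the initial coloring refines the bipartition), and if $v$ and the color class $C$ are on the same side there are no edges between them. Also, if the color class $C_v$ of $v$ is a singleton, then stability of the refinement already forces every $u \in C$ to have the same adjacency to $\{v\}$, so $v$ is either fully adjacent or fully non-adjacent to $C$. Hence I may assume $v \in A$, its color class $C_v \subseteq A$ has size $m \geq 2$, and $C \subseteq B$; let $n := |C|$, $d := |N(v) \cap C|$, and $d' := |N(u) \cap C_v|$ for any $u \in C$ (a constant by stability, satisfying $md = nd'$). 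I may also assume $\min(d, n-d) \geq s$, otherwise the bound is trivial.

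Next, I would prove the main structural claim: for any two distinct $v_i, v_j \in C_v$, the neighborhoods $X_i := N(v_i) \cap C$ and $X_j := N(v_j) \cap C$ satisfy $|X_i \setminus X_j| < s$. Since $|X_i| = |X_j| = d$ by stability, $|X_i \setminus X_j| = |X_j \setminus X_i|$; if this common value were at least $s$, picking $s$ vertices from each side produces an induced $K_{1,s} \dunion K_{1,s}$ whose two centers $v_i, v_j$ both lie in $A$ (the leaves lie in $B$, so no extra edges appear among them, and $v_i v_j$ is a non-edge by bipartiteness). This configuration is forbidden even under the weaker hypothesis that only double stars whose centers lie in the same bipartition class are forbidden, so this step handles both parts of the theorem simultaneously.

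The main obstacle I anticipate is that this structural claim by itself does not immediately bound the color valence---it only says that the rows of the biregular adjacency matrix between $C_v$ and $C$ are pairwise similar. To extract an absolute bound, I plan to sum the claim over all pairs. A direct double-counting, using that every $u \in C$ lies in exactly $d'$ of the $X_i$, gives $\sum_{i=2}^{m} |X_1 \setminus X_i| = d(m - d')$ and $\sum_{i=2}^{m} |X_i \setminus X_1| = (n-d) d'$, each strictly less than $(m-1)s$ by the claim. Applying $\min(d, n-d) \leq d$ and $\min(d, n-d) \leq n-d$ to the two estimates and adding yields $\min(d, n-d) \cdot m < 2(m-1) s$, hence $\min(d, n-d) < 2s$. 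The symmetric case $v \in B$, $C \subseteq A$ follows from the same argument with the roles of $A$ and $B$ swapped, giving the stated bound on the color valence.
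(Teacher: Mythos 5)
Your proof is correct and rests on the same two pillars as the paper's: the observation that forbidding $K_{1,s}\dunion K_{1,s}$ with both centers on one side forces same-colored vertices in that side to have neighborhoods differing by fewer than $s$ in any fixed color class on the other side, and biregularity between color classes from the stability of the refinement. The paper organizes the computation by bounding the common degree $d'$ from below (via edges into $N$) and from above (via edges into $C'\setminus N$) and forcing a contradiction around $|C|/2$; you instead sum the symmetric-difference bound over all pairs with a fixed $v_1$ and combine the two resulting inequalities linearly to extract $\min(d,n-d)<2s$ directly. The bookkeeping is a touch cleaner on your side (the paper's stated lower bound of $|C|(R-s)$ edges to $N$ is slightly too weak when $|N|>R$, though the intended stronger bound $|C|(|N|-s+1)$ repairs it), but it is essentially the same argument.
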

\begin{proof}
Let~$G$ be a bipartite graph without a~$K_{1,s} \dunion K_{1,s}$ subgraph that has the centers in the same bipartition class. We apply the naive vertex refinement to the graph~$G$. We now show that if~$C$ and~$C'$ are two color classes then the color valence of every vertex~$v$ in~$C$ towards vertices in~$C'$ is less than~$R = 2s$. Suppose otherwise. Let~$v$ be a vertex in~$C$ and let~$N$ be the neighbors of~$v$ in~$C'$. By our assumption~$|N| \geq R$ and~$|C'| - |N|\geq R$. 
Since the original graph was colored according to a bipartition and the new coloring is a refinement, the vertices in~$C$ are not adjacent to~$v$. Since~$G$ is~$\forbindplain{K_{1,s} \dunion K_{1,s}}$, 
the neighborhood of a vertex in~$C$ differs from the neighborhood of~$v$ by~$s$ or more vertices. Thus, there are more than~$|C| (R-s)$ edges between~$C$ and~$N$. 
This implies that every vertex of~$N$ is incident with more than~$|C|/2$ edges.
However, since neighborhoods cannot differ by more than~$s$ vertices, there are at most~$|C| s$ edges between~$C$ and~$C'\setminus N$. This implies that a vertex in~$C'\setminus N$ is incident with at most~$|C|s /(|C'|-|N|) \leq |C|s/R  \leq |C|/2$ edges, which gives a contradiction.
\end{proof}

In later sections  we repeatedly apply the theorem to show that isomorphism of graphs in various hereditary graph classes, among them for example $\forbindplain{P_5, K_t}$ graphs, can be solved in polynomial time. However, for some classes we also require 
a slight variant of Theorem~\ref{thm:K1s:K1s:bipartite:refinement} that also holds when the parts of the graphs are of bounded degree, as opposed to being independent sets.

\begin{theorem}\label{thm:K1s:K1s:individ:non:bipartite:case}
For every pair of positive integers~$s,t$ there exists integers~$c,d \in \mathbb{N}$ such that the following holds. In every~$\forbindplain{K_{1,s} \dunion K_{1,s}, K_t}$ graph, there is a set~$S$ of size at most~$d$, individualization of which, followed by naive vertex refinement, produces a graph of generalized color valence at most~$c$.
\end{theorem}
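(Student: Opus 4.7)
The plan is to reduce to Theorem~\ref{thm:K1s:K1s:bipartite:refinement} by proving a strengthened statement: after individualizing a set $S$ of bounded size and applying naive vertex refinement, every color class of size exceeding a bound $c'$ is an independent set. Once this is achieved, for any two large color classes $C, C'$ the induced bipartite subgraph $G[C\cup C']$ inherits the property of being $K_{1,s}\dunion K_{1,s}$-free from $G$ (with any two star-centers necessarily lying in the same part), and the stability of the whole-graph refinement ensures that $\{C,C'\}$ is already a stable coloring of $G[C\cup C']$; Theorem~\ref{thm:K1s:K1s:bipartite:refinement} then bounds the color valence between $C$ and $C'$, while within any large independent class the color valence is zero. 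Hence bounded generalized color valence follows.

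I would prove the strengthened statement by induction on $t$, with the base case $t = 2$ immediate since the graph is edgeless. For the inductive step $t \geq 3$, I would individualize a maximum clique $K$ of $G$ (of size $\omega(G) \leq t-1$). After refinement, the vertices outside $K$ partition into classes $V_S = \{v \in V\setminus K : N(v) \cap K = S\}$ indexed by subsets $S \subsetneq K$. For each $S$ with $|S| \geq 1$, the subgraph $G[V_S]$ lies in the common neighborhood of the clique $S$ in $G$, hence is $K_{t-|S|}$-free, and the induction hypothesis applies to it with a strictly smaller first parameter. Taking the union of $K$ with the individualization sets obtained recursively gives a set $S$ of bounded size, and the standard observation that whole-graph refinement restricted to each $V_S$ is at least as fine as the $V_S$-internal refinement ensures that every large color class of the whole-graph refinement is a subset of some independent set produced by the recursion and is therefore independent itself.

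The main obstacle is treating the class $V_\emptyset$ of vertices non-adjacent to all of $K$, whose clique number is not automatically reduced by the individualization of $K$. I would handle this using the $K_{1,s}\dunion K_{1,s}$-freeness more subtly, combined with $K_t$-freeness and Ramsey-type arguments: since neighborhoods in $G$ are $K_{t-1}$-free, large subsets of them contain large independent sets that can serve as induced leaves of a $K_{1,s}$, and a careful choice of such leaves in the neighborhoods of a vertex of $V_\emptyset$ and a vertex of $K$ (while avoiding cross edges) would yield a forbidden induced $K_{1,s}\dunion K_{1,s}$ whenever $V_\emptyset$ has too rich a structure. This forces $V_\emptyset$ to admit, after a further bounded individualization, a structural reduction that fits the induction hypothesis. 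The boundary regime, where $\omega(G)$ is comparable to $s$, yields only finitely many base cases for each $s$ and can be handled separately by iteratively individualizing a vertex of high degree (at least $2s$): the $K_{1,s}\dunion K_{1,s}$-freeness forces the non-neighborhood of such a vertex to have bounded maximum degree, and recursion into its neighborhood (of strictly smaller clique number) terminates after a constant number of steps, leaving every vertex in a large color class with bounded total degree.
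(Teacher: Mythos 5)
Your proposal has two genuine gaps, one of which is fatal to the overall plan.

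\textbf{The strengthened statement is false.} You aim to prove that after individualizing a bounded set and refining, every sufficiently large color class is an independent set. But consider $G = m K_2$ (a disjoint union of $m$ edges) with $m$ large. This graph is $K_3$-free and $K_{1,s}\dunion K_{1,s}$-free for every $s\geq 2$. After individualizing any set $S$ of size $d$ and running naive vertex refinement, the at least $2(m-d)$ vertices in edges avoiding $S$ all have degree~$1$, degree-$1$ neighbor, etc., and remain fused into a single color class of size $\geq 2(m-d)$ which contains $m-d$ edges. So no bounded individualization can make all large color classes independent. The paper itself anticipates this; the remark immediately following the theorem states ``we are not able to decompose our graph into color classes that are independent sets, we can only reason about color valence between large color classes.'' The actual proof therefore does \emph{not} reduce to Theorem~\ref{thm:K1s:K1s:bipartite:refinement} by producing a bipartite decomposition; it instead first reduces to bounded-degree color classes and then reruns a direct edge-counting argument (very close in spirit to the proof of Theorem~\ref{thm:K1s:K1s:bipartite:refinement}, but tolerating $\leq d$ intra-class neighbors per vertex) to bound the valence between pairs of large color classes.

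\textbf{The treatment of $V_\emptyset$ is not worked out, and the chosen individualized set does not help.} You correctly identify that after individualizing a maximum clique $K$, the part $V_\emptyset$ of vertices with no neighbor in $K$ is not automatically of reduced clique number, and you propose that Ramsey-type reasoning combined with $K_{1,s}\dunion K_{1,s}$-freeness should force structure on $V_\emptyset$. But this is precisely the hard step, and the plan as stated would fail: a vertex $v\in V_\emptyset$ can have arbitrarily large degree with $N(v)$ entirely contained in the joint neighborhood of the clique (so any $K_{1,s}$ grown at $v$ has its leaves dominated by $K$, and no induced $K_{1,s}\dunion K_{1,s}$ arises). The paper avoids this by individualizing a large \emph{star} rather than a maximum clique: every vertex adjacent to the center lands in a $K_{t-1}$-free color class (since it lies in the neighborhood of a vertex), so the induction on $t$ applies to those classes; one then argues separately (using the independence of the star's many leaves to produce an induced $K_{1,s}$ anchored at the star's center, non-adjacent to the residual vertex) that the remaining class has bounded degree. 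The boundary-regime paragraph in your writeup has a similar issue: individualizing a vertex of high degree and claiming the non-neighborhood has bounded maximum degree requires exactly the same unproven disjoint-independent-star argument. As written, then, the proposal establishes neither the (false) intermediate claim nor a substitute for it, and the core Ramsey step remains a sketch.
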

\begin{proof}
If the graph~$G$ does not contain a star of large size, then by an easy Ramsey argument, the graph has bounded degree implying the claim of the lemma (see the introduction of~\cite{DBLP:conf/wg/KratschS12} for details). Otherwise, we color the vertices of~$G$ according to adjacency type to some large star in~$G$. Note that each color class is either of bounded degree or~$\forbindplain{K_{t-1}}$. Since the number of color classes is bounded, by induction on~$t$ we thus see that there is a set~$S'$ of bounded size, such that coloring the vertices by adjacency type to~$S'$ produces color classes of bounded degree. We apply the naive vertex refinement to the graph colored in this way. 

We now show that if~$C$ and~$C'$ are two sufficiently large color classes then the color valence of every vertex~$v$ in~$C$ towards vertices in~$C'$ is bounded by some constant~$R$. Suppose otherwise. Let~$v$ be a vertex in~$C$ and let~$N$ be the neighbors of~$c$ in~$C'$. By our assumption~$|N| > R$. There are~$|C| -d -1 $ vertices in~$C$ that are not adjacent to~$v$. Since~$G$ is~$\forbindplain{K_{1,s} \dunion K_{1,s}}$, 
there are at least~$(|C| - d) (R-s)$ edges between~$C$ and~$N$. If we assume that~$R$ is larger than~$ks$ this implies that every vertex of~$N$ is incident with at least~$(|C| - d) (k-1)/k$ edges with an endpoint in~$C$.
However, there are at most~$(|C| - d) s + d R$ edges between~$C$ and~$C'\setminus N$. This implies that a vertex in~$C'\setminus N$ is incident with at most~$((|C|-d)s +d (|C'|-|N|))/(|C'|-|N|) \leq (|C|-d)s/R +d  \leq (|C|-d)/k + d$ edges. If~$k$ and~$|C|$ are sufficiently large, this gives a contradiction.
\end{proof}

Note that the theorem and its proof are similar to Theorem~\ref{thm:K1s:K1s:bipartite:refinement} and its proof. However, since we are not be able to decompose our graph into color classes that are independent sets, we can only reason about color valence between large color classes.

\begin{corollary}\label{cor:poly:double:star:Kt}
Graph isomorphism for $\forbindplain{K_{1,s} \dunion K_{1,s}, K_t}$ graphs can be solved in polynomial time.
\end{corollary}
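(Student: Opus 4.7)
The strategy is to reduce isomorphism testing in this class to isomorphism testing on graphs of bounded generalized color valence, to which Theorem~\ref{cor:iso:of:bd:gen:col:in:poly:time} applies. Let $G_1$ and $G_2$ be two $\forbindplain{K_{1,s} \dunion K_{1,s}, K_t}$ graphs on $n$ vertices. By Theorem~\ref{thm:K1s:K1s:individ:non:bipartite:case} there exist constants $c = c(s,t)$ and $d = d(s,t)$ such that each $G_i$ contains a set $S_i \subseteq V(G_i)$ of size at most $d$, individualization of which followed by naive vertex refinement yields a coloring of $G_i$ with generalized color valence at most $c$.

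The plan is to enumerate all candidate sets and matchings. First I would compute such a set $S_1$ for $G_1$; inspection of the proof of Theorem~\ref{thm:K1s:K1s:individ:non:bipartite:case} shows that $S_1$ can be produced in polynomial time (alternatively, since $d$ is constant, one can simply try all $\binom{n}{d} = \BigO(n^d)$ subsets of $V(G_1)$ of size at most $d$, retaining only those whose individualization-refinement attains generalized color valence at most $c$). Next, for every injective map $\sigma\colon S_1 \hookrightarrow V(G_2)$ (there are at most $\BigO(n^d)$ such maps), I would individualize the vertices of $S_1$ in $G_1$ and the vertices of $\sigma(S_1)$ in $G_2$, using the \emph{same} fresh colors according to the correspondence $\sigma$, and then apply the joint naive vertex refinement to the disjoint union. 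If the resulting pair $(G_1', G_2')$ has coinciding color class multiplicities and both graphs have generalized color valence at most $c$ (which can be checked in polynomial time), then I would invoke Theorem~\ref{cor:iso:of:bd:gen:col:in:poly:time} to decide whether $G_1' \cong G_2'$ as colored graphs. The algorithm accepts as soon as some $\sigma$ yields isomorphic refined graphs, and rejects otherwise.

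Correctness is straightforward: any isomorphism $\phi\colon G_1 \to G_2$ restricts to a bijection between $S_1$ and some set $S_2 := \phi(S_1) \subseteq V(G_2)$ of the same size; the choice $\sigma := \phi|_{S_1}$ will be tried by the algorithm, and since individualization-refinement is an isomorphism invariant operation, $\phi$ extends to a color-preserving isomorphism of the refined graphs. Conversely, any isomorphism of the colored refined graphs is in particular an isomorphism of $G_1$ and $G_2$ since refinement only adds information. The running time is polynomial: there are $\BigO(n^d)$ choices for $\sigma$, refinement takes polynomial time, and each call to the bounded-generalized-color-valence isomorphism test runs in polynomial time.

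The only potential obstacle is ensuring that individualization on both sides produces a valid setup for the bounded generalized color valence test, i.e., that after fixing $\sigma$ the refined $G_2$ indeed has generalized color valence at most $c$. This is not guaranteed a priori for every $\sigma$, but we simply skip those $\sigma$ for which it fails; the correct $\sigma$ corresponding to an actual isomorphism will work because Theorem~\ref{thm:K1s:K1s:individ:non:bipartite:case} applies symmetrically to $G_2$ and the property is isomorphism invariant.
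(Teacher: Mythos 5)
Your proof is correct and follows exactly the paper's route: the paper's proof is the one-line statement that the corollary follows by combining Theorem~\ref{thm:K1s:K1s:individ:non:bipartite:case} with Theorem~\ref{cor:iso:of:bd:gen:col:in:poly:time}, and your argument is simply that combination spelled out in full (enumerating the $\BigO(n^d)$ candidate individualizations, jointly refining, and invoking the bounded-generalized-color-valence isomorphism test).
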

\begin{proof}
The corollary follows by combining Theorem~\ref{thm:K1s:K1s:individ:non:bipartite:case} and Corollary~\ref{cor:iso:of:bd:gen:col:in:poly:time}.
\end{proof}

\section[P5-free graphs of bounded clique number]{\forbindONEplain{$P_5$} graphs of bounded clique number}
\label{sec:p5:bd:clique}

In this section we analyze the structure of~$\forbindplain{P_5, K_t}$ graphs developing means to solve the isomorphism problem for such a class. We will need a theorem by Bacs{\'o} and Tuza~\cite{MR1113210} that says that every connected $\forbindONE{P_5}$ graph has a dominating~$P_3$ or a dominating clique. Recall that a \emph{dominating set} is a set of vertices~$S$ such that every vertex has a neighbor in~$S$.

An instructive class of examples to have in mind are graphs obtained from disjoint unions of bipartite graphs to which a~$\forbindplain{P_5, K_{t-3}}$ graph is joined. In this class it is easy to see that even when allowing individualization of a bounded number of vertices, the naive vertex refinement does not produce graphs of bounded color valence. In the following we employ modules to circumvent this problem.

If~$G$ is a colored graph and~$S$ a set of vertices then \emph{a refinement of~$G$ by adjacency towards~$S$} is a coloring obtained by refining the coloring of~$G$ such that the vertices of~$S$ are singleton color classes and two vertices~$v$ and~$v'$ outside of~$S$ are in the same color class if and only if they have the same color in the original coloring of~$G$ and for every vertex of~$s\in S$ either both~$v$ and~$v'$ are adjacent to~$s$ or both are non-adjacent to~$s$.

\begin{lemma}\label{lem:P5:kn:most:parts:are:modules}
For all positive integers~$t$ and~$d$,
there exists a positive integer~$\ell$ such that 
refining an uncolored, connected~$\forbindplain{P_5,K_t}$ graph~$G$ by adjacency towards a dominating set of size~$d$ 
yields color classes for which the following holds: for each color~$\mathcal{C}$, among the components in the graph induced by the color class~$\mathcal{C}$, there are at most~$\ell$ components that do not form a colored module in the entire graph.
\end{lemma}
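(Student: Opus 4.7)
The plan is to bound the number $m$ of ``bad'' components (components of $G[\mathcal{C}]$ that fail to be colored modules in $G$) by showing that, for large $m$, one can extract an induced $P_5$ or a $K_t$, contradicting the hypothesis. For every bad component $K_i$ I would fix a \emph{witness} $v_i \in V(G)\setminus K_i$ with $N(v_i)\cap K_i$ and $K_i\setminus N(v_i)$ both non-empty. Since distinct components of $G[\mathcal{C}]$ have no edges between them, $v_i\notin \mathcal{C}$, and since all vertices of $\mathcal{C}$ share the same adjacency pattern to $S$, $v_i\notin S$. Walking a path inside the connected subgraph $K_i$ from some neighbor of $v_i$ to some non-neighbor produces an edge $w_1^{(i)}w_2^{(i)}\subseteq K_i$ with $v_i\sim w_1^{(i)}$ and $v_i\not\sim w_2^{(i)}$.

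First I would establish that no vertex witnesses two distinct bad components: if $v$ witnessed both $K$ and $K'$ with associated edges $w_1w_2\subseteq K$ and $u_1u_2\subseteq K'$, then since $w_k\not\sim u_\ell$ for all $k,\ell$ (different components of $G[\mathcal{C}]$) the five vertices $w_2, w_1, v, u_1, u_2$ induce a $P_5$ in this order. So distinct bad components yield distinct witnesses, and by pigeonhole over the at most $2^d+d$ color classes, at least $m/(2^d+d)$ of these witnesses share a common color $\mathcal{D}\neq \mathcal{C}$. The single-witness bound also forces that, for every ordered pair $i\neq j$, the witness $v_i$ is either fully adjacent or fully non-adjacent to $K_j$. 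I then apply Ramsey to this finite coloring of ordered pairs to obtain a sub-collection with a uniform pattern (AA, AN, NA, or NN), and apply Ramsey once more combined with $K_t$-freeness (via $R(t,3)$ on the subgraph induced by the witnesses) to further assume that three of the selected witnesses are pairwise non-adjacent.

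It then remains to exhibit an induced $P_5$ in each pattern. Because $\mathcal{C}\neq \mathcal{D}$ we have $N_\mathcal{C}\neq N_\mathcal{D}$, and because $S$ is dominating the sets $N_\mathcal{C}$ and $N_\mathcal{D}$ are non-empty, so there is some $s\in N_\mathcal{C}\triangle N_\mathcal{D}$. In the NN pattern, if $s\in N_\mathcal{D}\setminus N_\mathcal{C}$ the vertices $v_j, s, v_i, w_1^{(i)}, w_2^{(i)}$ induce a $P_5$, and if instead $s\in N_\mathcal{C}\setminus N_\mathcal{D}$ the vertices $v_i, w_1^{(i)}, s, w_1^{(j)}, v_j$ do. In the mixed pattern (say $v_i$ adjacent to $K_j$ and $v_j$ non-adjacent to $K_i$ with $i<j$) the vertices $v_j, w_1^{(j)}, v_i, w_1^{(i)}, w_2^{(i)}$ form an induced $P_5$. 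In the AA pattern, using three pairwise non-adjacent witnesses $v_1, v_2, v_3$, one checks that $w_2^{(1)}, v_2, w_1^{(3)}, v_1, w_2^{(2)}$ form an induced $P_5$, since all potential shortcut edges are killed by the witness asymmetry $v_r\not\sim w_2^{(r)}$ together with the absence of edges between different components of $G[\mathcal{C}]$.

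The main obstacle will be the AA pattern: any attempt to build an induced $P_5$ from just two bad components fails because in AA the witness $v_i$ is adjacent to \emph{every} vertex of $K_j$, including the ``bad'' endpoint $w_2^{(j)}$, producing a shortcut edge. Overcoming this requires using three bad components and three pairwise non-adjacent witnesses simultaneously, so that the candidate path alternates between each witness's ``own'' component and a third ``foreign'' component and thereby avoids every shortcut. Composing the pigeonhole factor $2^d+d$ with the Ramsey bound on $4$-colorings of ordered pairs and with $R(t,3)$ then yields an explicit $\ell(t,d)$ as claimed.
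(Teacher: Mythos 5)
Your proof is correct, but you have overlooked a simpler construction that the paper uses, and your stated motivation for going to three components in the AA case is based on a false assertion.

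You claim that in the AA pattern (where each of $v_i$, $v_j$ is fully adjacent to the other's component) ``any attempt to build an induced $P_5$ from just two bad components fails'' because of the shortcut edge $v_i \sim w_2^{(j)}$. This is not so: that edge can simply be made an edge \emph{of} the path. Using the vertex $s \in N_{\mathcal{C}} \triangle N_{\mathcal{D}}$ (which you already introduce for the NN case), take
\[
w_2^{(j)} - v_i - s - v_j - w_2^{(i)} \quad\text{if } s\in N_{\mathcal{D}}\setminus N_{\mathcal{C}}, \qquad
v_i - w_2^{(j)} - s - w_2^{(i)} - v_j \quad\text{if } s\in N_{\mathcal{C}}\setminus N_{\mathcal{D}}.
\]
In either case every required edge holds ($v_i$ adjacent to all of $K_j$, $v_j$ adjacent to all of $K_i$, $s$ adjacent to exactly one of the colors $\mathcal{C},\mathcal{D}$) and every non-edge holds ($v_i\not\sim v_j$, $v_r\not\sim w_2^{(r)}$, $w_2^{(i)}\not\sim w_2^{(j)}$ since they lie in distinct components of $G[\mathcal{C}]$, and $s$ misses the side it is not adjacent to). This is exactly what the paper does. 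The consequence is that the paper avoids Ramsey entirely: from two non-adjacent same-color witnesses one always extracts a $P_5$ in each of the NN, mixed, and AA patterns, so same-color witnesses are \emph{pairwise adjacent}, and by $K_t$-freeness there are at most $t-1$ of them per color, yielding $\ell \le (t-1)(2^d+d)$. Your Ramsey-based route (uniformize the pattern, then extract three pairwise non-adjacent witnesses via $R(t,3)$, then build a $P_5$ spanning three components) is valid and your $P_5$ verifications in the NN, mixed, and AA patterns all check out, but it produces a vastly weaker bound on $\ell$ and rests on a needless extra hypothesis. The NN and mixed constructions you give coincide with the paper's, up to relabeling.
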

\begin{proof}
Let~$S$ be a dominating set in an uncolored, connected graph~$G$ which is~$\forbindplain{P_5,K_t}$ and let~$\mathcal{C}$ be a color class in the graph obtained from~$G$ by refining towards adjacency. Let~$C$ be a component of the graph induced by vertices of color~$\mathcal{C}$. 
If~$C$ is not a module then there exists a vertex~$v_C\notin C$ and vertices~$u_C,u'_C\in C$ such that~$v_C$ is adjacent to~$u_C$ but not adjacent to~$u'_C$. Since~$C$ is connected we can choose~$u_C$ and~$u'_C$ to be adjacent. We call the vertex~$v_C$ a distinguishing vertex for~$C$. We first claim that a vertex~$v\notin C$ cannot be a distinguishing vertex for two distinct components~$C$ and~$C'$ at the same time (i.e., if~$C\neq C'$ then~$v_C \neq v_{C'}$). Indeed, otherwise the vertices~$u_C,u'_C,v_C,u_{C'},u'_{C'}$ would form a~$P_5$.
Since there is only a bounded number of color classes (at most~$2^d+d$), it suffices to show for each color class~$\mathcal{D}$ that the number of components~$C$ in a color class~$\mathcal{C}$ that have a distinguishing vertex of color~$\mathcal{D}$ is bounded.
Since~$G$ is~$\forbindONEplain{K_t}$, it suffices to show that for distinct components~$C$ and~$C'$ within the graph induced by~$\mathcal{C}$, the vertices~$v_{C}$ and~$v_{C'}$ are adjacent if they have the same color. Suppose this is not the case. Since~$v$ and~$v'$ have the same color,~$\mathcal{D}$ say,  and the coloring is by adjacency towards~$S$, there is a vertex~$s$ in~$S$ that distinguishes vertices in~$\mathcal{D}$ from those in~$\mathcal{C}$. If~$v_{C}$ is not adjacent to~$u_{C'}$ and~$v_{C'}$
is not adjacent to~$u_{C}$ then~$v_{C},u_{C'},v_{C'}, u_{C}$ and~$s$ induce a~$P_5$, which is a contradiction. Without loss of generality we can thus assume that~$v_{C}$ is adjacent to~$u_{C'}$. Since~$v_{C}$ cannot distinguish vertices in two different components,~$v_{C}$ is also adjacent to~$u'_{C'}$. Since the set~$\{v_{C'},u_{C'}, v_{C},u_{C},u'_{C}\}$ cannot induce a~$P_5$,~$v_{C'}$ must be adjacent to a vertex in~$C$ and thus all vertices in~$C$. However, in this case~$v_{C},u'_{C'},v_{C'}, u'_{C}$ and~$s$ would induced a~$P_5$, which shows that our assumption was false and thus~$v_{C}$ and~$v_{C'}$ are adjacent.
\end{proof}

We strengthen the lemma to show that we can choose the dominating set so that the clique number in the non-module components becomes arbitrarily small.

\begin{lemma}
For~$n \in \mathbb{N} \setminus\{0\}$ and~$i\in \{1,\ldots,t\}$ there exists an integer~$d$ such that in every connected~$\forbindplain{P_5,K_t}$ graph there is a dominating set~$S$ of size at most~$d$
such that refining by adjacency towards~$S$ yields color classes for which the following holds: 
for each color~$\mathcal{C}$, every component in the graph induced by the color class~$\mathcal{C}$ is a module or~$\forbindONEplain{K_i}$.
\end{lemma}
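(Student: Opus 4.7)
The plan is to prove the statement by induction on $i$, descending from $i=t$. The base case $i=t$ is immediate: since $G$ is $\forbindplain{P_5,K_t}$, every induced subgraph is automatically $\forbindONEplain{K_t}$, so any Bacs\'o--Tuza dominating set of $G$ (of size at most $\max(3,t-1)$) witnesses the claim.

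For the inductive step, I would first apply the inductive hypothesis to obtain a dominating set $S_0$ of bounded size such that, after refining by $S_0$, every component of every color class is a module or $\forbindONEplain{K_{i+1}}$. Call a component \emph{bad} if it is non-modular and contains~$K_i$, and let $\mathcal{B}$ denote the set of bad components. Since any non-modular component must be $\forbindONEplain{K_{i+1}}$, every $C\in\mathcal{B}$ is a connected $\forbindplain{P_5,K_{i+1}}$ graph, so the Bacs\'o--Tuza theorem gives a dominating set $D_C\subseteq C$ of size at most $\max(3,i)$ (either a dominating $P_3$ or a dominating clique of size at most $i$). By Lemma~\ref{lem:P5:kn:most:parts:are:modules} applied to $S_0$, the number of non-modular components per color class is bounded, and since the number of color classes is at most $2^{|S_0|}+|S_0|$, the set $\mathcal{B}$ has size bounded by a function of $t$ and $|S_0|$. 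Setting $S := S_0 \cup \bigcup_{C\in\mathcal{B}} D_C$ therefore keeps the total size bounded.

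The verification that refining by~$S$ yields the claimed property rests on three observations. First, every new component is contained in a single old component, since an edge in the subgraph of a refined color class is also an edge in the subgraph of the containing old color class. Second, every modular old component $M$ remains intact: every newly added vertex lies in some non-modular $C\in\mathcal{B}$ and hence outside~$M$, and modularity of~$M$ forces such a vertex to be uniformly adjacent to $M$, so $M$ is neither split nor stripped of its module status. Third, for $C\in\mathcal{B}$ and any non-empty $A\subseteq D_C$ the set $C_A := \{v \in C\setminus D_C : N(v)\cap D_C = A\}$ has clique number at most $i-1$: for a clique $Q\subseteq C_A$ and any $p\in A$, the set $Q\cup\{p\}$ is a clique of size $|Q|+1$ inside $C$, which is $\forbindplain{K_{i+1}}$.

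Combining these, each new component is either an intact modular old component (hence a module) or a subset of some non-modular old component. A non-modular old component outside $\mathcal{B}$ already has clique number at most $i-1$, and a non-modular old component $C\in\mathcal{B}$ is broken up into pieces each contained in some $C_A$ with $A\neq\emptyset$ (as $D_C$ dominates $C$), again of clique number at most $i-1$; in either case the new component is $\forbindONEplain{K_i}$, completing the induction. The hard part will be the persistence argument for modular components: one must be careful that adding many dominating sets $D_C$ simultaneously does not inadvertently split some modular component, which works out because the added vertices always come from non-modular components and modularity forces any outside vertex to be uniformly adjacent to the modular component.
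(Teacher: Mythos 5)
Your proof is correct and follows essentially the same route as the paper's: start from the inductive hypothesis for~$i+1$, invoke Lemma~\ref{lem:P5:kn:most:parts:are:modules} to bound the number of non-modular components, add a Bacs\'o--Tuza dominating set inside each of them, and then argue that the common neighbor inside the added dominating set knocks the clique number down by one while the persistence observation (refining by vertices outside a module keeps it a module) preserves the modular components. The only small variation is that you add dominating sets only to those non-modular components that actually contain a~$K_i$, whereas the paper does so for every non-modular component, which is an immaterial optimization.
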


\begin{proof}
Suppose~$n \in \mathbb{N} \setminus\{0\}$,~$i\in \{1,\ldots,t\}$ and let~$G$ be a connected graph that is~$\forbindplain{P_5,K_n}$.
We show the statement by downward induction on~$k$. For~$i= t$ there is nothing to show, since the graph~$G$ is~$\forbindONEplain{K_t}$ by assumption.
Suppose now that~$i<t$. By induction we may assume that for~$i+1$ we have already shown the statement.

Suppose~$S$ is a dominating set such that refining by adjacency towards~$S$ produces colors within which the components are modules or~$\forbindONEplain{K_{t+1}}$. By the induction hypothesis we can choose this set to be of bounded size.
By Lemma~\ref{lem:P5:kn:most:parts:are:modules} there is only a bounded number of components in the graphs induced by the color classes that do not form modules. For each such component we pick a dominating set of smallest size. By the theorem of Bacs{\'o} and Tuza~\cite{MR1113210} this dominating set has bounded size. Let~$S'$ be the union of~$S$ and all dominating sets picked for the components in the color classes. The size of~$S'$ is bounded by a function of~$t$ and~$i$. We claim that refining by adjacency produces color classes which are~$\forbindONEplain{K_i}$ or modules. To see this, note that in general when refining by adjacency towards vertices outside of a module~$M$, the set~$M$ remains a module. For all components~$C$ that were not a module, the set~$S'$ used to refine by adjacency includes a dominating set within the graph induced by~$C$. Thus, for all vertices in the new non-singleton color classes which~$C$ splits into, there is a common neighbor in the dominating set~$S'$. Thus, since the component~$C$ was previously~$\forbindONE{K_{i+1}}$, the new color classes are now~$\forbindONE{K_i}$.
\end{proof}

The lemma shows in fact that there is a set of bounded size such that the resulting components are all modules.

\begin{corollary}
For every positive integer~$t$ there exists an integer~$d$  such that in every connected~$\forbindplain{P_5,K_t}$ graph there is a dominating set~$S$ of size at most~$d$
such that refining by adjacency towards~$S$ yields color classes for which the following holds: 
for each color~$\mathcal{C}$, all components in the graph induced by the color class~$\mathcal{C}$ form a colored module in the entire graph.
\end{corollary}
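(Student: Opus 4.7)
My proof plan is to derive the corollary as a direct specialization of the preceding lemma by choosing $i=2$. The preceding lemma guarantees that, for this choice of $i$ (and any fixed $t$), there is a constant $d$ such that every connected $\forbindplain{P_5,K_t}$ graph admits a dominating set $S$ of size at most $d$ for which, after refining by adjacency towards $S$, every component inside each color class is either a colored module of the whole graph or is $\forbindONEplain{K_2}$.

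The only remaining point is to observe that a component that is $\forbindONEplain{K_2}$ has no edges. Since we are looking at a \emph{component} of the subgraph induced by one color class, connectedness of that component forces it to consist of a single vertex. A single vertex is trivially a colored module, because for any outside vertex the adjacency condition on same-colored pairs inside the module is vacuous. Consequently, under this choice of $S$, every component in every color class is a colored module of the whole graph, which is exactly the conclusion of the corollary.

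The main (and essentially only) thing to be careful about is the somewhat awkward convention on what $\forbindONEplain{K_i}$ means for small $i$; my plan is to simply note that $K_2$-freeness is equivalent to edgelessness, so that the alternative ``$\forbindONEplain{K_2}$'' in the lemma collapses to ``isolated vertex'' within a single component. No new combinatorics or Ramsey-type argument is needed beyond the lemma itself; the step is purely bookkeeping.
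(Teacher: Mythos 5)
Your argument is correct and essentially the same as the paper's: the paper specializes the preceding lemma to $i=1$ (so that every non-module component would have to be $\forbindONEplain{K_1}$, i.e.\ empty, hence cannot exist), whereas you specialize to $i=2$ and add the one-line observation that a connected $\forbindONEplain{K_2}$ component is a single vertex, which is trivially a colored module. The two choices are interchangeable and the step is pure bookkeeping either way.
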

\begin{proof}
Since every~$\forbindONEplain{K_1}$ graph is empty, the corollary is a reformulation of the previous lemma for the case~$i = 1$.
\end{proof}

The corollary provides us with a mechanism to decompose a~$\forbindONEplain{P_5}$  graph into modules which a have smaller clique number.

\begin{theorem}\label{cor:P5:Kn:iso:poly:time}
For every positive integer~$t$, graph isomorphism of $\forbindplain{P_5, K_t}$ graphs can be solved in polynomial time.
\end{theorem}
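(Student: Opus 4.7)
My approach combines the preceding corollary with the colored modular decomposition framework (Theorem~\ref{thm:iso:easy:when:prime:graphs:easy}) and the algorithm for bounded generalized color valence (Theorem~\ref{cor:iso:of:bd:gen:col:in:poly:time}), proceeding by induction on the number of vertices.

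First, I would reduce to connected graphs: for disconnected inputs, recursively compute invariants of the components (using Theorem~\ref{thm:iso:algo:replaces:inv} to convert the recursive isomorphism test into a complete invariant) and match them as multisets. For a connected $\forbindplain{P_5,K_t}$ graph $G$, the corollary immediately preceding the theorem guarantees a dominating set $S$ of size at most $d=d(t)$ such that, after refining the coloring by adjacency toward $S$, every connected component of every color class becomes a colored module of $G$. Since $|S|$ is bounded, I would enumerate candidates for $S$ by trying all size-$d$ subsets in $n^{d}$ time and fix one in an isomorphism-invariant manner, for instance by taking a lexicographically smallest candidate producing the required property (again turning the recursive isomorphism test into an invariant via Theorem~\ref{thm:iso:algo:replaces:inv}).

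Having fixed $S$, I would define the decomposition functor that assigns to $G$ the partition consisting of the singletons $\{s\}$ for $s\in S$ together with the components of the refined color classes outside $S$. By the corollary, each block is a colored module, and every non-singleton block lies strictly inside $V(G)\setminus S$ and hence is a smaller $\forbindplain{P_5,K_t}$ graph to which recursion applies. Applying Theorem~\ref{thm:iso:easy:when:prime:graphs:easy} then reduces the isomorphism problem to handling the colored quotient graph, iterating the decomposition if necessary until a prime quotient is reached.

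The crux --- and what I expect to be the main obstacle --- is the prime quotient graph. By construction its vertex set carries a proper coloring with at most $2^d+d$ classes (each original color class's intra-class edges vanish once its components are collapsed), and the quotient remains $\forbindplain{P_5,K_t}$. I would argue that naive vertex refinement then drives the quotient into bounded generalized color valence: between any two color classes the induced bipartite subgraph is $P_5$-free, and a counting argument in the spirit of the proof of Theorem~\ref{thm:K1s:K1s:bipartite:refinement}, combined with primality (which forbids distinct vertices of the same color having identical outside adjacencies), bounds the color valence or co-color valence from each vertex toward each other class. Theorem~\ref{cor:iso:of:bd:gen:col:in:poly:time} then finishes the isomorphism test in polynomial time. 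A minor technical subtlety is ensuring that the simulated invariants from Theorem~\ref{thm:iso:algo:replaces:inv} remain consistent across the recursive calls and the quotient's recoloring, which is handled by the standard color-hashing scheme.
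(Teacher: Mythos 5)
Your plan has the same overall shape as the paper's proof — find a small dominating set $S$ via the preceding corollary, collapse the resulting modules, and show the quotient has bounded color valence — but two of the steps would not go through as you describe.

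First, the appeal to Theorem~\ref{thm:iso:easy:when:prime:graphs:easy}. The paper explicitly cautions, directly after its proof of this theorem, that the modular decomposition here is formed only a bounded number of times, and that ``the fact that vertices are individualized before the modules are formed prevents us from applying Theorem~\ref{thm:iso:easy:when:prime:graphs:easy} directly.'' Your proposed repair --- fixing $S$ as ``a lexicographically smallest candidate'' --- presupposes a total order on candidate vertex subsets that is itself isomorphism-invariant, which is not available on an unlabeled graph; and using Theorem~\ref{thm:iso:algo:replaces:inv} to order candidates by their computed invariants is circular, because the invariant of the decomposed structure depends on which $S$ (hence which decomposition) you are in the process of choosing. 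The paper instead guesses corresponding $S$'s in both input graphs (an $O(n^d)$ branching) and, crucially, inducts on $t$ rather than on $n$: since $S$ is dominating, every resulting module is $K_{t-1}$-free, so the recursion has depth at most $t$, and the quotient is dispatched directly rather than fed back through the decomposition machinery.

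Second, the claim that the quotient has bounded (generalized) color valence is not established by the facts you cite. That the bipartite graph between two quotient color classes $C,C'$ is $P_5$-free does not rule out large double stars: $K_{1,s}\dunion K_{1,s}$ is itself $P_5$-free, its two components being stars. The missing ingredient in the paper's argument is the origin of the coloring: since colors are obtained by refining by adjacency toward $S$, there is a vertex $s\in S$ adjacent to every vertex of one class and to no vertex of the other, and any induced $2K_2$ between $C$ and $C'$ would together with $s$ form an induced $P_5$. Hence there is not even a single induced $2K_2$ between two quotient classes, which is much stronger than the double-star freeness needed for Theorem~\ref{thm:K1s:K1s:bipartite:refinement}. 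Primality of the quotient alone does not yield this bound; the witness $s\in S$ is the point.
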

\begin{proof}
We show the statement by induction on~$t$. Let~$G_1$ and~$G_2$ be~$\forbindplain{P_5, K_t}$ graphs. By the previous corollary, we can guess in both graphs corresponding dominating sets~$S$ of bounded size such that refinement by adjacency produces color classes in which the components are modules. These modules are~$\forbindONE{K_{t-1}}$ since~$S$ is dominating. By induction we can compute the isomorphism type of each such module. Since the classical replacement operation is reversible,
we can replace each module by a suitably colored  single vertex and it suffices to solve isomorphism of the two quotient graphs that are obtained.
Observe that the color classes in the quotient graph are independent sets. We now argue that if~$C$ and~$C'$ are color classes in the quotient then there is no~$2K_1$ in~$C\cup C'$. Supposing otherwise, for this to happen neither~$C$ nor~$C'$ can be contained in~$S$ since vertices in~$S$ are singletons. However, since colors are obtained by refining by adjacency towards~$S$, there is a vertex~$s\in S$ that is adjacent to all vertices in one of the color classes and adjacent to no vertex of the other color class. This vertex together with the alleged copy of~$2K_1$ would form a~$P_5$ yielding a contradiction.
Consequently, by Theorem~\ref{thm:K1s:K1s:bipartite:refinement}, the naive vertex refinement refines the graphs~$G_1$ and~$G_2$ to bounded color valence.
\end{proof}

At this point we would like to caution the reader that for each particular graph class treated in the theorem, the modular decomposition 
is formed only a bounded number of times. The fact that vertices are individualized before the modules are formed prevents us from applying the  Theorem~\ref{thm:iso:easy:when:prime:graphs:easy} directly.

\section{Specific classes of triangle-free graphs}\label{sec:spec:triangle}

We now analyze certain classes of triangle-free graphs and in doing so apply the techniques developed in the previous sections of this paper. Our ultimate goal is to show that isomorphism of triangle-free~$\forbindONEplain{\HabcGraphFOUR{1}{0}{b}{1}}$ graphs (see Figure~\ref{fig:subvidided:star:H10b1}) can be solved in polynomial time. However, only little is known about isomorphism of these classes and even their structurally simpler subclasses. This forces us to use a form of bootstrapping by showing various structural properties and developing algorithms step by step.

\begin{lemma} \label{lem:bipartite:H10b0}
For every integer~$b$, the isomorphism problem of graphs that are colored, bipartite~$\forbindONEplain{\HabcGraphFOUR{1}{0}{b}{0}}$ can be solved in polynomial time.
\end{lemma}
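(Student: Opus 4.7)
My plan is to prove this lemma by showing that naive vertex refinement, respecting the given coloring together with the bipartition, reduces any colored bipartite $(\HabcGraphFOUR{1}{0}{b}{0})$-free graph to one of bounded color valence. Theorem~\ref{cor:iso:of:bd:gen:col:in:poly:time} then solves the isomorphism problem in polynomial time. The argument parallels the proof of Theorem~\ref{thm:K1s:K1s:bipartite:refinement}, but is adapted to the forbidden graph $\HabcGraphFOUR{1}{0}{b}{0}$, which is a star with $b+1$ leaves together with a pendant path of length two attached to one of its neighbors of the center, as opposed to two disjoint stars.

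First I would refine $G$ using naive vertex refinement, enriched by the bipartition $(A, B)$ so that each color class lies entirely in one side. Since $\HabcGraphFOUR{1}{0}{b}{0}$-freeness is hereditary and refinement only subdivides color classes, the refined graph remains in the class. I then claim that there exists a constant $R = R(b)$ such that for every pair of color classes $C \subseteq A$, $C' \subseteq B$, and every $v \in C$, either $|N(v) \cap C'| \leq R$ or $|C' \setminus N(v)| \leq R$.

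To prove the claim, I would argue by contradiction: assuming both quantities exceed $R$, I would exhibit an induced $\HabcGraphFOUR{1}{0}{b}{0}$ using $v$ as the center and $b$ leaves chosen from $N(v) \cap C'$. For the pendant tail, I would pick $x_1 \in N(v) \cap C'$ having a neighbor $x_2$ in some color class $X \subseteq A \setminus \{C\}$, and then a neighbor $x_3 \in B$ of $x_2$ with $x_3 \not\sim v$. Stability of refinement is crucial here: it guarantees that every vertex of $C'$ has the same number of neighbors in $X$ (and vice versa), which enables a counting argument both for the existence of this tail and for enough vertices remaining in $N(v) \cap C' \setminus (\{x_1\} \cup N(x_2))$ to serve as the $b$ leaves. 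In the degenerate situation where $v$'s component lies entirely inside $C \cup C'$, no such tail exists, but the component is then a bipartite graph using only two color classes and its isomorphism type is trivial.

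The main obstacle I anticipate is ensuring that $x_2$ does not absorb too many vertices of $N(v) \cap C'$ via its own neighborhood, which would leave too few candidates for the leaves. Since stability fixes $|N(x_2) \cap C'| = d(X, C')$, this is a concern precisely when $d(X, C')$ is comparable to $|N(v) \cap C'|$. The resolution is a case split: either there is some $X$ with $d(X, C')$ small, so the leaves are plentiful; or every such $X$ has $d(X, C')$ close to $|C'|$, in which case we can swap the roles of $v$ and $x_2$ to instead find the tail starting at $x_2$ with leaves inside $N(x_2) \cap C'$; or the regularities forced by stability between $C$, $C'$, and $X$ already bound the color valence directly. Choosing $R$ as a suitable polynomial in $b$ closes the argument, and Theorem~\ref{cor:iso:of:bd:gen:col:in:poly:time} then yields the claimed polynomial-time isomorphism test.
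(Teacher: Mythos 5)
Your central claim—that naive vertex refinement, respecting the bipartition and input colors, turns every bipartite $\forbindONEplain{\HabcGraphFOUR{1}{0}{b}{0}}$ graph into one of bounded color valence—is false, and this gap is fatal. Consider the connected bipartite graph $G$ obtained from $k$ disjoint stars $K_{1,N}$ with centers $c_1,\ldots,c_k \in A$ and leaf sets $L_1,\ldots,L_k \subseteq B$ by adding one further vertex $u\in A$ adjacent to all of $B=L_1\cup\cdots\cup L_k$. For $b\ge 2$ this graph is $\forbindONEplain{\HabcGraphFOUR{1}{0}{b}{0}}$: the only vertices of degree $\ge b+1$ are $u$ and the $c_i$'s; from $c_i$ every induced path of length three must pass through $u$ (the leaves of $c_i$ have no other neighbor), and $u$ is adjacent to every candidate leaf, so no leaves survive; and from $u$ there is no induced $P_4$ at all since $u$ is complete to $B$. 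After naive vertex refinement the color classes are $\{c_1,\ldots,c_k\}$, $\{u\}$, and $L_1\cup\cdots\cup L_k$ (all leaves having degree two with one neighbor in each $A$-class), so $c_1$ has $N$ neighbors and $(k-1)N$ non-neighbors in the leaf class. Color valence—indeed even generalized color valence, as no color class is small—is unbounded.

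This example also defeats the repairs you anticipate. The component of $c_1$ is the whole graph and meets three color classes, so your ``degenerate component inside $C\cup C'$'' escape does not apply. The only available tail class from a leaf $\ell\in L_1$ is $X=\{u\}$ with $d(X,C')=|C'|$, so $x_2=u$ absorbs all candidate leaves; swapping roles and trying to start the tail at $u$ fails because $u$ is adjacent to every vertex of $B$ and admits no induced $P_4$; and the regularity forced by stability does not bound anything here. The paper's proof takes a genuinely different route: it first dismisses the case where $G$ contains no $2K_{1,2b-1}$ with both centers in the same bipartition class (there Theorem~\ref{thm:K1s:K1s:bipartite:refinement} really does yield bounded color valence), and for the remaining graphs—which include the one above—it constructs colored modules $N_c[v]\cup N(N_c[v])$ by iterating a ``comparable-neighborhood'' closure, shows these form a reversible simple decomposition functor, and applies Theorem~\ref{thm:iso:easy:when:prime:graphs:easy}. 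Some such decomposition step is unavoidable; a direct counting argument toward bounded color valence cannot close as written.
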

\begin{proof}
We analyze the structure of a connected, bipartite,~$\forbindONEplain{\HabcGraphFOUR{1}{0}{b}{0}}$, colored graph~$G$. Let~$A$ and~$B$ be the bipartition classes of~$G$. We apply the naive vertex refinement to~$G$. By possibly having individualized one vertex before the refinement, we can assume that vertices in different bipartition classes have different colors. If~$G$ does not contain~$2 K_{1,2b-1}$ with centers in the same bipartition class, then by Theorem~\ref{thm:K1s:K1s:bipartite:refinement} the naive vertex refinement produces a graph of bounded color valence. Thus for such graphs, we can solve isomorphism in polynomial time. We assume therefore that in~$G$ there is an induced subgraph isomorphic to~$2 K_{1,2b-1}$ with centers in~$A$ say.

For a vertex~$v$ in~$A$ we define the set~$N_c[v]$ as the set of vertices of~$A$ obtained by exhaustively applying the following operation: we start by setting~$S= \{v\}$. If there is a vertex~$v'$ in~$A$ such that the neighborhoods of~$S$ and~$v'$ in~$B$ intersect, and the set of neighbors of~$S$ in~$B$ is not properly contained in the set of non-neighbors of~$v'$ in~$B$, then we add~$v'$ to~$S$. In this process the final set obtained is independent of the order in which the vertices are added.

This way, every vertex of~$A$ is assigned a set~$N_c[v]$. We next prove several properties that the family~$\mathcal{F} = \{N_c[v]\mid v\in A \}$ of these sets fulfills.
To do so, we first observe that, by induction, for any vertex~$x$ in~$N_c[v]$ there is a sequence~$v = v_0,v_1,\ldots,v_t=x$ with~$t\in \mathbb{N}$ of vertices in~$N_c[v]$ such that~$v_i$ and~$v_{i+1}$ have a common neighbor in~$B$. 

We claim that the sets in the family~$\mathcal{F}$ are nested, i.e., that two such sets are disjoint or one is contained in the other. Suppose~$N_c[v]$ and~$N_c[v']$ intersect but neither contains the other. Let~$v = v_0,v_1,\ldots,v_t= v'$ be a sequence in~$N_c[v] \cup N_c[v']$ such that consecutive vertices have a common neighbor in~$B$. Such a sequence must exist since the sets intersect. This implies that there is an index~$j$ such that~$v_j$ is in~$N_c[v]$ and~$v_{j+1}$ is not in~$N_c[v]$. Since~$v_j$ and~$v_{j+1}$ have a common neighbor it must be the case that the neighborhood of~$v_{j+1}$ includes all neighbors of vertices in~$N_c[v]$. Therefore~$N_c[v]$ is a subset of~$N_c[v']$, showing that~$\mathcal{F}$ is nested.

For a subset~$S$ of~$A$, as usual, we let~$N(S)$ be the set of vertices of~$B$ which are adjacent to some vertex of~$N$.
 We now argue that~$N(N_c[v])$ and~$N(N_c[v'])$ are disjoint, whenever~$N_c[v]$ and~$N_c[v']$ are disjoint. Supposing otherwise, there exists a vertex~$b \in N(N_c[v])\cap N(N_c[v'])$. This implies there are vertices~$v_1 \in N_c[v]$ and~$v_2 \in N_c[v']$ both adjacent to~$b$. If the neighborhood of~$v_1$ does not properly include all neighbors of vertices in~$N_c[v']$ then~$v_1$ should have been added to~$N_c[v']$. Otherwise all vertices of~$N_c[v']$ should have been added to~$N_c[v]$. This gives a contradiction.
 
We also claim that if~$v$ and~$v'$ are the centers in~$A$ in an induced subgraph isomorphic to~$2 K_{1,2b-1}$ then the sets~$N_c[v]$ and~$N_c[v']$ are disjoint. 
If these sets are not disjoint, there exists a sequence~$v = v_0,v_1,\ldots,v_t= v'$ in~$N_c[v]$ such that the neighborhoods of~$v_i$ and~$v_{i+1}$ intersect.
Observe that, due to the absence of the forbidden subgraph~$\HabcGraphFOUR{1}{0}{b}{0}$, the neighborhoods of two vertices in~$A$ that have a common neighbor differ by at most~$2(b-1)$ vertices, i.e., the symmetric difference of the neighborhoods contains at most~$2(b-1)$ vertices. We show by induction that~$v$ and~$v_{i}$ have a common neighbor.
For~$i=0$ we have~$v=v_0$ and there is nothing to show. Suppose we have shown the statement for~$i$. Then all but at most~$(b-1)$ neighbors of~$v$ are also neighbors of~$v_{i}$ 
and all but at most~$(b-1)$ neighbors~$v_{i}$ are neighbors of~$v_{i+1}$. Since~$v$ has more than~$2(b-1)$ neighbors, this shows that~$v_{i+1}$ and~$v$ have a common neighbor.
We conclude that~$v$ and~$v'$ have a common neighbor and thus their neighborhoods differ by at most~$2(b-1)$. This contradicts the fact that they are centers of a double star~$2 K_{1,2b-1}$. And thus shows that~$N_c[v]$ and~$N_c[v']$ are disjoint.

Let~$\mathcal{M}$ be the maximal sets~$N_c[v]$ different from~$A$ for which~$N(N_c[v])$ is 
of size at least~$2b-1$.
We claim that 
 the sets~$N_c[v] \cup N(N_c[v])$ with~$N_c[v] \in\mathcal{M}$ form modules in the graph that has been color-refined by degree. Let~$N_c[v]$ be a set in~$\mathcal{M}$. We first note that, by definition of~$N_c[v]$, every vertex of~$A$ not in~$N_c[v]$ is adjacent to all vertices or to no vertex of~$N(N_c[v])$, since otherwise the vertex would be part of~$N_c[v]$.
Suppose there is a vertex~$y$ in~$A$ not in~$N_c[v]$ that is adjacent to all vertices of~$N(N_c[v])$. This implies~$N_c[v] \subseteq N_c[y]$. We claim that every vertex~$y'$ of~$A$ not in~$N_c[v]$ with the same degree as~$y$ is also adjacent to all vertices of~$N(N_c[v])$. Indeed, otherwise they would form a double star~$2 K_{1,2b-1}$. By our observation above~$N_c[y]$ and~$ N_c[y']$ would be disjoint, which would show that~$N_c[v]$ was not maximal among the sets different from~$A$. By definition every vertex of~$B$ that is not in~$N(N_c[v])$ is not adjacent to any vertex of~$N_c[v]$. This shows that the maximal non-trivial sets are modules.

The map that associates~$\mathcal{M}$ with~$G$ is a polynomial time computable decomposition functor. By our observations above, the prime graphs do not contain~$2 K_{1,2b-1}$ with centers in~$A$ and for these we already described at the beginning of the proof how to solve isomorphism in polynomial time. The functor has a reversible replacement operator, namely maintaining two adjacent vertices of a module, an since the modules are maximal sets, the decomposition is simple.
Thus, by Theorem~\ref{thm:iso:easy:when:prime:graphs:easy} we can solve isomorphism of bipartite, colored~$\forbindONEplain{\HabcGraphFOUR{1}{0}{b}{0}}$ graphs in polynomial time.
\end{proof}
 
We can use the algorithm for the bipartite case to  develop an algorithm for all graph that~$\forbindplain{\HabcGraphFOUR{1}{0}{b}{0}}$ and do not contain triangles.

\begin{lemma}\label{lem:triangle:free:H10b0}
Isomorphism of~$\forbindplain{\HabcGraphFOUR{1}{0}{b}{0} , K_3}$ graphs can be solved in polynomial time.
\end{lemma}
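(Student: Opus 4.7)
The plan is to mirror the structure of the proof of Lemma~\ref{lem:bipartite:H10b0}, extending it from the bipartite to the triangle-free setting. First I would reduce to connected $G$ by handling components separately and matching isomorphism types across the two input graphs. Then I would apply the naive vertex refinement. If the refined graph does not contain $2K_{1,s}$ as an induced subgraph for $s := 2b-1$, then Theorem~\ref{thm:K1s:K1s:individ:non:bipartite:case} lets us individualize a bounded set of vertices to obtain bounded generalized color valence, and Theorem~\ref{cor:iso:of:bd:gen:col:in:poly:time} finishes the job (note that $K_3$-freeness gives us $t=3$ in that theorem). Thus I may assume a large double star is present in the refined graph.

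The second step is to exploit triangle-freeness in order to recover a bipartite-like structure. For any vertex $v$ of large degree, the set $N(v)$ is an independent set, so the edges between $N(v)$ and $V \setminus N[v]$ form a bipartite graph whose induced subgraphs are all $H(1,0,b,0)$-free. I would mimic the definition of the sets $N_c[v]$ from the proof of Lemma~\ref{lem:bipartite:H10b0} on this bipartite substructure, closing $\{v\}$ under the rule used there. I would then show that the resulting maximal non-trivial sets, suitably augmented with their neighborhoods, form colored modules of $G$ in the sense of Section~\ref{sec:modular:decomp}. This yields a simple, polynomial-time computable decomposition functor together with a reversible replacement operator (for instance, keeping two adjacent representative vertices of each module).

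Applying Theorem~\ref{thm:iso:easy:when:prime:graphs:easy} then reduces the problem to isomorphism of colored prime graphs in the class. For such prime graphs the module construction has been exhausted, and I would argue that they fall into one of two cases: either they admit bounded generalized color valence after color refinement -- and are solved by Theorem~\ref{cor:iso:of:bd:gen:col:in:poly:time} -- or their bipartite-like substructure around a high-degree vertex dominates the graph, reducing the problem to the colored bipartite $H(1,0,b,0)$-free case already handled by Lemma~\ref{lem:bipartite:H10b0}.

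The main obstacle will be verifying the module property in the presence of edges inside $V\setminus N[v]$, which were absent from the bipartite proof. Such edges could, in principle, create distinguishing vertices for the candidate modules or destroy the nestedness of the family of sets $N_c[v]$. Controlling them by combining triangle-freeness with the $H(1,0,b,0)$-constraint -- in particular, ruling out induced paths of length three starting at a vertex of large independent neighborhood via the resulting forbidden configuration -- is where the bulk of the technical work will lie.
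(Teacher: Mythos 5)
There is a genuine gap, and you have in fact put your finger on it yourself: the module construction from Lemma~\ref{lem:bipartite:H10b0} simply does not transfer. The closure sets $N_c[v]$, their nestedness, the disjointness of their $B$-neighborhoods, and ultimately the module property all depend heavily on the fact that the graph is bipartite, i.e., that \emph{all} edges run between $A$ and $B$. In a triangle-free graph, once you fix a high-degree vertex $v$, the set $V \setminus N[v]$ is not independent, and edges inside it destroy both the nestedness argument and the module verification. Saying you will control them by ``ruling out induced paths of length three starting at a vertex of large independent neighborhood'' is a hope, not an argument; it is exactly where the real work of the lemma lies, and your proposal does not supply it.

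The paper's actual proof takes a different and considerably more intricate route. It picks a vertex $v$ of degree $> 6b$ with inclusion-maximal neighborhood, sets up BFS levels $N_0, N_1, N_2, \ldots$ from the twin class $N_0$ of $v$, and partitions $N_2$ into three subclasses $N_2^h$, $N_2^m$, $N_2^l$ according to the number of neighbors in $N_1$. It then proves a key claim that if $u_1 \in N_2$ has a neighbor $u_2 \notin N_1$, the $N_1$-neighborhoods of $u_1$ and $u_2$ are disjoint and cover all but fewer than $b$ vertices of $N_1$; from this it deduces that $N_2^h$ is independent, that the bipartite graph between $N_2^h$ and $N_2^l$ has no large double star, and that forward degree in levels $N_i$ for $i \geq 2$ is bounded. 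Only in one specific subcase (a double star between $N_1$ and $N_2^m$ with centers in $N_2^m$, no edges inside $N_2^m \cup N_2^h$) does the paper extract a colored module, and it is a particular bipartite module computed from that configuration, not a $N_c[v]$-closure. Finally, because the adjacency between $N_2^h$ and $N_3$ can behave like that between $N_1$ and $N_2$, the paper \emph{iterates} the whole procedure over a bounded number of different root vertices chosen to force each vertex out of $N_2^h$ in at least one iteration. This iteration step is entirely absent from your plan. In short, you correctly identified the target theorems (Theorem~\ref{cor:iso:of:bd:gen:col:in:poly:time}, Theorem~\ref{thm:iso:easy:when:prime:graphs:easy}, Lemma~\ref{lem:bipartite:H10b0}) but the structural analysis that connects them is missing, and the piece you flag as ``where the bulk of the technical work will lie'' is the proof.
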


\begin{proof}
Let~$G$ be a colored, connected $\forbindplain{\HabcGraphFOUR{1}{0}{b}{0} , K_3}$ graph. If the maximum degree of~$G$ is at most~$6b$
we can solve isomorphism in polynomial time. Thus, let~$v$ be a vertex of~$G$ of degree at least~$d > 6b$ 
whose neighborhood is maximal, i.e., the neighborhood of no other vertex properly contains the neighborhood of~$v$. 
Let~$N_0$ be the set of vertices whose neighborhood is the same as the neighborhood of~$v$. Let~$N_i$ be the set of vertices that have distance~$i$ from~$N_0$. Note that~$N_1$ forms an independent set since~$G$ is triangle free. 
We partition~$N_2$ into three groups depending on the number of neighbors in~$N_1$ being high, low or medium as follows: let~$N_2^h$, and~$N_2^l$ be the sets of vertices in~$N_2$ which have at most~$c_1= 3b$ non-neighbors and at most~$c_2=b$ neighbors in~$N_1$ respectively. The set~$N_2^m$  is the set of vertices in~$N_2$ neither in~$N_2^h$ nor~$N_2^l$. Since $c_1 +c_2 \leq d$ these three sets partition~$N_2$.

We first argue that the set~$N_2^h$ forms an independent set. Indeed, two vertices in~$N_2^h$ have a common neighbor in~$N_1$, since 
each such vertex has at most~$c_1$ non-neighbors in~$N_1$, but there are at least~$d > 2c_1 $ vertices in~$N_1$.

\emph{Claim.} We claim that if a vertex~$u_1$ in~$N_2$ has a neighbor~$u_2$ that is not in~$N_1$, then the set of neighbors of~$u_1$ in~$N_1$ and the set of neighbors of~$u_2$ in~$N_1$ form disjoint sets which contain all but at most~$b-1$ vertices of~$N_1$. 

The fact that the neighbor sets in~$N_1$ are disjoint follows from the triangle-freeness of the graph. If there were~$b$ or more vertices in~$N_1$ not adjacent to~$u_1$ and not adjacent to~$u_2$, we could construct the forbidden subgraph~$\HabcGraphFOUR{1}{0}{b}{0}$ by choosing the two vertices~$u_1$ and~$u_2$, a vertex in~$N_1$ adjacent to~$u_2$ (consequently not adjacent to~$u_1$), one vertex in~$N_0$ and~$b$ vertices in~$N_1$ neither adjacent to~$u_1$ nor to~$u_2$. This proves the claim.
\medskip

From the claim it follows that no vertex~$u_1$ in~$N_3\cup N_2^l$ has a neighbor~$u_2$ in~$N_2^l \cup N_2^m$. Indeed since $b+(c_1 - c_2)   \leq d$ there are at least~$b$ vertices in~$N_1$ not contained in the union of the neighborhoods of~$u_1$ and~$u_2$.

We next argue that the bipartite graph with partition classes~$N_2^h$ and~$N_2^l$ does not contain the double star~$2 K_{1,(c_1+1) b}$ with the centers in the same class. The centers of such an alleged double star~$J$ cannot be in~$N_2^h$ since such vertices have a common neighbor in~$N_1$, which, together with a subgraph of~$J$ isomorphic to~$K_{1,b}\dunion K_2$, would result in the forbidden subgraph. We now assume that the centers of~$J$ are in~$N_2^l$. Symmetrically to the other case, the centers cannot have a common neighbor in~$N_1$. 
Suppose~$w$ is one of the centers and there is a vertex~$u$ in~$N_1$ adjacent to~$w$ such that at least one vertex~$y$ of~$J$ in~$N_2^h$ is adjacent to~$u$. Then we find a forbidden subgraph with a vertex set consisting of~$y$,~$u$, the two centers and~$b$ vertices of~$J$ in~$N_2^h$ adjacent to~$w$. Supposing otherwise, let~$u$ be a neighbor of~$w$ in~$N_1$ which is not adjacent to any vertex of~$J$ in~$N_2^h$. Note that among any~$(c_1+1) b$ vertices in~$N_2^h$, there is a set~$T$ of~$b$ vertices that has a common neighbor in~$N_1$.
Thus in~$J$ we can choose a set of~$T$ vertices adjacent to a common vertex~$x$ in~$N_1$ but such that all vertices in~$T$ are not adjacent to~$u$. 
 The vertices in~$T$ together with~$x$,~$v$,~$u$ and~$w$ induce a forbidden subgraph. 
By Theorem~\ref{thm:K1s:K1s:bipartite:refinement}, the absence of double stars that we just argued implies that the naive vertex refinement produces bounded color valence in the graph induced by~$N_2^h\cup N_2^l$.

Suppose~$N_2^m$ is non-empty. We distinguish two cases according to the existence of an edge running within~$N_2^m \cup  N_2^h$.
Suppose first there exists an edge. Let~$\overline{N_2^h}$ be the set of vertices in~$N_2^h$ that have a neighbor in~$N_2^m$. We now show that~$N_2^m \cup \overline{N_2^h}$ is a complete bipartite graph. Let~$x_1$ and~$x_2$ be adjacent vertices of~$N_2^m \cup \overline{N_2^h}$. Consider a third vertex~$x_3$ in~$N_2^m$. It suffices to show that~$x_3$ is adjacent to~$x_1$ or~$x_2$. Supposing otherwise, using the claim, we see that less than~$b$ vertices of~$N_1$ are neither adjacent to~$x_1$ nor to~$x_2$. Since~$x_3 \in N_2^m$ is adjacent to at least~$c_2 =b$ vertices in~$N_1$,
this shows that~$x_3$ has a common neighbor in~$N_1$ with~$x_1$ or~$x_2$. Due to the forbidden subgraph, this implies that for~$i\in \{1,2\}$ the number of neighbors of~$x_i$ which are not neighbors of~$x_3$ is less than~$b$. This implies that the number of vertices in~$N_1$ to which~$x_3$ is not adjacent is less than~$c_1 = 3 b$, which contradicts~$x_3 \in N_2^m$.

By the claim, the~$N_1$-neighborhoods of adjacent vertices in~$N_2$ are complementary with respect to~$N_1$, with the exception of less than~$b$ vertices. Thus in the case that there is an edge~$x_1,x_2$ in~$N_2^m$ then individualizing~$x_1$ and~$x_2$ and refining by adjacency yields a graph for which every vertex has bounded color valence to color classes contained in~$N_1$.

Suppose now that there is no edge within~$N_2^m \cup  N_2^h$. Due to the forbidden subgraph, there cannot be a double star isomorphic to~$K_{1,b}\dunion K_2$ in the graph induced by~$N_1\cup N_2$ with centers in~$N_1$. We show that if there is a double star~$J$ isomorphic to $2K_{1,2b}$ between~$N_2^m$ and~$N_1$  with centers in~$N_2^m$ then the graph is not prime with respect to a certain polynomial time computable decomposition functor. Let~$x_1$ and~$x_2$ be the centers in~$N_2^m$. Suppose a vertex~$x$ in~$N_2$ has a common neighbor with both~$x_1$ and~$x_2$. Note that this vertex is neither adjacent to~$x_1$ nor to~$x_2$. This implies it must be adjacent to all but~$b-1$ neighbors of~$x_1$ and all but~$b-1$ neighbors of~$x_2$. This in turn implies that it must be adjacent to all neighbors of~$x_1$ and all neighbors of~$x_2$ since otherwise the forbidden subgraph is formed by~$x_1$,~$x_2$,~$x$, and suitable neighbors of~$x_1$ and~$x_2$. Due to the forbidden subgraph, if a vertex from~$N_2^l$ has a common neighbor with~$x_1$ then its neighborhood in~$N_1$ is a subset of the neighborhood of~$x_1$ in~$N_1$. Moreover it cannot be adjacent to a vertex in~$N_2^h$.
The argument shows that the set of vertices which have a common neighbor with~$x_1$ but not with~$x_2$, together with their neighbors in~$N_1$, form a colored module. The union of all such modules is also a module containing only vertices from~$N_1$,~$N_2^m$ and~$N_2^h$. This module is a bipartite graph which is isomorphism invariant for the graph in which~$v$ has been individualized and can be computed in polynomial time. 
By Lemma~\ref{lem:bipartite:H10b0} we can determine the isomorphism type of the module. Replacing it by an edge is reversible. Since it contains all double stars of the describes type, the associated decomposition functor is simple.
For the entire graph~$G$,  note that, when all vertices of this module except the endpoints of one edge  are removed, then the graph remains connected.

We next show for every~$i> 2$ that for every vertex~$v\in N_i$ the number of neighbors of~$v$ in~$N_i \cup  N_{i+1} $ is at most~$b-1$.
Indeed, suppose~$z_0$ is a vertex in~$N_i$ which has a set~$T$ of~$b$ neighbors in~$N_i \cup  N_{i+1}$. Using~$T$ and~$z_0$ we can construct the forbidden subgraph~$\HabcGraphFOUR{1}{0}{b}{0}$ by starting with~$z_0$ and repeatedly choosing a neighbor in the level that is one closer to the root. More precisely, let~$z_1$ be a neighbor of~$z_0$ in~$N_{i-1}$, let~$z_2$ be a neighbor of~$z_1$ in~$N_{i-2}$ and let~$z_3$ be a neighbor of~$z_2$ in~$N_{i-3}$. The set~$T\cup \{z_0,z_1,z_2,z_3\}$ induces the forbidden subgraph.

Using a very similar argument, we also show that for~$i = 2$ the analogous statement holds, i.e, for every vertex~$v\in N_2$ the number of neighbors of~$v$ in~$N_2 \cup  N_{3} $ is bounded: let~$z_0$ be a vertex in~$N_2$ which has a set~$T$ of~$b$ neighbors in~$N_2 \cup  N_{3}$. Since~$N_2$ is an independent set,~$T$ is in fact contained in~$N_3$. We construct the forbidden subgraph by adding a vertex~$z_1$ from~$N_1$ adjacent to~$z_0$, a vertex~$z_2$ from~$N_0$ and a vertex~$z_3$ from~$N_1$ not adjacent to~$z_0$. This vertex exists since, due to the neighborhood of~$v \in N_0$ being maximal, the neighborhood of~$z_0$ cannot contain all of~$N_1$.

\medskip

We summarize what we have achieved up to this point: by choosing an initial vertex~$v$ with maximal neighborhood we partition the graph into level sets~$N_i$. By either removing a bipartite module in~$N_1\cup N_2^m \cup N_2^l$ or individualizing an edge in~$N_2^m$ we obtain a colored, connected graph for which the naive vertex refinement produces a coloring in which every vertex has bounded color valence to every color class that is not contained in~$N_2^h$. 

However, between the vertices in~$N_2^h$ and~$N_3$ the same situation could arise as between~$N_1$ and~$N_2$. We cannot easily apply the previous technique to separate~$N_3$ into~$N_3^h$,~$N_3^m$ and~$N_3^l$ since the adjacency of~$N_2^h$ towards~$N_1$ is not uniform in the same way in which the adjacency of~$N_1$ is uniform to~$N_0$. 

We remedy this by reapplying the entire procedure for a bounded number of iterations using different root sets~$N_0$. Intuitively, in each iteration we force some of the vertices previously in~$N_2^h$ to be somewhere outside of what is going to be the new level set of distance 2 from the root. For this, first note that in~$N_1$ for every set~$S'$ of size~$c_1+1$, every vertex in~$N_2^h$ has a neighbor in~$S'$. For every vertex~$w\in S'$ we can choose some maximal vertex~$w'$ whose neighborhood contains the entire neighborhood of~$w$. Choosing~$w'$ and its twins to become the level set~$N_0$ implies that vertices adjacent to~$w$ will be in the new level set~$N_1$ and in particular not in the new~$N_2^h$. Repeating this for all vertices in~$S'$ ensures that for each vertex there is at least one execution during which the vertex is not contained in~$N_2^h$. If the degree of~$w'$ is not sufficiently large, i.e., it is not larger than~$6b$, we cannot apply the arguments given above. However, in this case individualizing all neighbors of~$w'$ and refining by adjacency already achieves that adjacency for all vertices towards neighbors of~$w$ is of bounded color valence.
This shows that the graph we obtain has bounded color valence. In total only a bounded number of vertices is being individualized. 

By applying the colored module isomorphism theorem (Theorem~\ref{thm:iso:easy:when:prime:graphs:easy}) this proves the lemma, since the modules are bipartite graphs (for which isomorphism can be solved by the previous lemma).
\end{proof}

The previous theorems are concerned with forbidding the graph~$\HabcGraphFOUR{1}{0}{b}{0}$ which is obtained from~$\HabcGraphFOUR{1}{0}{b}{1}$ by deleting the isolated vertex. We now study the classes obtained by deleting another vertex from~$\HabcGraphFOUR{1}{0}{b}{1}$ and forbidding the  resulting graph.
\begin{lemma}
Let~$s$ be an integer. Graph isomorphism of colored, bipartite graphs which do not contain an induced subgraph isomorphic to~$K_{1,s} \dunion K_2\dunion K_1$ such that the center of~$K_{1,s}$ and the isolated vertex are in different bipartition classes can be solved in polynomial time. \end{lemma}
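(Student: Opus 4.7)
The plan is to reduce to Lemma~\ref{lem:bipartite:H10b0} via the bipartite complement. Given a colored bipartite graph~$G$ with bipartition~$(A,B)$ satisfying the hypothesis, form $\overline{G}^{bip}$, retaining the same vertex set, the same coloring, and as edges exactly the non-edges of~$G$ that cross the bipartition. Since the bipartite complement is an involution that commutes with bipartition-preserving isomorphisms and preserves the coloring, two such graphs are isomorphic if and only if their bipartite complements are.

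The main claim is that $\overline{G}^{bip}$ is $\forbindONEplain{\HabcGraphFOUR{1}{0}{s}{0}}$. The graph $\HabcGraphFOUR{1}{0}{s}{0}$ is a connected subdivided star on $s+4$ vertices, so any induced copy inside a bipartite graph has a forced bipartition: the center and the unique distance-$2$ vertex on the long leg lie in one class, while the $s$ distance-$1$ leaves, the distance-$1$ path vertex, and the distance-$3$ leaf lie in the other class. A direct calculation of its bipartite complement with respect to this bipartition yields exactly $K_{1,s} \dunion K_2 \dunion K_1$: after complementation the distance-$2$ vertex becomes adjacent exactly to the $s$ distance-$1$ leaves and forms a~$K_{1,s}$; the original center and the distance-$3$ leaf become adjacent and form the $K_2$; and the distance-$1$ path vertex is left isolated, since both of its $\overline{G}^{bip}$-candidate neighbors (the two vertices on the first class) were already consumed as $\HabcGraphFOUR{1}{0}{s}{0}$-edges. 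Tracking bipartition assignments, the center of this $K_{1,s}$ (the former distance-$2$ vertex) and the isolated vertex (the former distance-$1$ path vertex) lie in different bipartition classes, so any induced $\HabcGraphFOUR{1}{0}{s}{0}$ in $\overline{G}^{bip}$ would correspond on the same vertex set to an induced $K_{1,s} \dunion K_2 \dunion K_1$ in~$G$ with center and isolated vertex in different bipartition classes, contradicting the hypothesis.

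Given the claim, we apply Lemma~\ref{lem:bipartite:H10b0} to $\overline{G_1}^{bip}$ and $\overline{G_2}^{bip}$ to decide isomorphism in polynomial time. The only substantive content is the bipartite-complement computation above, a purely combinatorial verification, and I anticipate no genuine obstacle beyond it; the minor bookkeeping needed when the input coloring does not already refine the bipartition (for instance on disconnected graphs) is handled by individualizing one vertex per connected component and letting the refinement propagate.
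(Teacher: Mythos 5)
Your proposal is correct and takes exactly the same route as the paper: the paper's proof is a two-sentence observation that for a connected bipartite graph the forbidden configuration is equivalent to the bipartite complement being $\forbindONEplain{\HabcGraphFOUR{1}{0}{s}{0}}$, followed by an appeal to Lemma~\ref{lem:bipartite:H10b0}. You have simply spelled out the combinatorial verification that the paper asserts without detail, and your computation of the bipartite complement of $\HabcGraphFOUR{1}{0}{s}{0}$ (yielding $K_{1,s} \dunion K_2 \dunion K_1$ with the claimed bipartition placement of the $K_{1,s}$-center and the isolated vertex) is accurate.
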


\begin{proof}
If~$G$ is a connected bipartite graph then~$G$ does not contain~$K_{1,s} \dunion K_2\dunion K_1$ with the center and the isolated vertex in different bipartition classes if and only if the bipartite complement of~$G$ is~$\forbindONEplain{\HabcGraphFOUR{1}{0}{s}{0}}$.
Isomorphism of such graphs can thus by solved in polynomial time due to Lemma~\ref{lem:bipartite:H10b0} by taking the bipartite complement.
\end{proof}

\begin{corollary}\label{cor:bip:com:neigh}
Isomorphism of connected, bipartite~$\forbindONEplain{\HabcGraphFOUR{1}{0}{b}{1}}$ graphs for which every pair of vertices in the same bipartition class has a common neighbor can be solved in polynomial time.
\end{corollary}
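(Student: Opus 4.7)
The plan is to reduce to the preceding lemma. That lemma solves isomorphism for colored bipartite graphs forbidding $K_{1,s}\dunion K_2\dunion K_1$ with the centre of the star and the isolated vertex in different bipartition classes, so it suffices to show that under the hypotheses of the corollary $G$ already forbids this configuration for $s=b$.

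Suppose for contradiction that $G$ contains such an induced $K_{1,b}\dunion K_2\dunion K_1$: centre $c\in A$, leaves $\ell_1,\ldots,\ell_b\in B$, a disjoint edge $\{a,a'\}$ with $a\in A$ and $a'\in B$, and an isolated vertex $v\in B$ (the mirror case $c\in B$, $v\in A$ is symmetric). Applying the common-neighbour hypothesis to $\{c,a\}\subseteq A$ produces a vertex $x\in B$ adjacent to both $c$ and $a$. I would then observe that $x$ is automatically distinct from $\ell_1,\ldots,\ell_b,a',v$: each $\ell_i$ is non-adjacent to $a$ while $x$ is adjacent to $a$; both $a'$ and $v$ are non-adjacent to $c$ while $x$ is adjacent to $c$. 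A routine edge check shows that $\{c,\ell_1,\ldots,\ell_b,x,a,a',v\}$ induces exactly $\HabcGraphFOUR{1}{0}{b}{1}$: the centre $c$ has the $b+1$ pendant neighbours $\ell_1,\ldots,\ell_b,x$, the path $c-x-a-a'$ realises the length-$3$ arm, and $v$ plays the isolated vertex. All other pairs are non-edges either by bipartiteness (for pairs living entirely inside $B$, such as $\ell_i\ell_j$, $\ell_ix$, $\ell_iv$, $\ell_ia'$, $xv$, $xa'$, $va'$) or because the original $K_{1,b}\dunion K_2\dunion K_1$ was induced (for the cross-component pairs $ca$, $ca'$, $cv$, $\ell_ia$, $av$). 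This contradicts $\HabcGraphFOUR{1}{0}{b}{1}$-freeness.

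With the $K_{1,b}\dunion K_2\dunion K_1$ obstruction ruled out, I would invoke the preceding lemma with $s=b$ to obtain the polynomial-time isomorphism test. To pass from the colored setting of the lemma to the present uncolored statement, one uses that a connected bipartite graph has a unique bipartition up to swapping its two sides: one colors the two sides distinctly in each input and, when comparing two graphs, simply tries both orientations.

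There is no real obstacle: the entire argument is a single gadget check in which the common-neighbour hypothesis produces the missing ``internal'' vertex $x$ that promotes a forbidden $K_{1,b}\dunion K_2\dunion K_1$ to a full $\HabcGraphFOUR{1}{0}{b}{1}$. The only cosmetic point that needs care is bookkeeping which pairs among $c,\ell_i,x,a,a',v$ are adjacent and which are not, and verifying that this pattern of edges and non-edges matches $\HabcGraphFOUR{1}{0}{b}{1}$ precisely.
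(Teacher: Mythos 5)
Your proof is correct and follows the same route as the paper: you show that the common-neighbour hypothesis forces $G$ to avoid the $K_{1,b}\dunion K_2\dunion K_1$ configuration (with centre and isolated vertex on opposite sides) required by the preceding lemma, and then invoke that lemma. The paper states this implication in a single sentence without the explicit gadget check, so your argument is the same in substance, merely more detailed; the only cosmetic slip is calling $x$ a ``pendant neighbour'' of $c$ when $x$ in fact has degree~$2$ in the induced copy, but the edge bookkeeping that follows makes the intended role of $x$ as the first internal vertex of the arm unambiguous.
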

\begin{proof}
If a connected, bipartite graph is~$\forbindONEplain{\HabcGraphFOUR{1}{0}{b}{1}}$ and every pair of vertices in the same bipartition class has a common neighbor then the graph does not contain an induced subgraph isomorphic to~$K_{1,b} \dunion K_2\dunion K_1$ such that the center of~$K_{1,b}$ and the isolated vertex are in different bipartition classes. We can thus apply the previous lemma. \end{proof}

Despite being able to solve isomorphism of bipartite~$\forbindONEplain{\HabcGraphFOUR{1}{0}{b}{1}}$ graphs with the corollary, we still require more detailed structural knowledge on these graph classes for our overall goal.

\begin{lemma}\label{lem:double:star:to:triple:star}
Let~$G$ be a bipartite graph colored according to a bipartition which does not contain an induced subgraph isomorphic to~$K_{1,s} \dunion K_2\dunion K_1$ such that the center of~$K_{1,s}$ and the isolated vertex are in different bipartition classes. If~$G$ does not contain the bipartite complement of~$3 K_{1,s}$ with centers in the same bipartition class, then by individualizing a set of bounded size and applying the naive vertex refinement, the graph can be turned into a graph of bounded color valence. 
\end{lemma}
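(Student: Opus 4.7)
The plan is to extend the argument of Theorem~\ref{thm:K1s:K1s:bipartite:refinement}. First I would translate the absence of $K_{1,s}\dunion K_2\dunion K_1$ into the following dichotomy for any pair of vertices $v,v'\in A$ of equal degree (as must hold inside a color class after naive refinement): either $|N(v)\triangle N(v')|<2s$ (I will call such a pair \emph{close}) or $N(v)\cup N(v')=B$ (\emph{opposite}). Indeed, if neither held, then placing $v$ at the centre of the $K_{1,s}$, $v'$ as the $A$-endpoint of the $K_2$, some $u_2\in N(v')\setminus N(v)$ as the $B$-endpoint, and any $w\in B\setminus(N(v)\cup N(v'))$ as the isolated vertex would exhibit the forbidden subgraph; the analogous dichotomy holds for pairs inside $B$.

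The forbidden bipartite complement of $3K_{1,s}$ in turn excludes three pairwise opposite vertices $v_1,v_2,v_3$ in $A$ that each satisfy $|\bar N(v_i)|\geq s$: their non-neighborhoods $\bar N(v_i)\subseteq B$ are pairwise disjoint sets of size at least $s$ contained in the pairwise intersections $N(v_j)\cap N(v_k)$, which produces the forbidden configuration. Vertices with $|N(v)|<4s$ or $|\bar N(v)|<4s$ enjoy bounded color valence to every color class trivially, so I would restrict attention to the subsets $A''\subseteq A$ and $B''\subseteq B$ avoiding these extremes. Inside $A''$ (and $B''$) the relations close and opposite are mutually exclusive, because a pair that were both would satisfy $|\bar N|<2s$, violating membership in $A''$.

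The pivotal consequence is that if $A''$ contains an opposite pair $(u_1,u_2)$, then every other $v\in A''$ is close to exactly one of $u_1,u_2$. Being opposite to both would make $\{u_1,u_2,v\}$ pairwise opposite and contradict the condition on $3K_{1,s}$; being close to both would give $|N(u_1)\triangle N(u_2)|<4s$ by the triangle inequality on symmetric differences, and together with oppositeness this would force $|\bar N(u_i)|<4s$, removing $u_i$ from $A''$. Hence $A''\setminus\{u_1,u_2\}$ splits into two \emph{types}, each internally pairwise close within $4s$, and opposite pairs in $A''$ run strictly between the two types. I would individualize $u_1,u_2$ (and analogously at most two vertices $w_1,w_2\in B''$ when an opposite pair exists there), yielding an individualized set of size at most four.

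After this individualization, naive refinement splits $B$ by adjacency to $\{u_1,u_2\}$ into blocks $B_{11},B_{10},B_{01}$ (with $B_{00}=\emptyset$ by oppositeness), and this in turn separates the two types in $A''$ into distinct color classes whose vertices are pairwise close within $4s$. I would then conclude by adapting the edge-counting argument from the proof of Theorem~\ref{thm:K1s:K1s:bipartite:refinement}, now with symmetric-difference bound $4s$ and valence threshold $R=8s$: if both valence and co-valence between a color class $C\subseteq A''$ and a color class $D\subseteq B$ exceeded $R$, then the edge counts from $C$ to $D\cap N(v)$ and to $D\setminus N(v)$ would force incompatible degrees on the two parts of $D$. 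The symmetric conclusion on the $B$-side uses the individualized vertices $w_1,w_2$. The main obstacle I anticipate lies in the bookkeeping near the boundary $|N(v)|\approx 4s$ and in confirming that a single round of refinement after the bounded individualization indeed isolates the two types in $A''$.
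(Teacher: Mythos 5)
Your approach is genuinely different from the paper's. The paper's argument is an iterated individualization: look for a double star $2K_{1,s}$ with both centers in the same color class; individualize its $2s+2$ vertices and refine; if one is still found (with each side contained in a single refined color class), individualize it too and show that the two individualized double stars assemble into the forbidden bipartite complement of $3K_{1,s}$, a contradiction. Hence after a bounded number of individualizations Theorem~\ref{thm:K1s:K1s:bipartite:refinement} applies. You instead try to classify all of $A''$ in one shot via the close/opposite dichotomy and then individualize a single pair.

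There is a genuine gap in this one-shot classification. Your close/opposite dichotomy is derived assuming $v$ and $v'$ have equal degree; without that, the nested case $N(v) \subsetneq N(v')$ evades both alternatives --- the pair is not opposite (since $\bar N(v') \neq \emptyset$ when $v' \in A''$) and need not be close (since $|N(v')\setminus N(v)|$ can be arbitrarily large), and the forbidden $K_{1,s}\dunion K_2\dunion K_1$ yields no constraint because $N(v)\setminus N(v') = \emptyset$. Your pivotal step compares an arbitrary $v \in A''$ to a fixed opposite pair $u_1,u_2$, but $v$ and $u_i$ generally lie in different color classes (hence have different degrees), so the dichotomy does not apply to the pair $(v,u_i)$ at all, and the conclusion ``$v$ is close to exactly one of $u_1,u_2$'' is not established. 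In particular some other color class $C'\subseteq A''$ could harbor its own opposite pair whose bipartition of $B$ is not aligned with $N(u_1),N(u_2)$, and individualizing $u_1,u_2$ does not a priori resolve $C'$; nothing in the argument rules this out, so the ``individualized set of size at most four'' claim is not justified. The paper avoids precisely this difficulty by choosing the second double star only after refining with respect to the first, which guarantees its two sides lie in single refined color classes and makes the $3K_{1,s}$ contradiction go through without any cross-degree comparison.
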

\begin{proof}
If the graph has no $2 K_{1,s}$ with centers on the same side, then by Theorem~\ref{thm:K1s:K1s:bipartite:refinement} the naive vertex refinement produces a graph of bounded color valence. Otherwise, we individualize the vertices of a copy~$J_1$
 of $2 K_{1,s}$ with centers on the same side.
If in the new color classes there is no $2 K_{1,s}$ with centers on the same side, we can apply Theorem~\ref{thm:K1s:K1s:bipartite:refinement} again. Otherwise, we individualize one new copy~$J_2$ of $2 K_{1,s}$ such that prior to individualization vertices in~$J_2$ that lie in the same bipartition class had the same color. Note that the vertex sets of~$J_1$ and~$J_2$ are disjoint. By possibly repeating the argument, we conclude that we can choose~$J_1$ and~$J_2$ such that all centers are in the same bipartition class. 
Since~$J_2$ was chosen after the vertices were refined by adjacency to~$J_1$, to every vertex of~$J_1$ the centers of~$J_2$ are either both adjacent or neither of them is adjacent. This implies that both centers of~$J_2$ must be adjacent to the non-centers of~$J_1$, otherwise we find the forbidden subgraph since the centers of~$J_2$ would have a common non-neighbor in the other bipartition class. Similarly, there cannot be a non-center of~$J_2$ to which both centers of~$J_1$ are non-adjacent. This implies that one of the centers of~$J_1$ is adjacent to all non-centers of~$J_2$. This center and its non-adjacent non-center vertices in~$J_1$ together with the vertices of~$J_2$ form the bipartite complement of~$3 K_{1,s}$.
\end{proof}

We assemble the structural analysis of the subcases performed in this section to develop an algorithm deciding isomorphism of triangle-free~$\forbindONEplain{\HabcGraphFOUR{1}{0}{b}{1}}$ graphs.

\begin{theorem}\label{thm:K3:H10b1:poly:time}
The isomorphism problem for $\forbindplain{\HabcGraphFOUR{1}{0}{b}{1}, K_3}$ graphs can be solved in polynomial time.
\end{theorem}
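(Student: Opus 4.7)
The plan is to reduce $\forbindplain{\HabcGraphFOUR{1}{0}{b}{1}, K_3}$ graphs to cases already handled in the paper. I would start by splitting on whether the input connected graph $G$ contains $\HabcGraphFOUR{1}{0}{b}{0}$ as an induced subgraph. If it does not, Lemma~\ref{lem:triangle:free:H10b0} immediately applies. In the remaining case, I would fix any induced copy $H$ of $\HabcGraphFOUR{1}{0}{b}{0}$ in $G$ and observe that $V(H)$ must be a dominating set: a vertex non-adjacent to every vertex of $H$ would, together with $V(H)$, induce the forbidden $\HabcGraphFOUR{1}{0}{b}{1}$. Since $|V(H)|$ is bounded in terms of $b$, one can guess $H$, individualize its vertices, and apply naive vertex refinement.

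Next, I would exploit that triangle-freeness combined with $V(H)$ dominating $G$ forces every non-singleton color class outside $V(H)$ to be an independent set: two adjacent vertices in the same color class would share some neighbor in $V(H)$ and so form a triangle. The graph between any two non-singleton color classes $C, C'$ is therefore bipartite and triangle-free, and I would analyze it via the following dichotomy. If some $h \in V(H)$ is non-adjacent to both $C$ and $C'$, then $G[C \cup C']$ must be $\forbindONEplain{\HabcGraphFOUR{1}{0}{b}{0}}$, because otherwise such an induced copy together with $h$ would yield $\HabcGraphFOUR{1}{0}{b}{1}$; Lemma~\ref{lem:bipartite:H10b0} then supplies a colored modular decomposition of $G[C \cup C']$ with polynomially testable prime factors. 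Otherwise every vertex of $V(H)$ is adjacent to $C \cup C'$, and I would invoke Lemma~\ref{lem:double:star:to:triple:star}: either naive refinement after individualizing a further bounded-size set produces bounded color valence on $G[C \cup C']$, or we detect a bipartite complement of $3K_{1,b}$, whose rigid structure can be exploited directly (and, by the forbidden subgraph, can only occur a bounded number of times per color class).

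Finally, I would assemble these pairwise analyses into a single decomposition functor on $G$ by taking the coarsest colored modules compatible with every pairwise structural conclusion, and invoke Theorem~\ref{thm:iso:easy:when:prime:graphs:easy} to reduce the problem to isomorphism on the resulting prime colored graphs. By construction each such prime graph, after an additional bounded amount of individualization, has bounded generalized color valence, so Theorem~\ref{cor:iso:of:bd:gen:col:in:poly:time} finishes the argument. The main obstacle will be ensuring that modules produced pairwise inside each $G[C \cup C']$ genuinely lift to colored modules of the entire graph $G$: one has to verify that every other color class sees such a candidate module uniformly. I expect this to follow from the rigidity properties delivered by Lemmas~\ref{lem:bipartite:H10b0} and~\ref{lem:double:star:to:triple:star} together with triangle-freeness, but the bookkeeping across all pairs of color classes will be the most delicate part of the proof.
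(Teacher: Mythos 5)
Your overall framing matches the paper's opening moves: split on whether an induced copy $H$ of $\HabcGraphFOUR{1}{0}{b}{0}$ exists, dispatch the $\HabcGraphFOUR{1}{0}{b}{0}$-free case via Lemma~\ref{lem:triangle:free:H10b0}, observe that $V(H)$ is dominating because otherwise $\HabcGraphFOUR{1}{0}{b}{1}$ appears, individualize $V(H)$, and note that triangle-freeness plus domination makes every resulting color class independent. That much is the paper's setup.

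The gap is in how you handle the remaining ``hard'' case. You propose to run pairwise analyses of $G[C \cup C']$ (via Lemma~\ref{lem:bipartite:H10b0} in one branch and Lemma~\ref{lem:double:star:to:triple:star} in the other) and then \emph{assemble} these into a global decomposition functor so that Theorem~\ref{thm:iso:easy:when:prime:graphs:easy} applies. You explicitly flag the assembly step as the ``most delicate part'' and hope it follows from rigidity, but this is exactly where the argument is not worked out: the modules used internally in the proof of Lemma~\ref{lem:bipartite:H10b0} are modules of the bipartite graph $G[C \cup C']$ and there is no reason they are colored modules of $G$ — vertices in a third color class $C''$ can split them. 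Moreover, Lemma~\ref{lem:bipartite:H10b0} is stated only as an isomorphism algorithm, not as a decomposition, so you do not actually have access to a canonical module family to feed into a functor. The paper sidesteps this entire difficulty with a structural observation you do not make: it first verifies that $G[C \cup C']$ cannot contain $K_{1,2b+1} \dunion K_2 \dunion K_1$ with the center and isolated vertex in different parts (by adding a vertex of $K$), so Lemma~\ref{lem:double:star:to:triple:star} applies whenever $G[C\cup C']$ does not contain the bipartite complement of $3K_{1,2b+1}$. When such a bipartite complement \emph{does} appear, the paper shows this forces $T \cup T'$ to cover all of $K$ and then, crucially, that the \emph{entire graph} $G$ is bipartite with every pair of vertices in the same bipartition class having a common neighbor. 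That makes the whole instance fall to Corollary~\ref{cor:bip:com:neigh} in one shot — no pairwise-to-global assembly, no decomposition functor at all. The theorem's actual proof does not invoke Theorem~\ref{thm:iso:easy:when:prime:graphs:easy}; this is the key structural insight your proposal is missing.
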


\begin{proof} Without loss of generality we assume~$b \geq 3$.
If one of the input graphs to a given instance of the graph isomorphism problem is $\forbindplain{\HabcGraphFOUR{1}{0}{b}{0}, K_3}$, then we can solve isomorphism using Lemma~\ref{lem:triangle:free:H10b0}.

We thus discuss the structure of a~$\forbindplain{\HabcGraphFOUR{1}{0}{b}{1}, K_3}$ graph~$G$ that contains an induced subgraph~$K$ isomorphic to~$\HabcGraphFOUR{1}{0}{b}{0}$. 
Note that every vertex of~$G$ is adjacent to at least one vertex of~$K$, i.e.,~$K$ is dominating. For~$i\in \{0,\ldots,4\}$ we define the sets~$N_i$ 
to be set of vertices of distance~$i$ from the leaf in~$K$ that is adjacent to a vertex of degree 2. By individualizing the vertices in~$K$ and recoloring the vertices of~$G$ according to adjacency towards~$K$ we obtain a coloring of the graph into independent sets.

Let~$C$ and~$C'$ be two color classes which contain adjacent vertices. Let~$T$ be the set of neighbors in~$K$ of a vertex (and thus any vertex) in~$C$ and let~$T'$ be the set of neighbors in~$K$ of a vertex in~$C'$. This implies that~$T$ and~$T'$ form disjoint non-empty independent sets of~$K$. 
We want to achieve that the graph induced by two color classes~$C$ and~$C'$ has bounded color valence when refined with the naive vertex refinement.
Note that the subgraph induced by~$C$ and~$C'$ does not contain~$K_{1,2b+1} \dunion K_2\dunion K_1$ such that the center of~$K_{1,2b+1}$ and the isolated vertex are in different bipartition classes: Otherwise, by adding an appropriate vertex of~$K$, we could construct the forbidden subgraph~$\HabcGraphFOUR{1}{0}{b}{1}$. Thus, if the graph induced by the two colors does not contain the bipartite of complement~$3 K_{1,2b+1}$ with centers in the same bipartition class, we can, by Lemma~\ref{lem:double:star:to:triple:star}, individualize a set of bounded size such that the refinement yields bounded color valence. Since the bipartite complement of~$3 K_{1,2b+1}$ in particular contains the graph~$\HabcGraphFOUR{1}{0}{b}{0}$, the graph induced by~$C$ and~$C'$ can only contain the bipartite complement of~$3 K_{1,2b+1}$ if~$T \cup T'$ contains every vertex of~$K$. 

In the following we thus assume that one of the sets~$T$ and~$T'$ is~$N_0\cup N_2\cup N_4$ while the other one is~$N_1\cup N_3$. We also assume that the graph induced by~$C$ and~$C'$ contains the bipartite of complement~$3 K_{1,2b+1}$ with centers in the same bipartition class. Let~$J$ be an isomorphic copy of this graph. Without loss of generality we assume that the centers are in the color class~$C$.

We claim that~$G$ is a bipartite graph such that each pair of vertices in one of the bipartition classes has a common neighbor. To show this, we first show that for each vertex~$v$ not in~$C$, for two of the three stars in~$J$,~$v$ is adjacent to more than half the vertices of color~$C$ or to more than half the vertices of color~$C'$. Thus let~$v$ be a vertex that is not of color~$C$, its color being~$C''$ say. Let~$T_v$ be the vertices in~$K$ to which~$v$ is adjacent. Recall  that~$T$ is a maximal independent set of~$K$. Thus, since~$T_v$ is different from~$T$, there is a vertex~$x$ in~$K$ such that~$x$ is adjacent to the vertices in~$C$ but neither adjacent to the vertices in~$C'$ nor the vertices in~$C''$. This implies, due to forbidden subgraph~$\HabcGraphFOUR{1}{0}{b}{1}$, that~$v$ is adjacent to at least two centers of~$J$ or for two of the stars in~$J$,~$v$ is adjacent to more than half of the vertices in~$C'$.
Since~$T_v$ is non-empty, there is no edge between~$v$ and~$C$ or there is no edge between~$v$ and~$C'$.

Let~$A$ be the set of vertices that have a neighbor in~$C$ and let~$B$ be the other vertices. We claim that this is a bipartition of~$G$. Indeed, every vertex that has a neighbor in~$C$ is adjacent to two of the tree centers. Thus all vertices in~$A$ have a common neighbor and~$A$ is an independent set. All vertices in~$B$ which are not in~$C$ are adjacent to more than half of the vertices of color~$C'$ in two of the stars. It remains to show that vertices in~$C$ have a common neighbor with all vertices in~$B$. However, each vertex in~$B$ is adjacent to a vertex in~$T$ which in turn is adjacent to every vertex in~$C$. Thus vertices in~$B$ have a common neighbor and~$B$ is an independent set.

This proves that~$G$ is bipartite such that each pair of vertices in one of the bipartition classes has a common neighbor. Isomorphism of such graphs can be solved using Corollary~\ref{cor:bip:com:neigh}.
\end{proof}

\section{Specific classes of bounded clique number}\label{sec:spec:bd:clique}

In this section we show that isomorphism of $\forbindplain{\HabcGraph{1}{b}{0}, K_s}$ graphs can be solved in polynomial time (see Figure~\ref{fig:subvidided:star:H1b0}). To do so, we start with a structural lemma concerning such graphs.

\begin{lemma}\label{lem:cannnot:connect:double:claws}
Let~$G$ be an~$\forbindONEplain{\HabcGraph{1}{b}{0}}$ graph and let~$H$ be an induced subgraph of~$G$ isomorphic to~$K_{1,2b-1}$. Let~$M$ be a connected induced subgraph of~$G$ such that~$M$ and~$H$ are disjoint and non-adjacent.
No vertex of~$G$ simultaneously has a neighbor in~$H$ and in~$M$ and a non-neighbor in~$M$.

\end{lemma}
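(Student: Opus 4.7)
The plan is to assume for contradiction that some $v \in V(G)$ has a neighbor $h \in H$, a neighbor $m \in M$, and a non-neighbor $m' \in M$, and then to construct an induced copy of $\HabcGraph{1}{b}{0}$ in $G$. First, using connectivity of $M$, I walk along a path from $m$ to $m'$ inside $M$ and pick the first sign change to obtain an adjacent pair $m_1 m_2 \in E(G[M])$ with $v m_1 \in E(G)$ and $v m_2 \notin E(G)$. Since $H$ and $M$ are disjoint and have no edges between them, and $v$ has neighbors on both sides, $v$ itself lies outside $H \cup M$.

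Let $h_0$ be the center of the star $H \cong K_{1,2b-1}$ and $L$ the set of its $2b-1$ leaves. The argument splits on the number $k := |N(v) \cap L|$. If $k \geq b$, I use $v$ itself as the center of the forbidden subgraph, picking $b$ leaves $\ell_1, \dots, \ell_b \in N(v) \cap L$ to play the role of the short leaves, $m_1$ as the subdivision vertex, and $m_2$ as the distance-$2$ leaf. All required non-edges are given for free: the pairs $\ell_i m_j$ by non-adjacency of $H$ and $M$, the set $\{\ell_i\}$ is independent because $H$ is a star, and $v m_2$ is a non-edge by construction.

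If $k \leq b-1$, I instead place the center of the forbidden copy at $h_0$. The arithmetic bounds $(2b-1)-(b-1) = b$ and, when one distinguished leaf must be excluded, $(2b-2)-(b-2) = b$, guarantee that $b$ leaves $\ell_1, \dots, \ell_b$ of $H$ non-adjacent to $v$ can always be selected. In the subcase $v \sim h_0$ I take the subdivided arm $(h_0, v, m_1)$ with $m_1$ as the distance-$2$ leaf. In the subcase $v \not\sim h_0$, the neighbor $h$ is necessarily a leaf $\ell$ of $H$, and I take the subdivided arm $(h_0, \ell, v)$ with $v$ itself as the distance-$2$ leaf, choosing the $\ell_i$ from $L \setminus \{\ell\}$. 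The only adjacency check specific to each subcase is between the distance-$2$ leaf and the $\ell_i$: this is settled by $H$--$M$ non-adjacency in the first subcase and by the choice $\ell_i \notin N(v)$ in the second. The main (mild) obstacle is just lining up the counting so that enough unclaimed leaves remain in each subcase; once that is done, every edge and non-edge required for an induced $\HabcGraph{1}{b}{0}$ matches a hypothesis of the lemma, producing the contradiction.
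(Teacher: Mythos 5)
Your proposal is correct and follows essentially the same route as the paper's proof: establish $v\notin V(H)\cup V(M)$, locate an adjacent pair $m_1,m_2$ in $M$ with $v\sim m_1$ and $v\not\sim m_2$ via a path in $M$, then case-split on whether $v$ is adjacent to at least $b$ leaves of $H$ (use $v$ as the center) or fewer (use the center $h_0$ of $H$, with a sub-split on $v\sim h_0$ versus $v\not\sim h_0$). The only cosmetic quirk is the aside about ``$(2b-2)-(b-2)=b$''; since the distinguished leaf $\ell$ is adjacent to $v$, it is automatically excluded from the set of non-neighbors, so the single bound $(2b-1)-(b-1)=b$ already suffices.
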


\begin{proof}
Suppose otherwise that~$v$ has a neighbor in~$H$ and~$M$ and a non-neighbor in~$M$. Since~$M$ is connected,~$v$ has a neighbor~$x$ in~$M$ and a non-neighbor~$y$ in~$M$ such that~$x$ and~$y$ are adjacent. If~$v$ is adjacent to at least~$b$ leaf vertices of~$H$ then~$\{y,x,v\}$ together with these~$b$ leaves induces the forbidden subgraph. Otherwise~$v$ is non-adjacent to~$b$ leaves of~$H$. If~$v$ is adjacent to the center~$h$ of~$H$ then the set~$\{x,v,h\}$ together with~$b$ leaves non-adjacent to~$v$ induce a forbidden subgraph. If however~$v$ is non-adjacent to the center, then by assumption there is a leaf~$\ell$ to which~$v$ is adjacent. In this case~$v,\ell,h$ and~$b$ leaves non-adjacent to~$v$ induce a forbidden subgraph.
\end{proof}

\begin{theorem}\label{thm:poly:for:H1b0:Ks}
Graph isomorphism for colored $\forbindplain{\HabcGraph{1}{b}{0}, K_s}$ graphs can be solved in polynomial time.
\end{theorem}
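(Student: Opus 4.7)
The plan is to invoke Theorem~\ref{thm:iso:easy:when:prime:graphs:easy} with the classical colored modular decomposition as decomposition functor, which is a simple, polynomial-time computable functor with a reversible replacement operator (maintaining one representative per external adjacency type suffices). This reduces the task to isomorphism testing of colored prime graphs in $\forbindplain{\HabcGraph{1}{b}{0}, K_s}$.

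For a colored prime~$G$ in the class I would distinguish two cases. If $G$ contains no induced $K_{1, 2b-1}$, then since $G$ is also~$K_s$-free, the Ramsey number $R(2b-1, s-1)$ bounds the maximum degree of~$G$, and Luks' bounded-degree algorithm~\cite{Luks:1982} solves isomorphism in polynomial time. Otherwise $G$ contains an induced $K_{1, 2b-1}$, call it~$H$. Setting $A = V(H)$, $B = N(A) \setminus A$, and $R = V(G) \setminus (A \cup B)$, I would observe that every connected component~$M$ of $G[R]$ is a colored module of~$G$: vertices of~$A$ are non-adjacent to~$R$; vertices in components of~$R$ distinct from~$M$ are non-adjacent to~$M$; and by Lemma~\ref{lem:cannnot:connect:double:claws}, every $v \in B$ is adjacent to all of~$M$ or to none of~$M$, since $v$ has a neighbor in~$H$. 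Primality of~$G$ forces $|M| = 1$, so~$R$ is an independent set in~$G$.

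At this point I would test isomorphism of such prime graphs by iterating over all pairs $(H_1, H_2)$ of induced $K_{1, 2b-1}$'s in the two input graphs (of which there are only polynomially many), individualizing the vertex sets $A_1$ and $A_2$, jointly refining the colorings by naive vertex refinement, and then invoking Theorem~\ref{cor:iso:of:bd:gen:col:in:poly:time} after establishing that the refined graphs have bounded generalized color valence. The outer iteration aggregates into a polynomial-time algorithm since for each candidate pair the subsequent test is polynomial.

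The hard part will be the bounded generalized color valence claim. I plan to argue that a vertex~$v$ having both at least~$k$ neighbors and at least~$k$ non-neighbors in a large color class $C' \subseteq B \cup R$, for a suitable threshold $k = k(b, s)$, yields a copy of the forbidden graph $\HabcGraph{1}{b}{0}$: take~$v$ as the degree-$(b{+}1)$ center, pick~$b$ pendants from among its neighbors in~$C'$ (made mutually non-adjacent using $K_s$-freeness together with the fact that $R$ is independent and that~$B$-classes split by pattern to~$A$ are tightly controlled), one further neighbor as the subdivision vertex~$y$, and a non-neighbor of~$v$ in~$C'$ as the leaf~$z$ at distance~$2$. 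The delicate step is locating the triple $(v, y, z)$ so that all required non-adjacencies hold; this calls for a case analysis depending on which vertex of~$A$ witnesses $v \in B$ (the center or a leaf of~$H$) and on whether $C'$ lies in~$B$ or in~$R$, using $K_s$-freeness to rule out unwanted edges among the witnesses.
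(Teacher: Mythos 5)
Your use of Theorem~\ref{thm:iso:easy:when:prime:graphs:easy} with the classical decomposition is exactly the paper's framework, and your observation via Lemma~\ref{lem:cannnot:connect:double:claws} that components of $R = V(G)\setminus(A\cup B)$ are modules, hence that primality forces $R$ to be independent, is correct. However, at this point you miss the easy payoff: if $R$ is independent, then $G$ cannot contain $K_{1,2b-1}\dunion K_{1,2b-1}$ at all, because a second star disjoint from and non-adjacent to $H$ would have to live entirely inside $R$, which has no edges. From there, Corollary~\ref{cor:poly:double:star:Kt} (isomorphism of $\forbindplain{K_{1,2b-1}\dunion K_{1,2b-1}, K_s}$ graphs is polynomial-time solvable) finishes the prime case immediately. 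This is precisely the paper's argument, phrased the other way round: it shows that any graph in the class containing the double star has a non-trivial module (since the minimal modules containing the two stars are disjoint, again by Lemma~\ref{lem:cannnot:connect:double:claws}), so prime graphs are double-star-free and Corollary~\ref{cor:poly:double:star:Kt} applies.

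Instead you set out to re-derive bounded generalized color valence from scratch after individualizing $A$, and this part is genuinely incomplete and likely cannot be finished in the form proposed. Individualizing only the $2b$ vertices of $A$ does not give you enough control: the color classes inside $B$ (split by adjacency pattern to $A$) can still be large, and the $B$-to-$B$ adjacencies are not directly constrained by Lemma~\ref{lem:cannnot:connect:double:claws}. The paper's own proof of Theorem~\ref{thm:K1s:K1s:individ:non:bipartite:case}, which is what makes Corollary~\ref{cor:poly:double:star:Kt} work, needs a recursive choice of a bounded individualization set $S'$ (iterating a Ramsey argument on the clique number) before one can argue about color valence; a single individualization of $A$ is not a substitute. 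Moreover, the concrete construction you sketch of a copy of $\HabcGraph{1}{b}{0}$ from a high-valence vertex never establishes why the extra neighbor $y$ should be adjacent to the distance-two leaf $z$ and non-adjacent to the $b$ pendants, and this is where the argument stalls. In short: your structural observation is sound and equivalent to the paper's, but the right move after it is to invoke the already-available Corollary~\ref{cor:poly:double:star:Kt} rather than to reprove bounded generalized color valence ad hoc.
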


\begin{proof}
Let~$G_1$ and~$G_2$ be~$\forbindplain{\HabcGraph{1}{b}{0}, K_s}$ graphs. 
Recall that by Corollary~\ref{cor:poly:double:star:Kt} we can solve isomorphism of~$G_1$ and~$G_2$ in polynomial time if one of the graphs is~$\forbindplain{K_{1,2b-1} \dunion K_{1,2b-1}}$.

We now argue that for graphs that do contain~$K_{1,2b-1} \dunion K_{1,2b-1}$, we can use the standard modular decomposition functor for uncolored graphs to decompose the graphs. That is, if the graphs are not connected, we take the connected components as the modules. Analogously, if their complements are not connected we take the connected components of the complements as modules. 
In any other case, we decompose the graph into the maximal modules. These partition the vertices and can be computed in polynomial time (see~\cite{DBLP:journals/csr/HabibP10}). We show that if a~$\forbindONEplain{\HabcGraph{1}{b}{0}}$ graph contains an induced double star~$\forbind{K_{1,2b-1} \dunion K_{1,2b-1}}$ then there is a non-trivial module. For this it suffices to show that the minimal modules containing each~$K_{1,2b-1}$ are disjoint. Call the stars~$H_1$ and~$H_2$. Suppose the minimal module containing~$H_1$ contains a vertex from~$H_2$. The minimal module containing~$H_1$ is obtained by starting with~$S_0 = V(H_1)$ and repeatedly increasing this set by adding vertices. More precisely~$S_{i+1} =  S_{i} \cup \{v\}$ for some vertex~$v$ that has a neighbor and a non-neighbor in~$S_i$. Note that by induction all~$S_i$ are connected. By our assumptions, in some step a vertex must be added which is not in~$V(H_2)$ but has a neighbor in~$H_2$. Let~$i$ be the first step in which this happens and suppose~$u$ is the vertex that is added. Then~$u$ has a neighbor and a non-neighbor in the set~$S_i$ which induces a connected graph and a neighbor in~$H_2$, which is a contradiction to Lemma~\ref{lem:cannnot:connect:double:claws}.

Since we can solve isomorphism of colored prime graphs in polynomial time, this concludes the proof using  Theorem~\ref{thm:iso:easy:when:prime:graphs:easy} with the classical decomposition functor.
\end{proof}
}

\section{Comprehensiveness of the case distinction}\label{sec:comprehensive}

In this section we argue that the theorems developed throughout this paper together with the theorems  from~\cite{DBLP:conf/wg/KratschS12}, with the exception of finitely many cases, resolve the complexity of graph isomorphism for~$\forbindplain{H_1,H_2}$ graphs. In fact, said theorems also either provide a polynomial time algorithm or a reduction from the general isomorphism problem to the respective classes.

\begin{proof}[Proof of Theorem \ref{thm:main:theorem}]

Let~$\forbind{H_1,H_2}$ be the graph class for which we want to determine the complexity. By the results in~\cite{DBLP:conf/wg/KratschS12}, we may assume that one of the graphs,~$H_2$ say, is a complete graph. Since graph isomorphism for~$\forbindONEplain{K_2}$ graphs is polynomial-time solvable, we further assume that~$H_2$ has at least 3 vertices. By~\cite[Lemma 2]{DBLP:conf/wg/KratschS12} we may consequently assume that~$H_1$ is a forest of subdivided stars, since the problem is graph isomorphism complete otherwise.

If~$H_1$ contains 3 non-trivial components, then~$H_1$ contains~$3K_2$ and the problem is isomorphism complete~\cite[Lemma 5]{DBLP:conf/wg/KratschS12}.
In the following we can thus assume that~$H_1$ has at most~2 non-trivial components.
If~$H_1$ does not contain a vertex of degree 3 then if~$H_1$ is sufficiently large, it is either of the form~$P_i \dunion I_t$ with~$i\leq 3$
or it contains the graph~$2K_2 \dunion I_2$. In the former case graph isomorphism is polynomial-time solvable by Theorem~\ref{thm:K3:H10b1:poly:time}. In the latter case the isomorphism problem is graph isomorphism complete~\cite[Lemma 5]{DBLP:conf/wg/KratschS12}.

\emph{(The case $H_2 = K_3$).} Suppose~$H_2$ is the graph~$K_3$. Since~$H_2$ is fixed, we can assume that~$H_1$ is sufficiently large and we can thus assume by the observation above that~$H_1$ contains a vertex of degree 3.
If~$H_1$ does not contain a~$P_4$ then~$H_1$ is a union of at most two stars plus isolated vertices. If there is at most one star, the problem is polynomial-time solvable by  Theorem~\ref{thm:K3:H10b1:poly:time} (or by~\cite[Theorem 4]{DBLP:conf/wg/KratschS12}). Assuming there are two stars, if there is more than one isolated vertex then~$H_1$ contains~$2 K_2 \dunion 2K_1$ and the problem is isomorphism complete by~\cite[Lemma 5]{DBLP:conf/wg/KratschS12}. If neither of the stars is only a single edge then if there exists an isolated vertex in~$H_1$ the problem is isomorphism complete by Theorem~\ref{thm:bipart:2P3:K1:free:iso:compl} and if there is no isolated vertex in~$H_1$ it is polynomial-time solvable by Corollary~\ref{cor:poly:double:star:Kt}.
Finally, if one of the stars is only an edge then by Theorem~\ref{thm:K3:H10b1:poly:time} the problem is in polynomial-time solvable.

If~$H_1$ contains a~$P_4$ and there are two non-adjacent vertices not in the same connected component as the~$P_4$ then~$H_1$ contains~$P_4\dunion 2K_1$ and the problem is graph isomorphism complete~\cite[Lemma 5]{DBLP:conf/wg/KratschS12}. Assuming this is not the case implies that the vertex of degree 3 is in the same component as the~$P_4$. Since isomorphism of~$(2K_2 \dunion 2K_1)$-free triangle-free graphs is isomorphism complete~\cite[Lemma 5]{DBLP:conf/wg/KratschS12}, by assuming that~$H_1$ is sufficiently large, we may further assume there is at most one additional vertex not in the connected component of the~$P_4$. If there is only one degree~1 vertex non-adjacent to the vertex~$h$ of degree at least 3, 
and this vertex has distance~$2$ from~$h$, then~$H_1$ is an induced subgraph of~$\HabcGraphFOUR{1}{0}{b}{1}$ for some positive integer~$b$ and the problem is in polynomial-time solvable by Theorem~\ref{thm:K3:H10b1:poly:time}.
If~$H_1$ contains two leaves of distance at least 3 from the center, then if~$H_1$ is sufficiently large it also contains~$2K_2 \dunion 2K_1$. If~$H_1$ contains a leaf of distance at least 4 then~$H_1$ contains~$P_4\dunion 2K_2$. In both cases the problem is graph isomorphism complete by~\cite[Lemma 5]{DBLP:conf/wg/KratschS12}.

\emph{(The case $H_2 = K_n$ with~$n>3$).} Suppose now~$H_2$ is the graph~$K_n$ for some~$n\geq 4$.
Suppose first,~$H_1$ contains two non-trivial components. If it contains an isolated vertex, then in contains~$2K_2 \dunion K_1$ and the problem is isomorphism complete by Theorem~\ref{thm:K4:and:various:free:iso:compl}. If one of the components contains~$P_4$ and the graph is not connected, then the graph contains~$P_4\dunion K_1$ and the problem is isomorphism complete by~\cite[Theorem 3]{DBLP:conf/wg/KratschS12}. Otherwise the two components form a double star and the problem is polynomial-time solvable by Corollary~\ref{cor:poly:double:star:Kt}.

Thus we may assume now that there is only one non-trivial component. If there is no~$P_4$ in~$H_1$, the problem is solvable by~\cite[Theorem 4]{DBLP:conf/wg/KratschS12}. Otherwise, by~\cite[Theorem 3]{DBLP:conf/wg/KratschS12}, we can assume that~$H_1$ is connected.
If~$H_1$ is isomorphic to~$P_5$, the problem is solvable by Theorem~\ref{cor:P5:Kn:iso:poly:time}.
If~$H_1$ is isomorphic to~$P_6$ the problem is isomorphism complete by Theorem~\ref{thm:K4:and:various:free:iso:compl}. We can thus assume that~$H_1$ contains a vertex of degree at least 3, which we call the center. If there is only one leaf not adjacent to the center and this leaf has distance 2 from the center, then the problem is polynomial-time solvable by Theorem~\ref{thm:poly:for:H1b0:Ks}. If there are two leaves not adjacent to the center then~$H_1$ contains~$P_4\dunion K_1$ and the problem is isomorphism complete by~\cite[Theorem 3]{DBLP:conf/wg/KratschS12}.

If  there is a leaf of distance at least 3 from the center and the center has degree 4 then the problem is graph isomorphism complete by Theorem~\ref{thm:K4:and:various:free:iso:compl}. We can thus assume that the center has degree 3. 
Under these conditions, if~$H_2 = K_4$ and~$H_1$ is sufficiently large, then~$H_1$ contains~$P_6$ and the problem is isomorphism complete by Theorem~\ref{thm:K4:and:various:free:iso:compl}.
In the remaining cases~$H_2 = K_n$ with~$n\geq 5$ and~$H_1$ has a leaf of distance at least~$2$ from the center. In this case the problem is isomorphism complete by Theorem~\ref{thm:H1020:K5:free:iso:compl}. 
\end{proof}

\section{Conclusion}\label{sec:conclusion}

There is an intricate relationship between the boundedness of the clique width on a graph class and polynomial-time solvability of the isomorphism problem. 
While there are classes of unbounded clique width the isomorphism problem of which is solvable in polynomial time (for example the graphs of bounded degree or the graphs characterized by a forbidden star and a forbidden clique (see~\cite{DBLP:conf/wg/KratschS12})), there are no classes known to be isomorphism-complete and known to have bounded clique width. It is conceivable, but open, that isomorphism on graphs of bounded clique-width is solvable in polynomial time. In fact, it seems that typically an isomorphism reduction to a graph class~$\mathcal{C}$ yields a proof for the unboundedness of the clique width. 
It is easy to see that every class of graphs encoded by a simple path encoding has unbounded clique width. The reason is that if~$G'$ is the graph produced from a graph~$G$ by the reduction detailed in the proof of Theorem~\ref{thm:simple:path:enc} then~$G$ can be obtained from~$G'$ by a finite application of edge complementations between two sets, followed by an unbounded number  of  local complementations and vertex removals. This shows that the clique width of~$G$ is bounded by a function of the clique width of~$G'$. Thus, if~$G$ has large clique width then~$G'$ has large clique width.
\begin{corollary}
The graphs encoded by a simple path encoding have unbounded clique width.
\end{corollary}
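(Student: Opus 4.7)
The plan is to derive the corollary from the reduction in the proof of Theorem~\ref{thm:simple:path:enc}. Fix a simple path encoding $(J,L,L_N)$ with distinguished path $(p_1,\ldots,p_t)$, and let $G \mapsto G'$ be the reduction described there. I will argue that $G$ can be recovered from $G'$ by a sequence of rank-width--preserving (up to a bounded factor) operations, so that $\mathrm{cw}(G)$ is bounded by a function of $\mathrm{cw}(G')$. Combined with the fact that the class~$\mathcal{C}$ used in the proof of Theorem~\ref{thm:simple:path:enc} has unbounded clique-width (for instance because attaching a set of six universal vertices to any graph lands inside~$\mathcal{C}$ and preserves unboundedness of clique-width), this forces the class of encoded graphs to have unbounded clique-width.

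The recovery of $G$ from $G'$ proceeds in two stages. First, recall that $E(G') = E_1 \triangle (E_2\cup E_3 \cup E_4)$, where $E_1$ is determined solely by the label classes $V_i := \phi^{-1}(p_i)$. Therefore, performing a bipartite complementation between $V_i$ and $V_j$ for each unordered pair with $L(p_i,p_j) = A$ (and an internal complementation within each $V_i$ with $L(p_i,p_i)=A$) cancels $E_1$ and leaves the graph $(V(G'), E_2 \cup E_3 \cup E_4)$. Since $|J|$ is a constant, this is a bounded number $r = r(J,L)$ of (bipartite) complementations, each of which increases clique-width by at most a constant factor (and, equivalently, rank-width by at most an additive constant). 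Thus after this stage we have a graph $H$ with $\mathrm{cw}(H) \leq c_1 \cdot \mathrm{cw}(G')$ for some constant $c_1 = c_1(J,L)$.

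In the second stage the graph $H$ is a disjoint union of paths: each edge $\{v,v'\} \in E(G)$ has been replaced by a path through subdivision vertices and the edge-vertex $e = \{v,v'\}$, while each original vertex $v \in V(G)$ appears as the common endpoint of the paths arising from the edges incident to $v$. We recover $G$ from $H$ by smoothing every internal vertex of every subdivided edge. A single smoothing of a degree-2 vertex $w$ (with distinct neighbors $a,b$) can be realized as a local complementation at $w$ (which toggles the edge $ab$) followed by deletion of $w$: since $a,b$ are non-adjacent in $H$ (the paths do not create shortcuts), this replaces the length-two path $a$--$w$--$b$ by the edge $ab$. Local complementation preserves rank-width exactly and vertex deletion cannot increase it~\cite{seress2003permutation}\ifhasappendix\else\fi, so this stage does not increase rank-width at all. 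Converting back from rank-width to clique-width using the standard inequality $\mathrm{cw} \leq 2^{\mathrm{rw}+1}-1$ gives $\mathrm{cw}(G) \leq f(\mathrm{cw}(G'))$ for some function $f$ depending only on the encoding.

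The main technical point to verify carefully is the second stage: that $H$ really is a disjoint union of paths whose internal vertices have degree exactly $2$, and that no two internal subdivision vertices are joined by a shortcut (which would disrupt the smoothing argument). This follows from the definition of $E_2\cup E_3\cup E_4$ in the reduction: each edge of $G$ contributes a pair of disjoint chains meeting only at the shared vertex $e$, and distinct original vertices of $G$ lie in different chains. Given this, the recovery sequence above shows $\mathrm{cw}(G) \leq f(\mathrm{cw}(G'))$, so unboundedness of clique-width on the preimage class transfers to the encoded class, completing the proof.
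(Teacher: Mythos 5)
Your proof follows the paper's own argument essentially verbatim: recover $G$ from $G'$ by a bounded number of bipartite complementations between the label classes (cancelling $E_1$), then smooth the subdivision vertices by local complementations and vertex deletions, all of which change rank-width (and hence clique-width) only by a bounded function, so that unbounded clique-width of the source class $\mathcal{C}$ transfers to the class of encoded graphs. The one blemish is the citation of \cite{seress2003permutation} for the invariance of rank-width under local complementation and vertex deletion --- that reference is a book on permutation group algorithms, whereas these facts are due to Oum's theory of rank-width and vertex-minors --- but otherwise the approach and conclusion match the paper.
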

%Thus one should be able to extract from the isomorphism completeness proof of~ a proof that this class has unbounded clique width. 
This implies that many classes~$\forbind{H_1,H_2}$ have unbounded clique width. In particular the class~$\forbind{P_4 \cup K_1,K_4}$ has unbounded clique width, which
%This class 
is %still 
an open case in the list of~\cite{DBLP:journals/corr/DabrowskiP14a}. 
%More strongly, it should be the case that every class of graphs encoded by a simple path encoding (see Section~\ref{sec:reductions}) has unbounded clique width. 
Conversely, when a graph class is shown to be polynomial-time solvable, often this is due to some structural insight that actually amounts to showing that the clique width of the class is bounded. On another note it appears that the solution of the generalized color valence problem presented in this paper, more precisely a generalization of the technique, is a first step towards showing polynomial-time solvability of graphs of bounded clique width. More concretely, the generalization applies to graphs with fixed rank decomposition.

As mentioned in the introduction, for various other computational problems, classification results have been considered for classes characterized by two forbidden graphs. For these problems, there had been extensive prior work and  algorithmic techniques were already available. For the isomorphism problem, such techniques were lacking and the intention of this paper is to develop them. The fact that for new classes the complexity of isomorphism can be determined using these techniques, while other methods seem to fail, shows that these techniques provide something conceptually different. 

However, concerning a list of open cases that still remain for the classes characterized by two forbidden subgraphs one has to analyze precisely to which classes the techniques can be applied. For example, Brandst{\"a}dt and Kratsch~\cite{DBLP:journals/dam/BrandstadtK05} show that cycles of length~5 in~$\forbindplain{P_5,\overline{P_4\cup K_1)}}$ graphs are disjoint. By coloring vertices inside a 5-cycle depending on whether they have a neighbor outside the cycle, we can construct a decomposition functor. Furthermore, there is a description of the relevant prime graphs in~\cite{DBLP:journals/dam/BrandstadtK05} which allows to us apply the modular decomposition technique. This is just an example and a comprehensive description of the various cases that can actually be handled with the techniques %is beyond the scope of this conference paper and thus distilling a list of open cases 
remains as future work.

% \clearpage
% \newpage
\bibliographystyle{abbrv}
\bibliography{main}

\ifhasappendix
\newpage
\begin{appendix}
\noindent{\huge{\textbf{Appendix}}}

\medskip 

The appendix contains proofs of various lemmas and theorems from the main text. It also contains lemmas required for these proofs and the proofs of these lemmas. Moreover it shows how the techniques presented in the paper are applied and how they can be used to resolve the complexity of all but finitely many classes characterized by two forbidden induced subgraphs proving Theorem~\ref{thm:main:theorem}.

%At the end there is also a separate list of references that only appear in the appendix.

\printproofs
\end{appendix}
\bibliographystyleapp{abbrv}
\bibliographyapp{main}
  \tableofcontents

\fi

\end{document}